\definecolor{deepred}{rgb}{0.5,0,0}
\definecolor{deepblue}{rgb}{0,0,0.5}
\definecolor{deepgreen}{rgb}{0,0.5,0}
\newcommand{\Int}{\mathbb{Z}}
\newcommand{\Rat}{\mathbb{Q}}
\newcommand{\Real}{\mathbb{R}}
\newcommand{\dimm}{\text{dim}}
\newcommand{\Cin}[1]{\text{C}^\infty(#1)}
\newcommand{\Tan}{\text{T}}
\newcommand{\Cot }{\text{T}^*}
\newcommand{\Sec}[1]{\Gamma(#1)}
\newcommand{\Der}{\text{D}}
\newcommand{\Jet}{\text{J}}
\newcommand{\Proj}{\text{pr}}
\newcommand{\Id}{\text{id}}
\newcommand{\Hom}{\text{Hom}}
\newcommand{\Dr}[1]{\text{Der}(#1)}
\newcommand{\Ker}[1]{\text{ker}(#1)}
\newcommand{\Grph}[1]{\text{grph}(#1)}
\newcommand{\LGrph}[1]{\text{Lgrph}(#1)}
\newcommand{\Diff}{\text{Diff}}
\newcommand{\Obs}[1]{\text{Obs}(#1)}
\newcommand{\Dyn}[1]{\text{Dyn}(#1)}
\newcommand{\Lie}{\textsf{Lie}}
\newcommand{\Conf}{\mathcal{Q}}
\newcommand{\Man}{\mathcal{M}}
\newcommand{\Ring}{\textsf{Ring}}
\newcommand{\Vect}{\textsf{Vect}}
\newcommand{\LVect}{\textsf{LVect}}
\newcommand{\Line}{\textsf{Line}}
\newcommand{\Symp}{\textsf{Symp}}
\newcommand{\Poiss}{\textsf{Poiss}}
\newcommand{\Acts}{\mathbin{\rotatebox[origin=c]{-90}{$\circlearrowright$}}}
\newcommand{\dtimes}{\mathbin{\rotatebox[origin=c]{90}{$\ltimes$}}}
\newcommand{\utimes}{\mathbin{\rotatebox[origin=c]{-90}{$\ltimes$}}}
\newtheorem{prop}{Proposition}
\numberwithin{prop}{subsection}
\begin{document}

\title{\vspace{-3.0cm}Measurand Spaces \\ and \\ Dimensioned Hamiltonian Mechanics}

\author{Carlos Zapata-Carratala}
\date{}

\maketitle

\sloppy

\begin{abstract}
    In this paper we introduce a generalization of Hamiltonian mechanics that replaces configuration spaces, conventionally regarded simply as smooth manifolds, with line bundles over smooth manifolds. Classical observables are then identified with the sections of these (generically non-trivial) line bundles. This generalization, mathematically articulated with theory of Jacobi manifolds, is motivated by a conceptual revision of the mathematical foundations of the notion of measurand and unit of measurement in practical science. We prove several technical results for the contact structures present on jet bundles in order to argue that our proposal does indeed successfully generalize Hamiltonian mechanics while incorporating a systematic treatment of physical dimension and units.
\end{abstract}

\tableofcontents

\newpage

\section{Introduction - The lack of a systematic treatment of physical dimension in geometric mechanics.} \label{Intro}

It is generally agreed among physicists that the foundations of classical mechanics, including the Hamiltonian and Lagrangian formalisms, are firmly rooted in the mathematical discipline known today as differential geometry. The development history of these two subjects is, in fact, inseparable, being, at a sufficiently early date, altogether indistinguishable. As a landmark publication that asserts this close connection we direct the reader to \cite{abraham1978foundations} and its many historical notes. Modern fields of mathematical enquiry such as Poisson geometry or the theory of Lie groups, have their origins in the early days of analytical mechanics. It is, then, perhaps not too surprising to find that most of the modern approaches to foundational questions in mechanics are formulated within the mathematical dialect of those disciplines. The present work will follow this tradition, as we will be drawing heavily from modern differential geometry, but we will also try to enforce the language of category theory in our exposition, with its common terminology and notation found for instance in \cite{leinster2014basic} or \cite{mac2013categories}, as the author deems this discipline the most apt to form broad and conceptually clear vistas of different areas of the mathematical sciences.\newline

The fundamental notion upon which geometric mechanics is built is that of \textbf{phase space}: the set of states of a physical system, generally taken to be a smooth manifold $P$, whose characteristics to be measured, the \textbf{observables}, are identified with real-valued functions, $\Obs{P}:=\Cin{P,\Real}$. More concretely, in \textbf{Hamiltonian mechanics}, the fundamental theoretical framework at the heart of the most widely-applied dynamical models of both classical and quantum physics, observables exhibit a Lie algebra structure compatible with the local structure of the manifold $P$ in such a way that $(\Obs{P},\cdot, \{,\})$ is a Poisson algebra, where $\cdot$ is the point-wise product of functions. We will only be considering the case of classical systems in our discussion, although many of the points we raise apply to quantum theories equally. A modern account of the unifying and diverging patterns between classical and quantum observables can be found in \cite{zalamea2016chasing} and references therein.\newline

Looking now to a relatively independent timeline of the history of science, we consider the development of the notion of physical dimension. Units of measurement have been in use essentially since the inception of mathematics thousands of years ago but the modern concept of \textbf{physical quantity}, as a mathematical entity carrying the numerical magnitude of a measurement and a unit or patron, was not formulated until the 19\textsuperscript{th} century. A good overview of the history of this subject, generally known as \textbf{dimensional analysis}, can be found in \cite{macagno1971historico}. As any high school student taking a first class in physics or chemistry will tell you, two different physical quantities must only be combined multiplicatively, not additively, to produce other physically meaningful quantities. For instance, given numerical values for mass $M$ and volume $V$, the numerical value $MV^{-1}$ may be identified as a density whereas the value $M + V$ has no physical meaning. Indeed, units of measurement in standard use today are of the form m/s\textsuperscript{2}, kg/m\textsuperscript{3}, etc. but not of the form kg+s, m+m\textsuperscript{2}, etc. A set of, more often than not, implicit directions and rules indicating the correct use of physical quantities seems to be common knowledge in the scientific community; however, the topic is rarely discussed beyond elementary undergraduate courses. This piece of, somewhat, scientific folklore is systematically accounted for by the \emph{Boureau International des Poids et Mesures} BIPM, which keeps the standards for the international system of units SI and defines the experimental common practice and terminology in their International Vocabulary of Metrology \cite{bimp2012metrology}.\newline

It is common for theoreticians to develop mathematical models for science taking physical quantities to be real numbers (or approximations thereof), leaving any dimensional considerations to the ulterior applied use of their theories. This is particularly evident in the case of the aforementioned phase space formalism, where all the observables of a physical system are collectively considered as part of the same set of functions. Working scientists and engineers are all too familiar with how painstakingly systematic one must be with the correct use of units so that theoretical models are of any practical use at all, and thus simply considering observables as real-valued functions seems to be an oversimplification of the formal structure of physical quantities as employed in real science.\newline

This lack of a systematic treatment of units of measurement has been noted by many authors. Coming from a background of applications in engineering, in \cite{hart2012multidimensional} G.W. Hart introduces the linear algebra and calculus of \textbf{dimensioned numbers}, the algebraic objects capturing the multiplicativity-but-non-additivity of physical quantities, in order to provide a formal framework for the standard techniques of dimensional analysis, such as Buckingham's $\pi$-theorem \cite{buckingham1915principle} or the linear algebra of vectors with components of different physical dimension \cite{taylor2008matrix}. On the mathematical side we find  the \textbf{dimensional categories} of J. Baez and J. Dolan \cite{dolan2009doctrines}, which elegantly capture the algebraic properties of physical quantities in categorical form, and the more explicit articulation of systems of units of measurement via \textbf{semi-vector spaces} of R. Vitolo \emph{et al.} \cite{janyvska2007semi}. These references provide rigorous accounts of the formal structure of units of measurement but they don't give much physical insight on why such mathematical structure should be used in the first place.\newline

The author then identifies \underline{the problematic that motivates the present work}: it is desirable to find a formulation of Hamiltonian mechanics that leads to a definition of observable which canonically exhibits the structure of dimensioned numbers, so no \emph{ad hoc} considerations are needed for the treatment of physical dimension. Such a generalization should, however, retain all the distinctive features that makes conventional Hamiltonian mechanics such a powerful foundation for modern physical theories.\newline

The author proposes \underline{a resolution of this problematic} in a twofold way: firstly, via a conceptual revision of the basic notions of metrology from which a mathematical formalization of \textbf{measurand} will ensue; and secondly, via a slight generalization of the Hamiltonian phase space formalism, where observables are sections of some line bundle, that will link measurands and classical mechanics in a natural way. Recent advances in the field of \textbf{Jacobi geometry}, in particular the works of L. Vitagliano \emph{et al.} championing the line bundle point of view for Jacobi manifolds \cite{vitagliano2014deformations}, hint that a line bundle version of the Hamiltonian formalism would capture the essential physical notions of interest in mechanics: time-evolution, conservation laws, constraints, symmetry, reduction, relations, etc. In this paper we develop all the necessary formalism to make this claim precise. We will identify the key features of conventional Hamiltonian mechanics as a functor between the category of configuration spaces and the category of phase spaces and we will show that our generalization exhibits a very similar categorical structure that, furthermore, accounts for the introduction of dimensioned observables. We are thus compelled to call this formulation \textbf{dimensioned Hamiltonian mechanics}.\newline

Jacobi geometry has seen a rise in popularity in recent years, both as a mathematical subject -  for the most part of the 20\textsuperscript{th} century, contact (Jacobi) geometry has been the neglected odd-dimensional sister of symplectic (Poisson) geometry - and for its applications to mathematical physics and other sciences. Just a few examples of recent developments are: the integrability of Jacobi manifolds to contact groupoids \cite{crainic2015jacobi}, the generalization of Dirac geometry for Jacobi manifolds \cite{vitagliano2015dirac} \cite{schnitzer2019normal}, a dissipative version of Liouville's theorem in contact manifolds \cite{bravetti2015liouville}, the identification of a contact structure in thermodynamics \cite{mrugala1991contact} \cite{grmela2014contact} or even applications to neuroscience \cite{petitot2017elements}. Most prominently for geometric mechanics, the work of M. de Le\'on \emph{et al.} \cite{deLeon2011methods} \cite{deLeon2017cosymplectic} \cite{valcazar2018contact} has shown how contact geometry is the natural framework for the dynamical formulation of mechanical systems subject to time-dependent forces and dissipative effects. The author believes that the introduction of the dimensioned Hamiltonian formalism presents yet another avenue for the application of Jacobi geometry to physics, this time tackling the issue of a systematic mathematical treatment of physical dimension and units in mechanics.\newline 

Contents are organized as follows: Section \ref{MathPrel} contains a broad summary of definitions and results of the mathematical structures that will be used throughout the paper, for these we assume the reader is familiar with differential geometry and the basic language of category theory. In \ref{DimNum} the algebraic structure of physical quantities is formalized as typed fields and \ref{CatLine} contains a review of the category of lines.  In \ref{CatLineB} and \ref{JacGeo} we present an original approach to line bundles and Jacobi structures as the ``unit-free'' analogues of manifolds and Poisson structures. Section \ref{ModMes} contains a detailed discussion of the metrological foundations of the notion of measurand and its connection to the category of lines and typed fields. In section \ref{HamMec} we review the fundamentals of Hamiltonian mechanics and identify the canonical formalism with a functor (forming the cotangent bundle) from the category of configuration spaces (smooth manifolds) to the category of phase spaces (symplectic manifolds). Section \ref{DimHamM} contains our proposed generalization of dimensioned configuration spaces as line bundles and dimensioned phase spaces as contact manifolds. In \ref{CanHamJ} we prove several technical results to support the claim that taking the jet bundle of a line bundle gives the functor for canonical dimensioned Hamiltonian mechanics. Lastly, a few closing remarks about open questions and future lines of research can be found in section \ref{Conc}.

\newpage

\section{Mathematical Preliminaries} \label{MathPrel}

\subsection{Typed Fields and Dimensioned Numbers} \label{DimNum}

The definitions in this section expand on the notion of typed family of fields introduced in section 1.2.1 of \cite{hart2012multidimensional}, that reference also contains a detailed discussion on the many nuances about the adequacy of such a mathematical definition for practical science and engineering. Let us begin with an explicit example that may help illustrate how the abstract definitions below indeed capture the intended properties of physical quantities. Consider the case of the thermodynamic theory of gases. Empirically, one determines that there are four parameters that will uniquely characterise the state of a gas, its pressure $P$, volume $V$, amount of matter $N$ and temperature $T$, and that they can all be measured independently by comparison to some patron unit, for example an atmosphere atm, a litre L, a mol and a Kelvin K. The gas under study will then have a state specified by $(P=p \text{atm},V=v\text{L},N=n\text{mol},T=t\text{K})$. Once assuming the ideal gas law, for instance, one can derive other physical quantities such as energy $U=PVN^{-1}$ or entropy $E=PVN^{-1}T^{-1}$ carrying units atm$\cdot$L$/$mol and atm$\cdot$L$/$mol$/$K, respectively. In general, any physical quantity will be of the form $Q=qP^aV^bN^cT^d$ for $q\in\mathbb{R}$ and $a,b,c,d\in\mathbb{Z}$. The product of two physical quantities $Q$ and $Q'$ will be
\begin{equation*}
    Q\cdot Q'= qq'P^{a+a'}V^{b+b'}N^{c+c'}T^{d+d'}.
\end{equation*}

If one tries to isolate the minimal axioms that are necessary to capture the structure of physical quantities as introduced in the example above, one arrives at the following definition: we call a \textbf{typed field} a set $M$ together with two surjective maps $m:M\to \mathbb{F}$ and $t:M\to G$, where $\mathbb{F}$ is a field and $G$ is a group, and two binary operations $\cdot:M\times M\to M$ and $+_g:t^{-1}(g)\to t^{-1}(g)$ for all $g\in G$ such that
\begin{enumerate}
    \item[i)] $t:(M,\cdot)\to G$ is a homomorphism of binary algebras,
    \item[ii)] $m:(t^{-1}(g), +_g)\to (\mathbb{F}$,+) is an isomorphism of abelian groups,
    \item[iii)] and the following compatibility condition holds for all $a,a'\in M$:
    \begin{equation*}
        m(a\cdot_M a')=m(a)\cdot_\mathbb{F}m(a').
    \end{equation*}
\end{enumerate}
Elements of $M$ are called \textbf{typed numbers}, the map $m$ is called the \textbf{magnitude map} and we say that a typed number $a\in M$ has magnitude $m(a)\in\mathbb{F}$, the map $t$ is called the \textbf{type map} and we say that a typed number $a\in M$ is of type $t(a)\in G$. We may refer to $M$ as a $\mathbb{F}$-field with $G$ types. Given any field $\mathbb{F}$ and group $G$ we can form the \textbf{standard typed field} simply by taking the Cartesian product $M:= \mathbb{F}\times G$, setting $m:=$pr$_1$, $t:=$pr$_2$ and defining the following operations:
\begin{enumerate}
    \item[i)] $(f,g)\cdot (f',g') := (ff',gg')$ using the field and group multiplication operations,
    \item[ii)] $(f,g) +_g (f',g) := (f+f',g)$ using the field addition.
\end{enumerate}

If we define \textbf{typed field morphism} as a map between typed fields $\psi:M\to N$ that induces group and field homomorphisms between the relevant sets in the obvious way, it follows from the definition above that any $\mathbb{F}$-field with $G$ types is, in fact, isomorphic to the standard $\mathbb{F}$-field with $G$ types $\mathbb{F}\times G$. This makes precise the intuition that a typed field corresponds to a collection of isomorphic fields indexed by elements in $G$. Note that there is a unique multiplicative identity, given by $1:=(1_{\mathbb{F}},1_G)$, that acts as the common multiplicative identity for the fields of all types, however there are as many additive identities as there are types, $0_g:=(0_{\mathbb{F}},g)$, $g\in G$.\newline

A $\mathbb{F}$-field with $G$ types is called \textbf{abelian} when the multiplication operation is commutative, equivalently, when $G$ is abelian, since we assume field multiplication to be commutative. The abelian typed fields whose abelian group is a module over a ring $R$ are of particular importance in this paper. In this case, when $M$ is a $\mathbb{F}$-field with $G$ types and $G$ is a $R$-module, we say that $M$ is a $R$-dimensioned $\mathbb{F}$-field with $G$ types. When the relevant module and field structures are understood, $M$ is simply called a \textbf{$R$-dimensioned field}. In particular, when $R=\mathbb{Z}$ and $G=\mathbb{Z}^k$ with the natural module structure, the typed field $M$ is called a \textbf{$k$-dimensioned field}. More generally, the $R$-module $G$ of a dimensioned $\mathbb{F}$-field $M$ may be finitely generated, in which case one can find a basis $\{g_1,\dots,g_k\}\subset G$, making $M$ typed field isomorphic to a $k$-dimensioned field with dimensions in $R$; such a choice of basis is reflected in the typed field naturally as $\{(1_\mathbb{F},g_1),\dots,(1_\mathbb{F},g_k)\}\subset M$ and we call these a choice of \textbf{base types}.\newline

Let us now return to the example of the thermodynamic gas, where a physical quantity was given by $Q=qP^aV^bN^cT^d$ for $q\in\mathbb{R}$ and $a,b,c,d\in\mathbb{Z}$. It is clear that the physical quantities characterizing a thermodynamic gas form a $\mathbb{R}$-field typed by $\mathbb{Z}^4$, what we also called a $4$-dimensioned field of real numbers. This is of course a typed field with a finitely-generated module over $\mathbb{Z}$, so we may choose a different set of base types. Physically, this simply corresponds to a different choice of four parameters to fully characterise the state of the gas.

\subsection{The Category of Lines} \label{CatLine}

In this section we present the essential properties of a class of real vector spaces that will be of fundamental relevance in our discussion about measurands in Section \ref{ModMes}. To this end, we will identify a subcategory of $\Vect$ and study its structure in some detail. Our terminology and notation for monoidal categories and canonical constructions will follow that of the standard literature, see e.g. \cite{leinster2014basic} or \cite{mac2013categories}.\newline

Let us define \textbf{the category of lines}, $\Line$. Objects are vector spaces over the field of real numbers $\mathbb{R}$ of dimension $1$, a useful way to think of these in the context of the present work is as sets of numbers without the choice of a unit. An object $L\in\Line$ will be appropriately called a \textbf{line}. A morphism in this category $b\in\Hom_{\Line}(L,L')$, usually simply denoted by $b:L\to L'$, is an invertible (equivalently non-zero) linear map, with composition simply the composition of maps. If we think of $L$ and $L'$ as numbers without a choice of a unit, a morphism $b$ between them can be thought of as a unit-free conversion factor, for this reason we will often refer to a morphism of lines as a \textbf{factor}. We consider the field of real numbers $\mathbb{R}$, trivially a line when regarded as real a vector space, as a singled out object in $\Line$. We mention at this point that the category $\Line$ is a prime example of what J. Baez and J. Dolan call dimensional categories in \cite{dolan2009doctrines}.\newline 

Note that all the morphisms in this category are, by definition, isomorphisms, thus making $\Line$ into a groupoid; however, one shouldn't think of all the objects in the category as being equivalent. As we shall see below, there are times when one finds factors between lines (invertible morphisms by definition) in a canonical way, that is, without making any further choices beyond the information that specifies the lines, these will be called \textbf{canonical factors}. In this manner, if given two lines $L,L'\in\Line$ there exists a canonical factor $b_{LL'}:L\to L'$ then we do regard $L$ and $L'$ as being equivalent. Our notation will reflect this fact by displaying a representative line in place of the equivalence class of lines for which there exist canonical factors mapping them to the representative and by using the equals sign ``='' between any of them. \newline

It is a simple linear algebra fact that any two lines $L,L'\in\Line$ satisfy
\begin{equation*}
    \dimm (L \oplus L')= \dimm L + \dimm L' = 2 > 1, \qquad \dimm L^* = \dimm L = 1, \qquad \dimm (L \otimes L') = 1.
\end{equation*}
Then, we note that the direct sum $\oplus$, which acts as a categorical product and coproduct in $\Vect$, is no longer defined in $\Line$, however, it is straightforward to check that $(\Line, \otimes, \mathbb{R})$ forms a symmetric monoidal category and that $*:\Line \to \Line$ is a duality contravariant autofunctor. The usual linear-algebraic definition of the tensor product via the universal property in $\Line$ implies, in particular, the facts listed below, which always hold in any symmetric monoidal category but that we mention here for convenience: 
\begin{itemize}
    \item For any two lines $L_1,L_2\in\Line$ there exists a canonical factor swapping their tensor product
    \begin{equation*}
        b_{12}:L_1\otimes L_2 \to L_2\otimes L_1.
    \end{equation*}
    \item The tensor unit $1_\otimes$ is precisely $\mathbb{R}$, this means that for any line $L\in\Line$ there is a canonical factor
    \begin{equation*}
        t_L:\Real\otimes L \to L.
    \end{equation*}
    \item For any three lines $L_1,L_2,L_3\in\Line$ there exists a canonical factor associating them
    \begin{equation*}
        a_{123}:L_1\otimes (L_2 \otimes L_3) \to (L_1\otimes L_2)\otimes L_3,
    \end{equation*}
    thus we can omit the brackets.
    \item More generally, given any collection of lines $L_1,\dots,L_k\in\Line$ and a permutation of $k$ elements $\sigma\in S_k$, there exists a canonical factor
    \begin{equation*}
        b_\sigma:L_1\otimes \cdots \otimes L_k \to L_{\sigma(1)}\otimes \cdots \otimes L_{\sigma(k)}.
    \end{equation*}
    \item Any two factors $b:L_1\to L_2$ and $b':L_1'\to L_2'$ have tensor product
    \begin{equation*}
        b\otimes b' :L_1\otimes L_1'\to L_2\otimes L_2'.
    \end{equation*}
    which is again a factor.
\end{itemize}
The autofunctor $*$ is in fact given by $\Real$-adjunction in the category of vector spaces $L^*:=\Hom_{\Vect_\Real}(L,\Real)$. It is then easy to show that for any lines $L,L'\in\Line$ there are canonical factors $d_L:(L^*)^*\to L$ and $e_{LL'}:(L\otimes L)^*\to L^*\otimes L'^*$. Similarly, one can show that $L^*\otimes L'$ is canonically isomorphic to $\Hom_{\Vect}(L,L')$ in the category of vector spaces and is indeed of dimension 1. In particular, $\Hom_{\Vect}(L,L)\in\Line$ has a distinguished non-zero element, the identity id$_L$, thus we find that it is canonically isomorphic to $\Real$ as lines. Therefore, for any line $L\in\Line$ we find a canonical factor
\begin{equation*}
    p_L:L^*\otimes L\to \Real.
\end{equation*}
This last result, under the intuition of lines as numbers without a choice of unit, allows us to reinterpret the singled out line $\Real$ informally as the set of procedures common to all lines by which a number gives any other number in a linear way (preserving ratios). This interpretation somewhat justifies the following adjustment in terminology: we will refer to the tensor unit $\Real\in\Line$ as \textbf{the patron line} or simply \textbf{the patron}. This will be useful for our discussion in Section \ref{ModMes}, where the word ``unit'' will be reserved for a construction within $\Line$ that directly reflects the physical notion of unit of measurement. To make this intuition precise let $L\in\Line$ and define the map
\begin{align*}
\lambda: L\times L^{\times} & \to L^*\otimes L\\
(a,b) & \mapsto \lambda_{ab} \text{ such that } a=\lambda_{ab}(b),
\end{align*}
where $L^{\times}:= L\diagdown \{0\}$. And reciprocally also define $\rho:L^{\times}\times L\to L^*\otimes L$. The following proposition gives the desired intuition of $L$ being a a ``unit-free'' field of numbers and $\Real$ being the patron of ratios between ``unit-free'' numbers.
\begin{prop}[Ratio Maps]\label{RatMaps}
The maps $\lambda$ and $\rho$ are well-defined and for any $a,b,c\in L^{\times}$ we have the following identities for the maps $l:=p_L\circ \lambda$ and $r:=p_L\circ \rho$
\begin{equation*}\label{ratio1}
    l_{ab}\cdot l_{bc}\cdot l_{ca}=1 \qquad l_{ab}\cdot r_{ab}=1 \qquad r_{ab}\cdot r_{bc}\cdot r_{ca}=1.\tag{ratio1}
\end{equation*}
The maps $l$ and $r$ are called the \textbf{ratio maps}.
\end{prop}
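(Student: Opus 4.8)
The plan is to reduce everything to the elementary fact that a one-dimensional space carries only a single scalar's worth of endomorphisms, and that under the canonical identifications already established the maps $l$ and $r$ merely record reciprocal scalar ratios, after which the three identities become telescoping products.

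First I would settle well-definedness. Given $a\in L$ and $b\in L^{\times}$, since $\dimm L=1$ the nonzero vector $b$ is a basis of $L$, so there is exactly one linear endomorphism of $L$ sending $b\mapsto a$; invoking the canonical isomorphism $L^{*}\otimes L\cong\Hom_{\Vect}(L,L)$ recorded earlier, this endomorphism corresponds to a unique element $\lambda_{ab}\in L^{*}\otimes L$. Concretely one may take $\lambda_{ab}=\beta\otimes a$ for any $\beta\in L^{*}$ normalised by $\langle\beta,b\rangle=1$, and the resulting tensor is independent of the choice of $\beta$. The same argument with the roles interchanged — now requiring the first entry to be nonzero, which is exactly why the domain of $\rho$ is taken to be $L^{\times}\times L$ — produces the unique $\rho_{ab}$ characterised by $\rho_{ab}(a)=b$. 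This establishes that $\lambda$ and $\rho$ are well defined.

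Next I would make $l$ and $r$ explicit. The canonical factor $p_{L}$ is, under $L^{*}\otimes L\cong\Hom_{\Vect}(L,L)\cong\Real$, nothing but the contraction $\beta\otimes c\mapsto\langle\beta,c\rangle$, equivalently the trace, since it sends $\Id_{L}$ to $1$. Applying it to $\lambda_{ab}=\beta\otimes a$ with $\langle\beta,b\rangle=1$ gives $l_{ab}=\langle\beta,a\rangle$, i.e. the unique scalar $\mu_{ab}\in\Real$ with $a=\mu_{ab}\,b$. Symmetrically, $r_{ab}$ is the unique scalar $\nu_{ab}$ with $b=\nu_{ab}\,a$. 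In other words $l$ and $r$ are the two scalar ratio functions $a/b$ and $b/a$, genuinely well defined precisely because the relevant entries are nonzero.

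Finally the three identities fall out as telescoping products of scalars, using only $a,b,c\neq 0$. From $a=l_{ab}b$, $b=l_{bc}c$, $c=l_{ca}a$ one obtains $a=l_{ab}l_{bc}l_{ca}\,a$ and hence $l_{ab}l_{bc}l_{ca}=1$; from $a=l_{ab}b$ together with $b=r_{ab}a$ one obtains $a=l_{ab}r_{ab}\,a$, so $l_{ab}r_{ab}=1$; and the identity for $r$ is the same computation with the substitutions reversed, or equivalently follows from the previous two once one notes $r_{ab}=1/l_{ab}$. I expect no genuine obstacle here: the only place demanding care is the bookkeeping of the canonical identifications $L^{*}\otimes L\cong\Hom_{\Vect}(L,L)\cong\Real$, and the verification that the reciprocal definition of $\rho$ yields the inverse scalar rather than the same one — which is exactly what gives the middle identity $l_{ab}r_{ab}=1$ its content.
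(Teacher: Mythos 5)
Your proposal is correct and takes essentially the same route as the paper: both reduce to the fact that $L$ and $\Hom_{\Vect}(L,L)$ are one-dimensional, so that $l_{ab}$ is the unique proportionality scalar with $a=l_{ab}\cdot b$, and then obtain the identities by telescoping. The only cosmetic difference is that the paper performs the 2-out-of-3 cancellation at the level of composed isomorphisms ($\lambda_{bc}\circ\lambda_{ca}=\lambda_{ba}$, using $\lambda_{ab}=\lambda_{ba}^{-1}$) before applying $p_L$, whereas you pass to scalars first; your explicit description of $\lambda_{ab}=\beta\otimes a$ with $\langle\beta,b\rangle=1$ and of $p_L$ as the trace is a fine, slightly more concrete way of saying the same thing.
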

\begin{proof}
Since both $L$ and $\Hom_\Vect(L,L)$ are 1-dimensional real vector spaces, it follows that there is a unique linear isomorphism mapping any two non-zero elements in $L$ and the zero element is only mapped by the zero linear map. This can be seen more explicitly with the use of the canonical factor $p_L:\Hom(L,L)\to \Real$, which allows us to define the map $l:=p_L\circ \lambda$ that when applied to two non-zero elements $a,b\in L$ gives the unique non-zero real number $l_{ab}$ acting as proportionality factor: $a=l_{ab}\cdot b$. To prove the 2-out-of-3 identity consider $a=\lambda_{ab}(b)$, $b=\lambda_{bc}(c)$ and $c=\lambda_{ca}(a)$. Combining the three equations we find $\lambda_{bc}\circ\lambda_{ca}(a)=\lambda_{ba}(a)$. Noting that $\lambda_{ab}=\lambda^{-1}_{ba}$ gives the desired result. Similarly for $\rho$ and the reciprocal identity follows by construction.
\end{proof}
Consider now two lines $L_1,L_2\in\Line$ with their corresponding maps $l^1$, $r^1$ and $l^2$, $r^2$. It follows from the definitions above that for any factor $B:L_1\to L_2$ we have $l^1_{ab}=l^2_{B(a)B(b)}$ and $r^1_{ab}=r^2_{B(a)B(b)}$. More generally, we can define functions on the space of factors between the lines $L_1$ and $L_2$ from pairs of line elements:
\begin{align*}\label{ratio2}
    a_1\in L_1, b_2\in L_2^{\times} &\quad \mapsto \quad l^{12}_{a_1b_2}(B):=l^2_{B(a_1)b_2}\\
    b_1\in L_1^{\times}, a_2\in L_2 &\quad \mapsto \quad r^{12}_{b_1a_2}(B):=r^1_{b_1B^{-1}(a_2)}\tag{ratio2}
\end{align*}
for all $B:L_1 \to L_2$ a factor. It follows then by construction, that for any pair of non-zero line elements $b_1\in L_1^{\times}$ and $b_2\in L_2^{\times}$ the following identity holds
\begin{equation*}
    l^{12}_{b_1b_2}\cdot r^{12}_{b_1b_2} =1
\end{equation*}
as functions over the space of factors $\Hom_\Line(L_1,L_2)$. Whenever there is no room for confusion, we will employ the following abuse of fraction notation for non-zero elements:
\begin{equation*}
    l_{ab}=\frac{a}{b}=\frac{1}{r_{ab}} \qquad \qquad l^{12}_{a_1b_2}=\frac{a_1}{b_2}=\frac{1}{r^{12}_{b_1a_2}}
\end{equation*}
which notationally reflect the identities (\ref{ratio1}).\newline

If we write the tensor product of $n$ copies of a line and $m$ copies of its dual as $\otimes^m L^* \otimes^n L$, composing the canonical factors $t_L$, $b_\sigma$ and $p_L$ iteratively we obtain the following canonical factor:
\begin{equation*}
    a_{nm}:\otimes^m L^*  \otimes^n L \to \begin{cases} 
      \otimes^{|n-m|}L & n>m \\
      \Real & n=m \\
      \otimes^{|n-m|}L^* & n<m 
   \end{cases}
\end{equation*}
This result then encourages the introduction of the following notation:
\begin{equation*}
    \begin{cases} 
      L^n:=\otimes^nL & n>0 \\
      L^n:= \Real & n=0 \\
      L^n:=\otimes^nL^* & n<0 
   \end{cases}
\end{equation*}
which is such that given two integers $n,m\in\Int$ and any line $L\in\Line$ the following equations hold
\begin{equation*}
    (L^n)^* = L^{-n} \qquad L^{n}\otimes L^{m} = L^{n+m}.
\end{equation*}
Thus we see how one single line and its dual $L,L^*\in\Line$ generate an abelian group with the tensor product as group multiplication, the patron $\Real\in\Line$ as group identity and the duality autofunctor as inversion. We call this group the \textbf{potentil} or \textbf{power group} of $L\in\Line$ and denote it by $L^\otimes$. It trivially follows from the results above that $(\Real^\otimes,\otimes) \cong 0$ and for $L\neq \Real$,  $(L^\otimes,\otimes) \cong (\Int,+)$ as abelian groups.\newline

If we now consider a finite collection of lines $\{L_i\in\Line\}_{i=1}^k$ we can define its \textbf{potentil} or \textbf{power module} by taking all the possible tensor powers:
\begin{equation*}\label{potentil}
    (L_1\dots L_k)^\otimes := \bigcup_{n_1,\dots,n_k\in\Int} L_1^{n_1}\otimes \cdots \otimes L_k^{n_k}.\tag{potentil}
\end{equation*}
An iterative use of the canonical factors $b_\sigma$, $e_{LL'}$ and $p_L$ allows us to write the following equations
\begin{align*}
    (L_1^{n_1}\otimes \cdots \otimes L_k^{n_k})^* &=L_1^{-n_1}\otimes \cdots \otimes L_k^{-n_k}\\
    (L_1^{n_1}\otimes \cdots \otimes L_k^{n_k}) \otimes (L_1^{m_1}\otimes \cdots \otimes L_k^{m_k})  &= L_1^{n_1+m_1}\otimes \cdots \otimes L_k^{n_k+m_k}
\end{align*}
and thus we see that the potentil of a collection of lines forms an abelian group with the tensor product as multiplication, duality as inversion and the patron as group identity. The components of a potentil $(L_1\dots L_k)^\otimes$ are called the \textbf{base lines} and we have
\begin{equation*}
    (L_1\dots L_k)^\otimes \cong L_1^\otimes \oplus \dots \oplus L_k^\otimes
\end{equation*}
as abelian groups. Furthermore, the potentil $(L_1\dots L_k)^\otimes$ has the structure of a $\Int$-module with the action given by:
\begin{equation*}
    m\cdot (L_1^{n_1}\otimes \cdots \otimes L_k^{n_k}):=\otimes^m(L_1^{n_1})\otimes \cdots \otimes \otimes^m(L_k^{n_k})=L_1^{m\cdot n_1}\otimes \cdots \otimes L_k^{m\cdot n_k}.
\end{equation*}

The identification of the category $\Line$ allows for a natural generalization of the category of real vector spaces where one replaces the field of scalars with a line, thought of informally as a ``unit-free'' field of scalars. The category of \textbf{line vector spaces} or \textbf{L-vector spaces} is defined formally as the product category
\begin{equation*}
    \LVect:= \Vect \times \Line.
\end{equation*}
Our notation for objects in this category will be $V^L:=(V,L)$ with $V\in\Vect$, $L\in\Line$, and similarly for morphisms $\psi^b:V^L\to W^{L'}$ with $\psi\in\Hom_\Vect(V,W)$, $b\in\Hom_\Line(L,L')$. Objects $V^L$ will be called L-vector spaces and morphisms $\psi^b$ will be called linear factors. The tensor structures of both $\Vect$ and $\Line$ induce a natural \textbf{tensor product of L-vector spaces} given by
\begin{align*}
    V^L\otimes W^{L'}:&=(V\otimes W)^{L\otimes L'}\\
    \psi^b\otimes \phi^{b'}:&=(\psi \otimes \phi)^{b\otimes b'}\\
    1_{\otimes\LVect}&=\Real^\Real,
\end{align*}
making $\textsf{Lvect}$ into a symmetric monoidal category. We define a the \textbf{L-dual} of an L-vector space as
\begin{equation*}
    *(V^L):=(V^*\otimes L)^L.
\end{equation*}
We will often use the more compact notation $(V^*\otimes L)^L:=V^{*L}$. This construction is crucially idempotent
\begin{equation*}
    *(*(V^{L}))=(V^{**}\otimes L\otimes L^*)^L \cong (V^{**}\otimes \Real)^L \cong (V^{**})^L \cong V^L,
\end{equation*}
which justifies the chosen name of L-duality, in analogy with conventional duality of vector spaces. Given a linear factor $\psi^b:V^L\to W^{L'}$ we can define its L-dual as
\begin{align*}
\psi^{L'*L}: W^{*L'} & \to V^{*L}\\
\beta^{l'} & \mapsto \alpha^l
\end{align*}
with
\begin{equation*}
    \alpha = b^{-1}\circ \beta \circ \psi \qquad l=b^{-1}(l'),
\end{equation*}
where we regard L-dual vectors as linear maps $\alpha\in V^*\otimes L\cong \Hom_\Vect(V,L)$. This construction clearly makes $*$ into a duality contravariant autofunctor. The direct sum is only defined for L-vector spaces sharing the same line:
\begin{equation*}
    V^L\oplus W^L := (V\oplus W)^L.
\end{equation*}

Subobjects in this category will be identified with linear factors whose vector component is injective, in particular, given any vector subspace $i:U\hookrightarrow V$ we have an inclusion linear factor given by $i^{\Id_L}:U^L\to V^L$. Quotients can then be taken in the obvious way $V^L/U:=(V/U)^L$. There is a natural notion of \textbf{L-annihilator} of a subspace
\begin{equation*}
    U^{0L}:=\{\alpha\in V^{*L} | \quad \alpha(u)=0\in L, \forall u\in U\}\cong U^0\otimes L.
\end{equation*}
It is then a direct result from linear algebra on $\Vect$ that there exists a canonical linear factor isomorphism
\begin{equation*}
    a^{\Id_L}:U^{*L}\to V^{*L}/U^{0L}.
\end{equation*}

\subsection{The Category of Line Bundles}\label{CatLineB}

In this section we gather a few standard results about vector bundles and differential operators in order to define the category of line bundles and study its structure in some detail. For references on smooth vector bundles we direct the reader to any standard text on differential geometry, e.g. \cite{lee2009manifolds}, however the author finds \cite{nestruev2006smooth}, with its emphasis on the algebraic description of sections, particularly illuminating for the matters that concern the present paper.\newline

Let us denote the category of smooth manifolds with smooth maps as $\Man$. Let $\epsilon: E\to M$ be a vector bundle over a smooth manifold $M\in\Man$, throughout this paper we will use the compact notation $E_M$, or simply $E$ at times, to denote the bundle as an object in the category of vector bundles $E_M\in \Vect_\Man$ with bundle maps covering general smooth maps as morphisms. The bundle projection of any other vector bundles that are canonically constructed from the fibre-wise linear structure, such as duals or tensor powers, will be given the same name, e.g. $\epsilon: E^*\to M$ or $\epsilon: E\otimes E \to M$, unless the context demands otherwise. Let $\phi:N\to M$ be a smooth map, the pull-back bundle $\phi^*E_M$ will be also denoted by $E_N$. The $\Cin{M}$-module formed by the sections of $E$ will be denoted by $\Sec{E}$. Unlike the contravariant functor $\text{C}^\infty:\Man\to \Ring$, the association of sections $\Gamma:\Vect_\Man\to \textsf{RMod}$ is not a functor for general vector bundles.\newline

Let two vector bundles $A_M$ and $B_M$ over the same base $M$; we can identify a class of maps between the spaces of sections demanding compatibility with the module structures. These are called \textbf{differential operators of order at most} $k$, for $k$ some integer, and defined as follows:
\begin{equation*}
    \Diff_k(A,B):=\{\Delta : \Sec{A} \to \Sec{B} |\quad c_{f_0}\circ c_{f_1}\circ \dots c_{f_k}\circ (\Delta)=0 \quad \forall f_0,f_1,\dots,f_k\in\Cin{M}\}
\end{equation*}
where we have simplified the notation of nested commutators by using the maps $c_f(\Delta):=[\Delta,f]=\Delta(f\cdot )-f\Delta(\cdot)$. Observe that from this definition it follows immediately that $\Diff_k(A,B)$ is a $\Cin{M}$-module. Also clear from the definition is that differential operators of order $0$ correspond to $\Cin{M}$-linear maps and thus are identified with vector bundle morphisms covering the identity:
\begin{equation*}
    \Diff_0(A,B)\cong \Hom(A,B)\cong \Sec{A^*\otimes B}.
\end{equation*}
We thus explicitly see that differential operators of order $0$ are the sections of some vector bundle. In general, we have $\Diff_{k-1}(A,B)\subseteq \Diff_k(A,B)$, furthermore, given any differential operator order $k$ we can associate a differential operator of order $0$ that captures its behaviour at the highest order. This is the \textbf{$k$-symbol map}:
\begin{align*}
\sigma: \Diff_k (A,B) & \to \Sec{\odot^k\Tan M\otimes A^*\otimes B}\\
\Delta & \mapsto \sigma_\Delta | \sigma_\Delta(df_1,\dots,df_k):=c_{f_1}\circ \cdots \circ c_{f_k}(\Delta).
\end{align*}
The image of a given differential operator under this map is usually called its symbol. We thus write the following short exact sequence:
\begin{equation*}
\begin{tikzcd}
0 \arrow[r] & \Diff_{k-1}(A,B) \arrow[r ] & \Diff_{k}(A,B) \arrow[r, "\sigma"] & \Sec{\odot^k\Tan M\otimes A^*\otimes B}\arrow[r]  & 0
\end{tikzcd}
\end{equation*}
We will denote the space of differential operators from $A$ to itself as $\Diff_k(A,A):=\Diff_k(A)$ and the bundle of fibre-wise endomorphisms as End$(A):=A^*\otimes A$. Composition of differential operators gives $\bigcup_k^ \infty \Diff_k(A) \subseteq\text{Hom}_{\mathbb{R}}(\Sec{A})$ the structure of a filtered algebra since one can readily check:
\begin{align*}\label{diffops}
    \Diff_k(A) \circ \Diff_l(A) &\subseteq \Diff_{k+l}(A)\\
    [\Diff_k(A), \Diff_l(A)] &\subseteq \Diff_{k+l-1}(A).
\end{align*}
The \textbf{$k$-jet bundle} of a vector bundle $A_M$ is defined fibre-wise as the classes of local sections agreeing on derivatives of order up to $k$, we denote it by $\pi:\Jet^kA\to M$. The \textbf{$k$-jet prolongation} $j^k: \Sec{A}\to \Sec{\Jet^kA}$, which simply takes a section to the class of jets corresponding to its derivatives of order at most $k$. This map is sometimes called the \textbf{universal differential operator} since it realizes the isomorphism of $\Cin{M}$-modules:
\begin{equation*}
    \Diff_k(A,B)\cong \Sec{(\Jet^kA)^*\otimes B}
\end{equation*}
via the correspondence
\begin{equation*}
    \Hom(\Jet^k A,B)\ni\Phi \mapsto\Phi \circ j^k\in\Diff_k(A,B).
\end{equation*}
Now taking the $A^*$-adjoint of the symbol sequence above we obtain the \textbf{Spencer sequence}
\begin{equation*}
\begin{tikzcd}
0 \arrow[r] & \odot^k\Cot M\otimes A \arrow[r, "i"] & \Jet^kA \arrow[r, "\pi^k"]  & \Jet^{k-1}A \arrow[r]  & 0.
\end{tikzcd}
\end{equation*}

For a differential operator of degree at most $1$, $\Delta\in\Diff_1(A)$, the symbol is uniquely characterized by a Leibniz-like property that determines the interaction of the operator with the $\Cin{M}$-module structure:
\begin{equation*}
    \Delta(f\cdot a) =f\cdot \Delta (a) + \sigma_\Delta (df)(a)
\end{equation*}
here the symbol is simply a bundle map covering the identity $\sigma_\Delta : \Cot M\to \Hom(A,B)$. Note that the symbol sequence at order $1$ becomes
\begin{equation*}\label{diff1}
\begin{tikzcd}
0 \arrow[r] & \Sec{\text{End}(A)} \arrow[r] & \Diff_{1}(A) \arrow[r, "\sigma"] & \Sec{\Tan M\otimes \text{End}(A)}\arrow[r]  & 0,
\end{tikzcd}\tag{diff1}
\end{equation*}
and by adjoining $A^*$ we get the Spencer sequence
\begin{equation*}\label{spncr1}
\begin{tikzcd}
0 \arrow[r] & \Cot M\otimes A \arrow[r] & \Jet^1A \arrow[r]  & A \arrow[r]  & 0.
\end{tikzcd}\tag{spncr1}
\end{equation*}

The space of sections of the endomorphism bundle is clearly isomorphic to the space of vector bundle morphisms covering the identity as a $\Cin{M}$-module and it contains a submodule isomorphic to $\Cin{M}$ given by the elements of $\Sec{\text{End}(A)}$ corresponding to fibre-wise multiplication by functions. This special submodule of endomorphisms allows us to identify the subspace of differential operators that have symbol with endomorphism part lying in it. These are the \textbf{derivations} of the vector bundle $A$:
\begin{equation*}
    \Dr{A}:=\{D\in \Diff_1(A)|\quad \exists !X_D : D(f\cdot s)=f\cdot D(s) + X_D[f]\cdot s \quad \forall f\in \Cin{M},s\in\Sec{A}\}.
\end{equation*}
The space of derivations $\Dr{A}$ so defined can be easily shown to be a $\Cin{M}$-submodule of $\Diff_1(A)$ that is closed under the commutator. In fact, the derivations of a general vector bundle give a prime example of an integrable Lie algebroid. Note that derivations $\Dr{A}$ can be shown to be isomorphic to the sections of the \textbf{der bundle} of $A$:
\begin{equation*}
    \Der A:=\{a_x:\Sec{A}\to A_x|\quad a_x(f\cdot s)=f(x)a_x(s)+\delta(a_x)(f)(x)s(x)\quad \forall f\in \Cin{M},s\in\Sec{A}\}
\end{equation*}
where $\delta:\Der A\to \Tan M$ is a bundle map covering the identity called the \textbf{anchor} or the \textbf{local symbol}, defined from the symbol map $\sigma$. The \textbf{general linear groupoid} GL$(A)$ is defined as the space of all isomorphisms between fibres and one can show that its Lie algebroid $\mathfrak{gl}(A)$ is canonically isomorphic as a vector bundle to $\Der A$ and that $\Sec{\Der A}\cong \Dr{A}$ gives an isomorphism as Lie algebroids identifying the bracket of sections of $\Der A$ with the commutator of derivations. The der bundle is hence also referred to as the \textbf{derivation algebroid}. When the rank of $A_M$ is $1$, i.e. the fibres $A_x$ are 1-dimensional vector spaces, an elementary linear algebra argument shows that the fibre-wise endomorphisms are simply functions, in other words, the endomorphism bundle of a line bundle is trivial:
\begin{equation*}
    \text{rk}(A)=1 \quad \Rightarrow \quad \text{End}(A)=A^*\otimes A \cong \Real_M
\end{equation*}
where $\Real_M$ denotes the trivial line bundle over $M$. This implies, in turn, that all differential operators of degree at most $1$ are derivations:
\begin{equation*}
     \text{rk}(A)=1 \quad \Rightarrow \quad\Dr{A}=\Diff_1(A) \quad \Leftrightarrow \quad \Jet^1A=(\Der A)^*\otimes A.  
\end{equation*}

We define the \textbf{category of line bundles} $\Line_\Man$ as the subcategory of $\Vect_\Man$ whose objects are rank $1$ vector bundles $\lambda: L \to M$ and whose morphisms are regular, i.e. fibre-wise invertible, bundle morphisms covering general smooth maps
\begin{equation*}
\begin{tikzcd}
L_1 \arrow[r, "B"] \arrow[d, "\lambda_1"'] & L_2 \arrow[d, "\lambda_2"] \\
M_1 \arrow[r, "b"'] & M_2.
\end{tikzcd}
\end{equation*}
In the interest of brevity, we may refer to line bundles $L\in \Line_\Man$ as \textbf{lines} and regular line bundle morphisms $B\in\Hom_{\Line_\Man}(L_1,L_2)$ as \textbf{factors}. This nomenclature aligns with that of section \ref{CatLine} as we can indeed regard the category of vector lines $\Line$ as the subcategory of $\Line_\Man$ of line bundles over a single point. A factor covering a diffeomorphism, i.e. a line bundle isomorphism, is called a \textbf{diffeomorphic factor}. Similarly a factor covering an embedding or submersion is called an \textbf{embedding factor} or \textbf{submersion factor}, respectively. By fixing the base manifold $M$ we can use the tensor product and duality functor of $\Line$ to give fibre-wise analogues for general line bundles over $M$, denoted by $\Line_M$. This makes $(\Line_M,\otimes)$ into a symmetric monoidal category with tensor unit $\Real_M$. The duality assignment $L\in\Line_\Man\mapsto L^*\in\Line_\Man$ is well defined for objects but, in general, it fails to give an assignment of factors covering smooth (possibly non-invertible) maps, thus not being a functor.\newline

Recall that in section \ref{CatLine} we argued that objects of $\Line$, i.e. $1$-dimensional vector spaces, were to be thought of as ``unit-free'' fields of real numbers. Similarly, our definition of the category $\Line_\Man$ is motivated by an attempt to find a notion of ``unit-free'' manifold. Let us clarify this point. One thinks of a manifold $M$ as a smooth space with a ring of functions that has a unit, the constant function $1_M\in\Cin{M}$, with the assignment $\text{C}^\infty:\Man\to \Ring$ being a contravariant functor sending smooth maps $\varphi:M_1\to M_2$ to pull-backs of functions $\varphi^*:\Cin{M_2}\to \Cin{M_1}$. By taking an object in the category of line bundles $L_M\in\Line_\Man$ as our replacement for smooth space and its sections $\Sec{L_M}\in\textsf{RMod}$, $1$-dimensional projective $\Cin{M}$-modules, as our replacement for functions, we find $L_M$ to be our candidate for the notion of \textbf{unit-free smooth manifold}. We readily note that there is no choice of (canonical, global) unit section in $\Sec{L_M}$. In the reminder of this section we shall prove several results that will support the claim  that $L_M$ is indeed a good generalization of a smooth manifold without a canonical choice of unit. In particular, we will show: that there is a natural notion of pull-back of a factor, making the section assignment $\Gamma:\Line_\Man\to \textsf{RMod}$ into a contravariant functor; that there is a natural notion of subobject induced by submanifolds of the base manifold; that there is a categorical product of line bundles generalizing the Cartesian product of manifolds; and that the der and jet bundles generalize the notion of tangent and cotangent bundles.\newline

Let a $B:L_{M_1}\to L_{M_2}$ be a factor covering the smooth map $b:M_1\to M_2$, we can define the \textbf{factor pull-back} on sections as:
\begin{align*}
B^*: \Sec{L_{M_2}} & \to \Sec{L_{M_1}}\\
s_2 & \mapsto B^*s_2
\end{align*}
where
\begin{equation*}
    B^*s_2(x_1):=B^{-1}_{x_1}s_2(b(x_1))
\end{equation*}
for all $x_1\in M_1$, note that the fibre-wise invertibility of $B$ has critically been used for this to be a well-defined map of sections. Linearity of $B$ implies the following interaction of factor pull-backs with the module structures:
\begin{equation*}
    B^*(f\cdot s)=b^*f\cdot B^*s \quad \forall f\in\Cin{M_2}, s\in \Sec{L_{M_2}}.
\end{equation*}
Then, if we consider $\textsf{RMod}$, the $1$-dimensional projective modules over rings with module homomorphisms covering ring homomorphisms (satisfying the equation above), the assignment of sections becomes a contravariant functor
\begin{equation*}
    \Gamma:\Line_\Man \to \textsf{RMod}.
\end{equation*}

Since factors restrict to linear fibre isomorphisms, there is a 1:1 correspondence between monomorphisms in the category $\Line_\Man$ and embeddings of smooth manifolds in the category $\Man$. Indeed, given a submanifold $i:S\hookrightarrow M$ and a line bundle $\lambda:L\to M$, restriction to $S$ gives a factor
\begin{equation*}
\begin{tikzcd}
i^*L \arrow[r, "\iota"] \arrow[d] & L \arrow[d] \\
S \arrow[r,hook, "i"'] & M
\end{tikzcd}
\end{equation*}
where $\iota$ is an embedding factor. The pull-back bundle will often be denoted simply by $L_S$. We may give an algebraic characterization of submanifolds of the base space of a line bundle as follows: for an embedded submanifold $i:S\hookrightarrow M$ we define its \textbf{vanishing submodule} as
\begin{equation*}
    \Gamma_S:=\Ker{\iota^*}=\{s\in\Sec{L_M}|\quad s(x)=0\in L_x \quad \forall x\in S\}.
\end{equation*}
This is the unit-free analogue to the characterization of submanifolds with their vanishing ideals. In fact, the two notions are closely related since (depending on the embedding $i$, perhaps only locally) we have
\begin{equation*}
    \Gamma_S=I_S\cdot \Sec{L_M},
\end{equation*}
where $I_S$ is the multiplicative ideal of functions vanishing on the submanifold $I_S:=\Ker{i^*}$. This ideal gives a natural isomorphism of rings $\Cin{S}\cong \Cin{M}/I_S$ which, in turn, gives the natural isomorphism of ring-modules
\begin{equation*}
    \Sec{L_S}\cong \Sec{L}/\Gamma_S.
\end{equation*}

Let $G$ be a Lie group with lie algebra $\mathfrak{g}$ and a line bundle $\lambda:L\to M$, we say that \textbf{$G$ acts on $L$} and denote $G\Acts L$ when there is a smooth map $\Phi:G\times L\to L$ such that $\Phi_g:L\to L$ is a diffeomorphic factor for all $g\in G$ and the usual axioms of a group action are satisfied
\begin{equation*}
    \Phi_g\circ \Phi_h = \Phi_{gh}\qquad \Phi_{e}=\Id_L \qquad \forall g,h\in G.
\end{equation*}
We call the map $\Phi$ a \textbf{line bundle $G$-action}. It follows from construction that any such action induces a standard group action of $G$ on the base space $\phi:G\times M\to M$, which give the base smooth maps of the factors given by the line bundle action and is thus called the base action. The infinitesimal counterpart of a line bundle $G$-action is a morphism of Lie algebras
\begin{equation*}
    \Psi:\mathfrak{g}\to \Dr{L}.
\end{equation*}
Note that $G$ acts on both the domain and codomain of the infinitesimal action by the adjoint action and by push-forward respectively; the fact that $\Psi$ is defined as the infinitesimal counterpart of the action $\Phi$ manifest as $G$-equivariance in the following sense
\begin{equation*}
    \Psi\circ \text{Ad}_g = (\Phi_g)_*\circ \Psi \qquad \forall g\in G.
\end{equation*}
Denoting by $\psi:\mathfrak{g}\to \Sec{\Tan M}$ the infinitesimal counterpart of $\phi$, the infinitesimal line bundle action satisfies 
\begin{equation*}
    \Psi(\xi)[f\cdot s]=f\cdot \Psi(\xi)[s]+\psi(\xi)[f]\cdot s \qquad\forall f\in\Cin{M},s\in\Sec{L},\xi\in\mathfrak{g}.
\end{equation*}
The \textbf{orbits} of a line bundle $G$-action $\Phi$ can be simply defined as the images of all group elements acting on a single fibre, and thus they are naturally regarded as the line bundle restricted to the orbits of the base action $\phi$. In analogy with the case of smooth actions, we denote the set of orbits by $L/G$. Any notion defined for usual group actions on smooth manifolds extends to a corresponding notion for line bundle actions simply requiring the base action to satisfy the corresponding conditions. In particular, a \textbf{free and proper} line bundle $G$-action gives a well-defined submersion factor
\begin{equation*}
\begin{tikzcd}
L \arrow[r, "\zeta"] \arrow[d] & L/G \arrow[d] \\
M \arrow[r,two heads, "z"'] & M/G
\end{tikzcd}
\end{equation*}
where the line bundle structure on the set of orbits $L/G$ is induced from the fact that the orbit space of the base action $M/G$ is a smooth manifold and from the fact that all pairs of fibres over the same orbit are mapped isomorphically by some group element. Under these conditions, analogously to the ring isomorphism between functions on the orbit space and invariant functions $\Cin{M/G}\cong \Cin{M}^G$, we find the following isomorphism of modules
\begin{equation*}
    \Sec{L/G}\cong \Sec{L}^G:=\{s\in\Sec{L}| \quad \Phi_g^*s=s \quad \forall g\in G\}.
\end{equation*}

Let us now define the analogue of the Cartesian product in the category $\Line_\Man$. Consider two line bundles $\lambda_i:L_i\to M_i$, $i=1,2$, we will use $L_i$ and $L_{M_i}$ indistinctly. We begin by defining the set of all linear invertible maps between fibres:
\begin{equation*}
    M_1 \dtimes M_2:=\{B_{x_1x_2}:L_{x_1}\to L_{x_2}, \text{ linear isomorphism }, (x_1,x_2)\in M_1\times M_2\},
\end{equation*}
we call this set the \textbf{base product} of the line bundles. Let us denote by $p_i:M_1\dtimes M_2 \to M_i$ the obvious projections. This set is easily shown to be a smooth manifold by taking trivializations $L|_{U_i}\cong U_i\times \Real$, $i=1,2$ that give charts of the form $U_1\times U_2\times \Real^\times$ for the open neighbourhoods $(p_1\times p_2)^{-1}(U_1 \times U_2)\subseteq M_1\dtimes M_2$. In fact, the natural $\Real^\times$-action given by fibre-wise multiplication makes $M_1\dtimes M_2$ into a principal bundle
\begin{equation*}
    \begin{tikzcd}[column sep=0.1em]
        \Real^\times & \Acts & M_1\dtimes M_2  \arrow[d,"p_1\times p_2"]\\
         & & M_1\times M_2.
    \end{tikzcd}
\end{equation*}
We then define the \textbf{line product} of the line bundles as $L_1\utimes L_2 :=p_1^*L_1$, which is a line bundle over the basic product $\lambda_{12}:L_1\utimes L_2\to M_1\dtimes M_2$. Note that our notation for the base product $M_1 \dtimes M_2$ does not explicitly keep track of the line bundles $\lambda_i:L_i\to M_i$, in the interest of simplicity of notation, throughout this paper, care has been taken to always provide sufficient context for an unambiguous interpretation of this notation.\newline

Despite the apparent asymmetry of the definition, note that we can define the following factors
\begin{align*}
    P_1(B_{x_1x_2},l_{x_1})&:= l_1\in L_{x_1}\\
    P_2(B_{x_1x_2},l_{x_1})&:= B_{x_1x_2}(l_{x_1})\in L_{x_2}
\end{align*}
where $(B_{x_1x_2},l_{x_1})\in p_1^*L_1$, thus giving the following commutative diagram
\begin{equation*}\label{lproduct}
\begin{tikzcd}
L_1 \arrow[d, "\lambda_1"'] & L_1\utimes L_2 \arrow[l,"P_1"']\arrow[d, "\lambda_{12}"]\arrow[r,"P_2"] & L_2\arrow[d,"\lambda_2"] \\
M_1 & M_1 \dtimes M_2\arrow[l,"p_1"]\arrow[r,"p_2"'] & M_2
\end{tikzcd}\tag{lproduct}
\end{equation*}
where $P_1$ and $P_2$ are submersion factors. It is clear from this definition that taking the pull-back bundle $p_2^*L_2$ instead of $p_1^*L_1$ as our definition for line product will give the line bundle that we have denoted by $L_2\utimes L_1$. By construction of the base product $M_1\dtimes M_2$ we can construct the following smooth map:
\begin{align*}
c_{12}: M_1\dtimes M_2 & \to M_2\dtimes M_1\\
B_{x_1x_2} & \mapsto B_{x_1x_2}^{-1}
\end{align*}
which is clearly invertible, $c_{12}^{-1}=c_{21}$, and that induces a factor on the line products $C_{12}(B_{x_1x_2},l_{x_1})=(B_{x_1x_2}^{-1},B_{x_1x_2}(l_{x_1}))$. We have thus found a canonical factor covering a diffeomorphism
\begin{equation*}
\begin{tikzcd}
L_1\utimes L_2 \arrow[r,"C_{12}"]\arrow[d, "\lambda_{12}"] & L_2\utimes L_1\arrow[d, "\lambda_{21}"] \\
M_1 \dtimes M_2\arrow[r,"c_{12}"'] & M_2 \dtimes M_1
\end{tikzcd}
\end{equation*}
hence proving that the definition of line product is indeed symmetrical, i.e the two products are canonically isomorphic as line bundles $L_1\utimes L_2\cong L_2\utimes L_1$. The following proposition establishes the line product as the unit-free version of the Cartesian product.
\begin{prop}[Line Product]\label{LineProd}
The line product construction $\utimes:\Line_\Man \times \Line_\Man \to \Line_\Man$ is a categorical product in the category of line bundles $\Line_\Man$.
\end{prop}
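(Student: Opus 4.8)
The plan is to verify directly the universal property that characterizes a categorical product: I must exhibit, for the object $L_1\utimes L_2$ equipped with the projection factors $P_1$ and $P_2$ of diagram (\ref{lproduct}), a mediating factor through which any competing pair of factors into $L_1$ and $L_2$ factors uniquely. Concretely, given an arbitrary line bundle $\mu\colon K\to N$ together with factors $F_1\colon K\to L_1$ and $F_2\colon K\to L_2$ covering smooth maps $f_1\colon N\to M_1$ and $f_2\colon N\to M_2$, I will produce a unique factor $F\colon K\to L_1\utimes L_2$ satisfying $P_1\circ F=F_1$ and $P_2\circ F=F_2$.

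The construction exploits the fibre-wise invertibility that defines a factor. For each $y\in N$ the factors restrict to linear isomorphisms $F_{1,y}\colon K_y\to L_{f_1(y)}$ and $F_{2,y}\colon K_y\to L_{f_2(y)}$, so the composite $B_y:=F_{2,y}\circ F_{1,y}^{-1}$ is a linear isomorphism $L_{f_1(y)}\to L_{f_2(y)}$, i.e.\ precisely a point of the base product $M_1\dtimes M_2$ lying over $(f_1(y),f_2(y))$. This defines a base map $f\colon N\to M_1\dtimes M_2$, $y\mapsto B_y$, and I set $F(k):=(f(\mu(k)),F_1(k))\in p_1^*L_1=L_1\utimes L_2$. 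Testing against the definitions $P_1(B_{x_1x_2},l_{x_1})=l_{x_1}$ and $P_2(B_{x_1x_2},l_{x_1})=B_{x_1x_2}(l_{x_1})$ gives, on each fibre, $P_1\circ F=F_1$ and $P_2\circ F=B_{(\cdot)}\circ F_1=F_{2}\circ F_1^{-1}\circ F_1=F_2$, exactly as required; fibre-wise linearity and invertibility of $F$ follow from those of $F_1$.

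Uniqueness is forced by the same fibre-wise reasoning and simultaneously pins down the base map: if $F'$ is any factor covering $f'$ with $P_i\circ F'=F_i$, then writing $F'(k)=(f'(\mu(k)),\ell(k))$ the identity $P_1\circ F'=F_1$ forces $\ell=F_1$, while $P_2\circ F'=F_2$ together with surjectivity of $F_{1,y}$ onto $L_{f_1(y)}$ forces $f'(y)=F_{2,y}\circ F_{1,y}^{-1}=B_y$; hence $F'=F$. Thus both the factor and the map it covers are uniquely determined.

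The one genuinely non-formal point, and the step I expect to be the main obstacle, is checking that $f$ and $F$ are smooth rather than merely fibre-wise defined. I will settle this in local trivializations $L_1|_{U_1}\cong U_1\times\Real$, $L_2|_{U_2}\cong U_2\times\Real$ and $K|_V\cong V\times\Real$, in which $F_1$ and $F_2$ are represented by $(y,v)\mapsto(f_i(y),a_i(y)v)$ for smooth nowhere-vanishing $a_i\colon V\to\Real^{\times}$. Then $B_y$ is multiplication by $a_2(y)/a_1(y)$, so in the chart $U_1\times U_2\times\Real^{\times}$ on $M_1\dtimes M_2$ built from these trivializations the base map reads $f(y)=(f_1(y),f_2(y),a_2(y)/a_1(y))$, which is smooth since $a_1$ is nowhere zero; smoothness of $F$ follows identically. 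This reduces the entire analytic content to smoothness of products and quotients of the trivializing data, completing the verification that $(L_1\utimes L_2,P_1,P_2)$ satisfies the universal property of the product in $\Line_\Man$.
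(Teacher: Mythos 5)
Your proof is correct and takes essentially the same route as the paper's: you build the mediating factor by exactly the paper's fibre-wise formula, $k\mapsto\bigl(F_{2,\mu(k)}\circ F_{1,\mu(k)}^{-1},\,F_1(k)\bigr)\in p_1^*L_1=L_1\utimes L_2$, and verify commutativity against the projections $P_1$ and $P_2$. The only difference is that you make explicit the uniqueness argument (pinning down the base map via surjectivity of $F_{1,y}$) and the smoothness of the base map in local trivializations, two points the paper compresses into ``we readily check''; these additions are welcome but do not change the argument.
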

\begin{proof}
For $\utimes$ to be a categorical product in $\Line_\Man$ first we need the two canonical projection morphisms, these are given by the factors $P_1$ and $P_2$ defined above
\begin{equation*}
\begin{tikzcd}
L_1 & L_1\utimes L_2 \arrow[l,"P_1"']\arrow[r,"P_2"] & L_2.
\end{tikzcd}
\end{equation*}
Now we must check the universal property that, given two morphisms $B_i:L\to L_i$, $i=1,2$, we get a unique morphism $B:L\to L_1 \utimes L_2$ such that the following diagram commutes
\begin{equation*}
\begin{tikzcd}
 & L\arrow[dl,"B_1"'] \arrow[dr,"B_2"]\arrow[d,"B"] & \\
L_1 & L_1\utimes L_2 \arrow[l,"P_1"]\arrow[r,"P_2"'] & L_2.
\end{tikzcd}
\end{equation*}
We can indeed define the \textbf{line product of factors} as
\begin{align*}
B_1\utimes B_2: L & \to L_1\utimes L_2\\
l_x & \mapsto (B_2|_x\circ B_1|_{x}^{-1},B_1(l_x))
\end{align*}
noting that, by assumption, $B_2|_x\circ B_1|_{x}^{-1}:L_{b_1(x)}\to L_{b_2(x)}\in M_1\dtimes M_2$ since factors are fibre-wise invertible. We readily check that setting $B=B_1\utimes B_2$ makes the above diagram commutative, thus completing the proof.
\end{proof}
The construction of the line product as a pull-back bundle over the base product indicates that the sections $\Sec{L_1\utimes L_2}$ are spanned by the sections of each factor, $\Sec{L_1}$, $\Sec{L_2}$, and the functions on the base product. More precisely, we have the following isomorphisms of $\Cin{M_1\dtimes M_2}$-modules:
\begin{equation*}
    \Cin{M_1\dtimes M_2}\cdot P_1^*\Sec{L_1}\cong \Sec{L_1 \utimes L_2} \cong \Cin{M_1\dtimes M_2}\cdot P_2^*\Sec{L_2},
\end{equation*}
where for the second isomorphism we have used the canonical factor $C_{12}$ to establish the isomorphisms $\Cin{M_1\dtimes M_2}\cong \Cin{M_2\dtimes M_1}$ and $\Sec{L_1 \utimes L_2}\cong \Sec{L_2 \utimes L_1}$ as $\Real$-vector spaces. We can summarize this by writing the image of the line product commutative diagram (\ref{lproduct}) under the section functor
\begin{equation*}
\begin{tikzcd}[row sep=tiny]
\Sec{L_1} \arrow[r,"P_1^*"] & \Sec{L_1\utimes L_2}  & \Sec{L_2}\arrow[l,"P_2^*"'] \\
\bullet & \bullet & \bullet \\
\Cin{M_1}\arrow[r,"p_1^*"'] & \Cin{M_1 \dtimes M_2} & \Cin{M_2}\arrow[l,"p_2^*"]
\end{tikzcd}
\end{equation*}
where $\bullet$ denotes ring-module structure. We clearly see that pull-backs of functions on the Cartesian product of base manifolds $M_1\times M_2$ form a subring of $\Cin{M_1\dtimes M_2}$ but a quick computation introducing trivializations shows that differentials of these do not span the cotangent bundle of the base product everywhere. However, the proposition below shows that it is possible to identify a subspace of spanning functions on $M_1 \dtimes M_2$ defined from local sections of $L_1$ and $L_2$. Consider two trivializations $L|_{U_i}\cong U_i\times \Real$, $i=1,2$ so that the spaces of non-vanishing sections over the trivializing neighbourhoods, denoted by $\Sec{L^\times_i}$, are non-empty. Recall that the ratio maps introduced in section \ref{CatLine} allowed us to define functions over invertible maps between lines, we can use these fibre-wise together with the choice of two local sections to define the \textbf{ratio functions} on $M_1\dtimes M_2$ as follows: let the local sections $s_1\in\Sec{L_1}$ and $s_2\in\Sec{L_2^\times}$, then the function $\tfrac{s_1}{s_2}\in \Cin{M_1\dtimes M_2}$ is defined by
\begin{equation*}
    \frac{s_1}{s_2}(B_{x_1x_2}):=\frac{B_{x_1x_2}(s_1(x_1))}{s_2(x_2)}
\end{equation*}
where we have made use of the map $l^{12}$ in (\ref{ratio2}) in fraction notation. Similarly, using the map $r^{12}$ in (\ref{ratio2}) we can define $\tfrac{s_2}{s_1}\in \Cin{M_1\dtimes M_2}$ for some local sections $s_1\in\Sec{L_1^\times}$ and $s_2\in\Sec{L_2}$. If we consider two non-vanishing local sections $c_1\in\Sec{L_1^\times}$ and $c_2\in\Sec{L_2^\times}$, it is a direct consequence of (\ref{ratio1}) that the relation
\begin{equation*}
    \frac{c_1}{c_2}\cdot \frac{c_2}{c_1}=1.
\end{equation*}
holds on the open set $(p_1\times p_2)^{-1}(U_1 \times U_2)$ where $U_i\subseteq M_i$ are the neighbourhoods where the local sections are defined.
\begin{prop}[Spanning Functions of $M_1\dtimes M_2$]\label{SpanFuncProd}
Let $B_{x_1x_2}\in M_1\dtimes M_2$ any point in the base product of two line bundles $L_{M_1}$, $L_{M_2}$, consider trivializing neighbourhoods $x_i\in U_i\subseteq M_1$ so that $\Sec{L_i^\times}$ are non-empty; then the cotangent bundle at $B_{x_1x_2}$ is spanned by the differentials of all possible ratio functions, i.e.
\begin{equation*}
\normalfont
    \Cot(M_1\dtimes M_2)=d\frac{\Sec{L_1}}{\Sec{L_2^\times}}
\end{equation*}
when restricted to the open set $(p_1\times p_2)^{-1}(U_1 \times U_2)$.
\end{prop}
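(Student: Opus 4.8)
The plan is to reduce the statement to an explicit computation in fibred local coordinates adapted to the principal $\Real^\times$-bundle structure of $M_1\dtimes M_2$. First I would fix non-vanishing local sections $c_1\in\Sec{L_1^\times}$ and $c_2\in\Sec{L_2^\times}$ over the trivializing neighbourhoods $U_1,U_2$; these trivialize $L_1,L_2$ and identify each point $B_{x_1x_2}$ lying over $(x_1,x_2)\in U_1\times U_2$ with a triple $(x_1,x_2,\beta)$, where $\beta\in\Real^\times$ is the unique scalar satisfying $B_{x_1x_2}(c_1(x_1))=\beta\, c_2(x_2)$. Writing $n_i=\dim M_i$, these give genuine coordinates $(x_1^a,x_2^b,\beta)$ on $(p_1\times p_2)^{-1}(U_1\times U_2)$, so that $\dim(M_1\dtimes M_2)=n_1+n_2+1$ and the task becomes showing that the ratio-function differentials span this $(n_1+n_2+1)$-dimensional cotangent space at each point.

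The key step is to obtain the local expression of an arbitrary ratio function. Writing $s_1=f_1 c_1$ with $f_1\in\Cin{U_1}$ and $s_2=f_2 c_2$ with $f_2\in\Cin{U_2}$ non-vanishing, I would evaluate the defining formula for $\tfrac{s_1}{s_2}$ using the ratio map $l^{12}$ of (\ref{ratio2}): since $B_{x_1x_2}(s_1(x_1))=f_1(x_1)\beta\, c_2(x_2)$ and $s_2(x_2)=f_2(x_2) c_2(x_2)$ lie in the same fibre $L_{x_2}$, their proportionality factor is
\begin{equation*}
    \frac{s_1}{s_2}(x_1,x_2,\beta)=\frac{f_1(x_1)\,\beta}{f_2(x_2)}.
\end{equation*}
This explicit $\beta$-dependence is the crucial feature: it is precisely what the pull-backs of functions on $M_1\times M_2$ lack, and what lets the ratio functions detect the fibre direction of the base product.

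With this formula in hand the spanning is immediate direction by direction. Taking $s_1=c_1$, $s_2=c_2$ gives $\tfrac{c_1}{c_2}=\beta$ and hence $d\beta$. Fixing $s_2=c_2$ and letting $f_1$ range over $\Cin{U_1}$ yields $d\!\left(f_1\beta\right)=\beta\,df_1+f_1\,d\beta$; subtracting the known multiple of $d\beta$ and dividing by $\beta\neq 0$ recovers every $df_1$, so the differentials $dx_1^a$ all lie in the span. Symmetrically, fixing $s_1=c_1$ and letting $f_2$ range over the non-vanishing functions gives differentials congruent to a non-zero multiple of $df_2$ modulo $d\beta$; since any covector at $x_2$ is $d(h+C)$ for a constant $C$ making $h+C$ non-vanishing near $x_2$, the $dx_2^b$ are also spanned. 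Together $\{d\beta,dx_1^a,dx_2^b\}$ exhaust the cotangent space.

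The computation is essentially routine, so I do not anticipate a serious obstacle; the only points requiring care are deriving the local formula correctly through the ratio map $l^{12}$, keeping track of which fibre each vector lives in, and handling the non-vanishing constraint on $s_2$, which is harmless locally because adding a large constant to a function leaves its differential unchanged while rendering it non-vanishing on a possibly smaller neighbourhood --- legitimate since the claim is pointwise and already restricted to $(p_1\times p_2)^{-1}(U_1\times U_2)$.
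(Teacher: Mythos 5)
Your proposal is correct and takes essentially the same route as the paper's proof: trivialize by non-vanishing local sections to get coordinates $(x_1,x_2,\beta)$ on $(p_1\times p_2)^{-1}(U_1\times U_2)$, compute the local expression $\tfrac{s_1}{s_2}=f_1\beta/f_2$, and conclude that the differentials of ratio functions span the cotangent space. Your direction-by-direction spanning argument, including the constant-shift trick handling the non-vanishing constraint on $s_2$, simply makes explicit the final step that the paper compresses into the one-line observation that $f_1$ may vanish.
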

\begin{proof}
First note that it is a general fact from basic vector bundle geometry that such open neighbourhoods $U_i$ can be chosen for any points $x_i\in M_i$. The trivializations induce the following diffeomorphism
\begin{equation*}
    (p_1\times p_2)^{-1}(U_1 \times U_2) \cong U_1\times U_2\times \Real^\times
\end{equation*}
and thus we can assign coordinates $(y_1,y_2,b)$, where $y_i\in\Real^{\text{dim}M_i}$ and $b>0$, to any element $B_{x_1x_2}\in (p_1\times p_2)^{-1}(U_1 \times U_2)$. Local sections $s_i\in\Sec{L_i}$ are given by  functions $f_i:U_i\to \Real$ under the trivialization. It follows directly from the definition that ratio functions have the following coordinate expressions
\begin{equation*}
    \frac{s_1}{s_2}(y_1,y_2,b)=f_1(y_1) f_2^{-1}(y_2)b
\end{equation*}
where $f_1$, $f_2$ are the trivialized expressions for the sections $s_1\in\Sec{L_1}$, $s_2\in\Sec{L_2^\times}$. A simple computation shows that
\begin{equation*}
    d_{(y_1,y_2,b)}\frac{s_1}{s_2}=f^{-1}_2b\cdot d_{y_1}f_1-f_1f_2^{-2}b\cdot d_{y_2}f_2+f_1f_2^{-1}\cdot d_bb
\end{equation*}
Since $f_1(y_1)$ is allowed to vanish in general, these differentials span all the cotangent vectors at any coordinate point $(y_1,y_2,b)$.
\end{proof}

One of the first canonical examples of vector bundle in elementary differential geometry is the tangent bundle of a smooth manifold $\Tan M$. This is a vector bundle whose sections are identified with vector fields which, in turn, give derivations of $\Real$-valued functions, $\Dr{\Cin{M}}\cong \Sec{\Tan M}$, where the isomorphism arises from the construction of local directional derivatives. Furthermore, differentials of smooth maps give vector bundle morphisms that respect the Lie bracket of vector fields. This can be summarized by identifying the tangent construction as a functor sending smooth manifolds to Lie algebroids with identity anchor
\begin{equation*}
    \Tan:\Man \to \Lie_\Man\subset \Vect_\Man.
\end{equation*}
Our proposal to regard a line bundle $L$ as a unit-free manifold entails thinking of sections $\Sec{L}$ as the generalization of $\Real$-valued functions. Simply by looking at the definition of \textbf{der bundle} given above, one immediately notices that $\Der L$ is a bundle whose fibres realize the unit-free version of local directional derivatives. In analogy with the usual notation of the action a vector field $X$ on a function $f$ as a directional derivative $X[f]$, given a line bundle $L_M$ we define the action of a section of the der bundle, $a\in \Sec{\Der L}$, as a derivation on sections by setting
\begin{equation*}
    a[s](x):=a(x)(s)\in L_x, \quad \forall x\in M.
\end{equation*}
We then write the Lie bracket of derivations as a bracket on sections of $\Der L$ explicitly
\begin{equation*}
    [a,a'][s]:=a[a'[s]]-a'[a[s]]
\end{equation*}
for all $a,a'\in \Sec{\Der L}$ and $s\in\Sec{L}$. It is then straightforward to check that $(\Der L,\delta,[,] )$ is a Lie algebroid where $\delta$ denotes the anchor given by the local symbol map.\newline

Let a factor $B:L_1\to L_2$ covering a smooth map $b:M_1\to M_2$, we define its \textbf{der map} at a point $x_1\in M_1$ as follows:
\begin{align*}
\Der_{x_1} B: \Der_{x_1} L_1 & \to \Der_{b(x_1)} L_2\\
a_{x_1} & \mapsto \Der_{x_1}a_{x_1} 
\end{align*}
where
\begin{equation*}
    \Der_{x_1}a_{x_1}(s_2):=B_{x_1}(a_{x_1}(B^*s_2)) \quad \forall s_2\in\Sec{L_2}.
\end{equation*}
Given two derivations, $a_1\in\Sec{\Der L_1}$ and $a_2\in\Sec{\Der L_2}$ we say that they are $B$\textbf{-related}, and denote $a_1\sim_B a_2$, if the following diagram commutes
\begin{equation*}
    \begin{tikzcd}
    \Der L_1 \arrow[r, "\Der B"] & \Der L_2 \\
    M_1 \arrow[r, "b"']\arrow[u,"a_1"] & M_2.\arrow[u,"a_2"]
\end{tikzcd}
\end{equation*}
When the base map of the factor $b$ is a diffeomorphism we can define the \textbf{der push-forward} of derivations by
\begin{align*}
B_*: \Sec{\Der L_1} & \to \Sec{\Der L_2}\\
a & \mapsto \Der B \circ a \circ b^{-1}.
\end{align*}
With our identification of derivations with sections of the der bundle and the definition of pull-back of sections, we can readily check the following identity:
\begin{equation*}
    B_*a[s]=(B^{-1})^*a[B^*s] \qquad \forall s\in \Sec{L_2},
\end{equation*}
which gives an alternative definition of der push-forward only in terms of pull-backs of diffeomorphic factors. The following proposition establishes the der bundle as the adequate generalization of unit-free tangent bundle.

\begin{prop}[The Der Functor]\label{DerFunct}
The der map of a factor $B:L_1 \to L_2$ gives a well-defined vector bundle morphism
\begin{equation*}
\normalfont
\begin{tikzcd}
\Der L_1 \arrow[r, "\Der B"] \arrow[d, "\lambda_1"'] & \Der L_2 \arrow[d, "\lambda_2"] \\
M_1 \arrow[r, "b"'] & M_2
\end{tikzcd}
\end{equation*}
which is a Lie algebroid morphism, i.e.
\begin{equation*}
\normalfont
    \Tan b\circ \delta_1 = \delta_2\circ \Der B \qquad  \qquad a_1\sim_B a_2 ,\quad a'_1\sim_B a'_2 \quad \Rightarrow \quad [a_1,a'_1]_1\sim_B [a_2,a'_2]_2
\end{equation*}
for $\normalfont a_i,a_i'\in\Sec{\Der L_i}$. Furthermore, for any other factor $F:L_2 \to L_3$ and $\normalfont \Id_L:L\to L$ the identity factor, we have
\normalfont
\begin{align*}
    \Der (F\circ B) & = \Der F \circ \Der B\\
    \Der (\Id_L) & = \Id_{\Der L}.
\end{align*}
\end{prop}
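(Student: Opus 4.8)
The plan is to handle the four assertions in the natural order: first verify that $\Der B$ lands in the der bundle and is a vector bundle morphism, extracting the anchor relation as a by-product; then reformulate $B$-relatedness so that the bracket compatibility becomes a short manipulation; and finally read off functoriality from the contravariance of the factor pull-back.

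First I would substitute the defining formula $\Der_{x_1} a_{x_1}(s_2) = B_{x_1}(a_{x_1}(B^* s_2))$ into the derivation property, testing it against $g \cdot s_2$ for $g \in \Cin{M_2}$. Using the factor pull-back identity $B^*(g \cdot s_2) = b^* g \cdot B^* s_2$ established earlier, the derivation rule for $a_{x_1}$, and linearity of $B_{x_1}$, the computation collapses to
\begin{equation*}
\Der_{x_1} a_{x_1}(g \cdot s_2) = g(b(x_1)) \cdot \Der_{x_1} a_{x_1}(s_2) + \Tan_{x_1} b(\delta_1(a_{x_1}))[g] \cdot s_2(b(x_1)),
\end{equation*}
so that $\Der_{x_1} a_{x_1}$ is a derivation at $b(x_1)$ with anchor $\Tan_{x_1} b(\delta_1(a_{x_1}))$. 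This single identity simultaneously proves well-definedness and the anchor relation $\Tan b \circ \delta_1 = \delta_2 \circ \Der B$; fibre-wise linearity in $a_{x_1}$ is immediate, and smoothness of the assembled map follows by writing $B$ and $a$ in local trivializations.

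The key reformulation for the bracket is the observation that $a_1 \sim_B a_2$ is equivalent to the section identity $a_1[B^* s_2] = B^*(a_2[s_2])$ for all $s_2 \in \Sec{L_2}$, which one checks by applying $B_{x_1}^{-1}$ to the defining equality $B_{x_1}(a_1[B^* s_2](x_1)) = a_2[s_2](b(x_1))$. Granting $a_1 \sim_B a_2$ and $a_1' \sim_B a_2'$, I would expand $[a_1, a_1'][B^* s_2]$ and push the pull-back through twice: replacing $a_1'[B^* s_2]$ by $B^*(a_2'[s_2])$ and then $a_1[B^*(a_2'[s_2])]$ by $B^*(a_2[a_2'[s_2]])$, and symmetrically for the other term, which yields $[a_1, a_1'][B^* s_2] = B^*([a_2, a_2'][s_2])$, precisely the statement $[a_1, a_1'] \sim_B [a_2, a_2']$.

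Finally, for functoriality I would first record that the factor pull-back is contravariant, $(F \circ B)^* = B^* \circ F^*$, directly from $B^* s(x_1) = B_{x_1}^{-1} s(b(x_1))$. Substituting into the defining formula gives $\Der_{x_1}(F \circ B)(a_{x_1})(s_3) = F_{b(x_1)}(B_{x_1}(a_{x_1}(B^* F^* s_3)))$, which is exactly $(\Der F \circ \Der B)(a_{x_1})(s_3)$; and $\Id_L^* s = s$ gives $\Der \Id_L = \Id_{\Der L}$ at once. I expect the only genuinely non-formal step to be the smoothness of $\Der B$, which the defining algebraic formula does not make transparent and which is cleanest to verify in trivializing charts; every other step is a direct consequence of the pull-back identities, with the reformulation of $\sim_B$ being the one piece of insight that keeps the bracket computation short.
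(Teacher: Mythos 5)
Your proposal is correct and follows essentially the same route as the paper: the same computation on $g\cdot s_2$ via the pull-back identity $B^*(g\cdot s_2)=b^*g\cdot B^*s_2$ yielding well-definedness and the anchor relation in one stroke, and functoriality from $(F\circ B)^*=B^*\circ F^*$ together with $\Id_L^*=\Id_{\Sec{L}}$. Your explicit reformulation of $a_1\sim_B a_2$ as the section identity $a_1[B^*s_2]=B^*(a_2[s_2])$, pushed through twice in the bracket, is exactly the mechanism the paper gestures at in its one-line (and typographically garbled) treatment of $B$-relatedness, so you have in fact made that step more rigorous than the printed proof.
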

\begin{proof}
Note first that  $B^*(f\cdot s)=b^*f\cdot B^*s$ for $f\in\Cin{M_2}$ and $s\in\Sec{L_2}$, then applying the definition of point-wise der map one obtains
\begin{equation*}
    \Der_xBa_x(f\cdot s)=f(b(x))\cdot \Der_x B a_x(s)+\delta_1(a_x)(b^*f)(x)\cdot s(b(x)).
\end{equation*}
Simply using the definition of tangent map $\Tan b$ we can rewrite the second term to find
\begin{equation*}
    \Der_xBa_x(f\cdot s)=f(b(x))\cdot \Der_x B a_x(s)+\Tan_xb\circ \delta_1(a_x)(f)(b(x))\cdot s(b(x))
\end{equation*}
which shows that indeed $\Der_xBa_x\in \Der_{b(x)}L_2$, making $\Der B$ a well-defined vector bundle morphism. We can then use the anchor $\delta_2$ to rewrite the LHS
\begin{equation*}
    \Der_xBa_x(f\cdot s)=f(b(x))\cdot \Der_x B a_x(s)+\delta_2(\Der_xB a_x)(f)(b(x))\cdot s(b(x))
\end{equation*}
and thus we obtain
\begin{equation*}
    T_xb\circ \delta_1(a_x)=\delta_2(\Der_xB a_x),
\end{equation*}
which must hold for any $a_x\in\Der_xL_1$, thus giving the desired compatibility condition with the anchors. The $B$-relatedness condition follows directly from the definition of der map by noting that
\begin{equation*}
    a_2(b(x))(a'_2[s])=(\Der_x B a_1(x))(a'_2[s])=B_x(a_1(x)(B^*a_2[s]))
\end{equation*}
for all $s\in\Sec{L_2}$. The two functorial identities follow from contravariance of pull-backs $(F\circ B)^*=B^*\circ F^*$ and the trivial fact $\Id_L^*=\Id_{\Sec{L}}$.
\end{proof}

In section \ref{CatLine}  we defined the category $\LVect$ and we argued that it adequately implements the notion of unit-free vector spaces. We define the \textbf{category of L-vector bundles} $\LVect_\Man$ in a similar fashion to how one defines the category of vector bundles $\Vect_\Man$ as the class of smooth fibrations of objects in $\Vect$ with morphisms the smooth maps that fibre-wise restrict to morphisms of $\Vect$. Thus, the objects in the category of L-vector bundles are pairs of vector bundles and line bundles over the same base manifold $(E_M,L_M)$, and the morphisms are pairs of vector bundle morphisms and factors covering the same smooth map. The following notations will be used indistinctly to denote a L-vector bundle: $E^L$, $E_M^L$ and $L$ may be omitted altogether as superscript if it is understood from the context. Similar notational conventions will apply for L-vector bundle morphisms. Fixing a base manifold $M$, we have the subcategory of L-vector bundles over a single base $\LVect_M$, there we can define \textbf{L-direct sums}, \textbf{L-tensor products}, and \textbf{L-duals} by using the fibre-wise L-linear structure and following the definitions given at the end of section \ref{CatLine}. Pull-back bundles are defined in the obvious way by taking the simultaneous pull-back of both the vector bundle and the line bundle over the same base. It is clear from our discussion on der bundles above that these represent a first class of examples of L-vector bundles and that der maps of factors are L-vector bundle morphisms.\newline

One of the key properties of the tangent functor $\Tan$ is that it allows to regard tangent bundles of submanifolds as subbundles. More concretely, if $i:S\hookrightarrow M$ is a submanifold we can regard its tangent bundle $\Tan S$ as a subbbundle of $\Tan M$ via the injective bundle map $\Tan i:\Tan S\hookrightarrow \Tan M$. Vector fields of the submanifold $S$ are identified with vector fields of $M$ that restrict tangentially or, equivalently, those vector fields on $M$ which preserve the vanishing ideal of the submanifold when acting as derivations modulo vector fields identically vanishing on the manifold, i.e  there is a natural isomorphism
\begin{equation*}
    \Sec{\Tan S}\cong \{X\in\Sec{\Tan M}|\quad X[I_S]\subseteq I_S\}/\Gamma_{0S}(\Tan M).
\end{equation*}
The following two propositions give the corresponding results for der bundles.
\begin{prop}[Der Bundle of a Submanifold]\label{DerSubMan}
Let $\lambda:L\to M$ be a line bundle and $i:S\hookrightarrow M$ an embedded submanifold of the base. The canonical embedding factor $\iota : L_S\hookrightarrow L$ gives an injective L-vector bundle morphism
\begin{equation*}
\normalfont
    \Der \iota : \Der L_S \hookrightarrow \Der L.
\end{equation*}
By a slight abuse of notation we will write $\Der L_S\subseteq \Der L$ in the same way that we write $\Tan S\subseteq \Tan M$, then we have
\begin{equation*}
    \delta_L(\Der L_S)=\Tan S,
\end{equation*}
where $\delta_L:\Der L\to \Tan M$ is the anchor of the der bundle.
\end{prop}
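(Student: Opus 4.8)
The plan is to extract almost all of the structural content from the Der Functor (Proposition \ref{DerFunct}) and then reduce the two assertions to elementary properties of the anchor short exact sequences. Applying Proposition \ref{DerFunct} directly to the embedding factor $\iota: L_S \to L$ covering $i: S \hookrightarrow M$ already gives that $\Der\iota$ is a well-defined vector bundle morphism covering $i$, in fact an L-vector bundle morphism, and — crucially — that it intertwines the anchors:
\begin{equation*}
    \Tan i \circ \delta_{L_S} = \delta_L \circ \Der\iota.
\end{equation*}
Thus the only things left to prove are fibre-wise injectivity of $\Der\iota$ and the identification of the image of $\delta_L$ on $\Der L_S$.

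For injectivity I would argue fibre-wise over a point $x\in S$, using that $i(x)=x$ so that the pull-back identification renders $\iota_x:(L_S)_x\to L_x$ an isomorphism. Since $\text{End}(L)\cong\Real_M$ for line bundles, the order-$1$ symbol sequence (\ref{diff1}) specializes to the anchor sequence $0\to\Real_M\to\Der L\xrightarrow{\delta_L}\Tan M\to 0$, and likewise for $L_S$. These assemble into a commutative ladder of short exact sequences whose right vertical map is $\Tan_x i$ and whose left vertical map is the restriction of $\Der_x\iota$ to the endomorphism (anchor-free) parts. Unwinding the der-map definition $\Der_x\iota(a_x)(s)=\iota_x(a_x(\iota^* s))$ on a scalar derivation $a_x(\sigma)=c_S\,\sigma(x)$, together with $(\iota^* s)(x)=\iota_x^{-1}(s(x))$, shows that $\Der_x\iota(a_x)(s)=c_S\,s(x)$; hence the left vertical map is the identity $\Real\to\Real$. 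Pinning down this endomorphism part is where the argument carries genuine content — everything around it is formal. Given that $\Tan_x i$ is injective (as $i$ is an embedding) and the left map is an isomorphism, the relevant half of the five lemma forces the middle map $\Der_x\iota$ to be injective, establishing the first claim.

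The image statement then follows immediately from the intertwining relation. Surjectivity of the anchor $\delta_{L_S}:\Der L_S\to\Tan S$ is built into the exact sequence above, so $\delta_{L_S}(\Der L_S)=\Tan S$. Substituting this into the intertwining identity yields
\begin{equation*}
    \delta_L(\Der\iota(\Der L_S)) = \Tan i(\delta_{L_S}(\Der L_S)) = \Tan i(\Tan S),
\end{equation*}
which under the stated abuses of notation $\Der L_S\subseteq\Der L$ and $\Tan S\subseteq\Tan M$ reads precisely $\delta_L(\Der L_S)=\Tan S$. The main obstacle is the injectivity step, and specifically verifying that the kernel parts of the two anchor sequences are matched isomorphically by $\Der\iota$; once that is checked, the rest is a diagram chase and a single application of anchor surjectivity.
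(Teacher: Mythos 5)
Your proof is correct, and for the second assertion it coincides with the paper's argument verbatim: anchor-intertwining from Proposition \ref{DerFunct} plus surjectivity of $\delta_{L_S}$ gives $\delta_L(\Der\iota(\Der L_S))=\Tan i(\Tan S)$. Where you genuinely diverge is the injectivity step. The paper dispatches it in one clause --- ``injectivity of $\Der\iota$ follows simply from injectivity of $i$'' --- which, taken literally, only controls the symbol direction: the intertwining relation $\Tan i\circ\delta_{L_S}=\delta_L\circ\Der\iota$ together with injectivity of $\Tan_x i$ shows that a derivation in $\ker(\Der_x\iota)$ must have vanishing symbol, but says nothing about the endomorphism part. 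Your ladder of der sequences fills exactly this gap: the fibre-wise computation $\Der_x\iota(a_x)(s)=\iota_x\bigl(c_S\,\iota_x^{-1}(s(x))\bigr)=c_S\,s(x)$ shows the induced map on the $\Real$ kernels is the identity, and then the four-lemma chase (equivalently: $\delta_{L_S}(a_x)=0$ forces $a_x$ scalar, and a scalar derivation maps to the same scalar derivation, which vanishes only if the scalar does) closes the argument. So your route is a rigorous completion of the paper's terse claim rather than a restatement of it; what it buys is that the single non-formal ingredient --- matching the kernel columns --- is made explicit. An alternative one-line patch, closer in spirit to the paper's brevity, is to note that $\iota^*:\Sec{L}\to\Sec{L_S}$ is surjective on germs at $x\in S$ (sections of $L_S$ extend locally since $S$ is embedded), so $\Der_x\iota(a_x)=0$ and invertibility of $\iota_x$ force $a_x$ to annihilate all germs, hence $a_x=0$; but your exact-sequence argument is equally valid and dovetails nicely with the der and jet sequences the paper uses elsewhere.
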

\begin{proof}
That $\Der \iota$ is a L-vector bundle morphism follows by construction since $L_S:=i^*L$, so the line bundle morphism is simply the fibre-wise identity map. Then, injectivity of $\Der \iota$ follows simply from injectivity of $i:S\hookrightarrow M$. Note that using the full notation the second identity in the proposition reads
\begin{equation*}
    \delta_L(\Der \iota (\Der L_S))=\Tan i(TS)
\end{equation*}
which is clearly a direct consequence of $\Der \iota$ being a Lie algebroid morphism, in particular compatible with the anchors, and the anchors being surjective so that $\delta_L(\Der L)=\Tan M$ and $\delta_{L_S}(\Der L_S)=\Tan S$.
\end{proof}
\begin{prop}[Derivations of a Submanifold]\label{DerivSubMan}
Let $\lambda:L\to M$ be a line bundle, $i:S\hookrightarrow M$ an embedded submanifold of the base and denote by $\Gamma_S\subseteq\Sec{L}$ its vanishing submodule. We define the derivations that tangentially restrict to $S$ as
\begin{equation*}
\normalfont
    \text{Der}_S(L):=\{D\in\Dr{L}|\quad D[\Gamma_S]\subseteq \Gamma_S\}
\end{equation*}
and the derivations that vanish on $S$ as
\begin{equation*}
\normalfont
    \text{Der}_{0S}(L):=\{D\in\Dr{L}|\quad D[\Sec{L}]\subseteq \Gamma_S\},
\end{equation*}
then there is a natural isomorphism of $\Cin{M}$-modules and Lie algebras
\begin{equation*}
\normalfont
    \Dr{L_S}\cong \text{Der}_S(L)/\text{Der}_{0S}(L). 
\end{equation*}
\end{prop}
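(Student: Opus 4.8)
The plan is to realize the claimed isomorphism as the descent of derivations to the quotient module $\Sec{L}/\Gamma_S \cong \Sec{L_S}$, exactly mirroring the classical statement for vector fields quoted just above. First I would verify the bookkeeping: since $\Gamma_S$ is a $\Cin{M}$-submodule, both $\text{Der}_S(L)$ and $\text{Der}_{0S}(L)$ are $\Cin{M}$-submodules of $\Dr{L}$ closed under the commutator (immediate from the Leibniz identity $D(f\cdot s)=f\cdot D(s)+X_D[f]\cdot s$), and multiplication by any $g\in I_S$ sends $\text{Der}_S(L)$ into $\text{Der}_{0S}(L)$ because $g\cdot D$ then lands in $\Gamma_S$; hence the quotient is naturally a $\Cin{M}/I_S\cong\Cin{S}$-module, matching the $\Cin{S}$-module structure of $\Dr{L_S}$. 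The central object is then the descent map $q:\text{Der}_S(L)\to\Dr{L_S}$ sending $D$ to the operator $\bar D$ it induces on $\Sec{L}/\Gamma_S\cong\Sec{L_S}$, which is well defined precisely because $D[\Gamma_S]\subseteq\Gamma_S$.

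To see that $\bar D$ is a genuine derivation of $L_S$, I would first show that the anchor $X_D$ is tangent to $S$. Writing $\Gamma_S=I_S\cdot\Sec{L}$, for $f\in I_S$ and any $s\in\Sec{L}$ the section $f\cdot s$ lies in $\Gamma_S$, so $D(f\cdot s)\in\Gamma_S$; subtracting $f\cdot D(s)\in\Gamma_S$ leaves $X_D[f]\cdot s\in\Gamma_S$, and choosing $s$ locally non-vanishing (possible since $L$ is a line bundle) forces $X_D[f]\in I_S$. Thus $X_D[I_S]\subseteq I_S$, so $X_D$ restricts to a vector field on $S$ and $\bar D$ obeys the Leibniz rule over $\Cin{S}$ with anchor $X_D|_S$, giving $\bar D\in\Dr{L_S}$. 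That $q$ respects the bracket is formal: the $\Real$-linear operators on $\Sec{L}$ preserving $\Gamma_S$ form an associative algebra projecting onto operators on the quotient, and this projection is an algebra map, hence respects commutators, so $\overline{[D_1,D_2]}=[\bar D_1,\bar D_2]$. Finally, $\bar D=0$ means $D[\Sec{L}]\subseteq\Gamma_S$, i.e. $D\in\text{Der}_{0S}(L)$; hence $\ker q=\text{Der}_{0S}(L)$ and the induced map $\text{Der}_S(L)/\text{Der}_{0S}(L)\to\Dr{L_S}$ is an injective $\Cin{S}$-linear Lie algebra morphism.

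The remaining and most delicate step is surjectivity, which I expect to be the main obstacle. Here I would invoke Proposition \ref{DerSubMan}, which realizes $\Der L_S$ as a subbundle of $\Der L$ along $S$ with $\delta_L(\Der L_S)=\Tan S$. Given $a'\in\Sec{\Der L_S}\cong\Dr{L_S}$, I view $a'$ as a section of $(\Der L)|_S$ taking values in this subbundle and extend it to a global section $a\in\Sec{\Der L}\cong\Dr{L}$ using adapted local trivializations of $L$ and a partition of unity. By construction $a$ agrees with the image of $a'$ under $\Der\iota$ along $S$, so its anchor satisfies $X_a(x)\in\Tan_x S$ for every $x\in S$; consequently $X_a[f]$ vanishes on $S$ for all $f\in I_S$, and the Leibniz computation above run in reverse gives $a[\Gamma_S]\subseteq\Gamma_S$, that is $a\in\text{Der}_S(L)$. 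Since $a$ restricts fibrewise to $a'$ along $S$, the descended operator $\bar a$ equals the action of $a'$ on $\Sec{L_S}$, so $q(a)=a'$ and the map is surjective. The crux of the whole argument is the equivalence between tangency of the anchor and preservation of $\Gamma_S$, together with the fact that a section of the subbundle $\Der L_S$ extends smoothly and globally to a tangent derivation of $L$; both are handled exactly as in the tangent-bundle prototype, with the line-bundle trivialization supplying the extra fibre direction.
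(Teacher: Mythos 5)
Your proposal is correct and follows essentially the same route as the paper's proof: the paper also obtains the module isomorphism from Proposition \ref{DerSubMan} via the correspondence between derivations and sections of the der bundle, and gets the Lie algebra statement from the bracket closure relations $[\text{Der}_S(L),\text{Der}_S(L)]\subseteq\text{Der}_S(L)$, $[\text{Der}_S(L),\text{Der}_{0S}(L)]\subseteq\text{Der}_{0S}(L)$, which your operator-algebra quotient argument reproduces. The details you supply --- well-definedness of the descent map on $\Sec{L}/\Gamma_S$, tangency of the anchor via $X_D[f]\cdot s=D(f\cdot s)-f\cdot D(s)$ with $s$ locally non-vanishing, identification of the kernel with $\text{Der}_{0S}(L)$, and the extension argument for surjectivity --- are exactly what the paper compresses into ``follows directly''.
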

\begin{proof}
The isomorphism as modules follows directly from Proposition \ref{DerSubMan} using the correspondence between sections of the der bundle and derivations. The Lie algebra isomorphism is then a consequence of the following simple facts
\begin{align*}
    [\text{Der}_S(L),\text{Der}_S(L)] & \subseteq  \text{Der}_S(L)\\
    [\text{Der}_S(L),\text{Der}_{0S}(L)] & \subseteq  \text{Der}_{0S}(L)\\
    [\text{Der}_{0S}(L),\text{Der}_{0S}(L)] & \subseteq  \text{Der}_{0S}(L),
\end{align*}
easily derived from the definitions above, thus showing that $\text{Der}_S(L)\subseteq \Dr{L}$ is a Lie subalgebra and $\text{Der}_{0S}(L)\subseteq\text{Der}_S(L)$ is a Lie ideal making the subquotient $\text{Der}_S(L)/\text{Der}_{0S}(L)$ into a Lie algebra reduction.
\end{proof}

In the presence of a free and proper group action $G \Acts M$ so that there is a surjective submersion $q:M\to M/G$, the tangent functor gives the equivalent description for point-wise tangent spaces: $\Tan_[x](M/G)\cong \Tan_xM/\mathfrak{g}$, with $\mathfrak{g}$ the Lie algebra of $G$. Furthermore, vector fields on the orbit space are identified with the $G$-invariant vector fields that tangentially restrict to the orbits $\Sec{\Tan (M/G)}\cong\Sec{\Tan M}^G/\Sec{\Tan M}^{0G}$. The following proposition gives an analogous result for the der functor.

\begin{prop}[Der Bundle of a Group Action Quotient]\label{DerGAct}
Let $\Phi:G\times L\to L$ be a free and proper line bundle $G$-action with infinitesimal counter part $\normalfont \Psi:\mathfrak{g}\to \Dr{L}$ and let us denote by $\zeta: L\to L/G$ the submersion factor given by taking the quotient onto the space of orbit line bundles. Defining the $G$-invariant derivations as
\begin{equation*}
\normalfont
    \Dr{L}^G:=\{D\in\Dr{L}|\quad (\Phi_g)_*D=D \quad \forall g\in G\}=\{D\in\Dr{L}|\quad D[\Sec{L}^G]\subseteq \Sec{L}^G\}
\end{equation*}
and the derivations that tangentially restrict to the orbits as
\begin{equation*}
\normalfont
     \Dr{L}^{0G}:=\{D\in\Dr{L}|\quad D[\Sec{L}^G]=0\}
\end{equation*}
we find the following isomorphism of modules and Lie algebras
\begin{equation*}
\normalfont
    \Dr{L/G}\cong \Dr{L}^G/\Dr{L}^{0G}.
\end{equation*}
When $G$ is connected, this isomorphism becomes
\begin{equation*}
\normalfont
    \Dr{L/G}\cong \Dr{L}^{\mathfrak{g}}/\Psi(\mathfrak{g}).
\end{equation*}
where the $\mathfrak{g}$-invariant derivations are defined as
\begin{equation*}
\normalfont
    \Dr{L}^{\mathfrak{g}}:=\{D\in\Dr{L}|\quad [\Psi(\xi),D]=0 \quad \forall \xi\in \mathfrak{g}\}.
\end{equation*}
When applied point-wise, this last isomorphism gives a fibre-wise isomorphism of der spaces over the base orbit space $M/G$:
\begin{equation*}
\normalfont
    \Der_{[x]}(L/G)\cong \Der_xL/\mathfrak{g}.
\end{equation*}
\end{prop}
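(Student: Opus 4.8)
The plan is to mimic, in the unit-free setting, the classical descent of vector fields along the principal bundle $z:M\to M/G$, using the established identification $\Sec{L/G}\cong\Sec{L}^G$ together with the functoriality of $\Der$ from Proposition \ref{DerFunct}. The central object is a ``restriction to invariant sections'' map, and the work splits naturally into constructing it, computing its kernel, proving surjectivity (the crux), and finally translating everything into infinitesimal and point-wise form. Concretely, I would first build $\Theta:\Dr{L}^G\to\Dr{L/G}$: given $D\in\Dr{L}^G$, the push-forward characterization guarantees that $D$ preserves $\Sec{L}^G$, so under $\zeta^*:\Sec{L/G}\xrightarrow{\sim}\Sec{L}^G$ its restriction defines an $\Real$-linear operator $\bar D$ on $\Sec{L/G}$. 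To see $\bar D$ is a derivation I would use that the anchor $X_D$ is a $G$-invariant vector field, hence descends to a vector field $\bar X$ on $M/G$ via the classical statement $\Sec{\Tan(M/G)}\cong\Sec{\Tan M}^G/\Sec{\Tan M}^{0G}$, and check the Leibniz identity $\bar D[f\cdot\bar s]=\bar X[f]\cdot\bar s+f\cdot\bar D[\bar s]$ by pulling back along $\zeta$ and invoking $\zeta^*(f\cdot\bar s)=z^*f\cdot\zeta^*\bar s$.

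By construction $\Theta(D)=0$ exactly when $D$ annihilates every invariant section, which identifies $\ker\Theta=\Dr{L}^{0G}$, understood here as the invariant derivations killing $\Sec{L}^G$. At this point I would record the three bracket inclusions showing that $\Dr{L}^G$ is a Lie subalgebra and $\Dr{L}^{0G}$ a Lie ideal inside it, exactly as in Proposition \ref{DerivSubMan} (the commutator of an operator preserving $\Sec{L}^G$ with one killing it again kills it). Combined with the obvious $\Cin{M/G}$-linearity under the identification $\Cin{M/G}\cong\Cin{M}^G$, this upgrades the eventual bijection to an isomorphism of Lie algebras and of $\Cin{M/G}$-modules, giving $\Dr{L/G}\cong\Dr{L}^G/\Dr{L}^{0G}$.

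The \emph{main obstacle} is surjectivity of $\Theta$, i.e. the lifting problem. Given $\bar D\in\Dr{L/G}$ with anchor $\bar X$, I must produce an invariant lift $D\in\Dr{L}^G$. Locally over a trivializing $\bar U\subseteq M/G$ I would choose a nowhere-zero $\bar c\in\Sec{L/G}$, so that $c:=\zeta^*\bar c$ is a nowhere-zero invariant section over the saturated set $z^{-1}(\bar U)$ and every section there is uniquely $f\cdot c$; then, picking a $G$-invariant vector field $X$ lifting $\bar X$, I would set $D[f\cdot c]:=X[f]\cdot c+f\cdot\zeta^*(\bar D[\bar c])$. The work is to verify that $D$ is a genuine derivation with anchor $X$, that it is $G$-invariant and restricts to $\bar D$ on invariant sections, and — crucially — that the local constructions are independent of the auxiliary choices up to an element of $\Dr{L}^{0G}$, so they glue to a single class. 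The delicate inputs are the existence of the invariant lift $X$ (obtained from a principal, i.e.\ $G$-invariant, connection on the free and proper bundle $M\to M/G$) and the gluing; these are precisely where freeness and properness of the action are used.

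Finally, for connected $G$ I would show that push-forward invariance is equivalent to infinitesimal invariance: differentiating $(\Phi_{\exp t\xi})_*D=D$ gives $[\Psi(\xi),D]=0$, and connectedness lets one integrate back, so $\Dr{L}^G=\Dr{L}^{\mathfrak{g}}$, with the equivariance $\Psi\circ\text{Ad}_g=(\Phi_g)_*\circ\Psi$ and the identity $B_*a[s]=(B^{-1})^*a[B^*s]$ handling the bookkeeping. It then remains to identify the vertical part with $\Psi(\mathfrak{g})$: each $\Psi(\xi)$ has anchor the fundamental field $\psi(\xi)$, hence is vertical, and kills invariant sections since $\Psi(\xi)[s]=\tfrac{d}{dt}\big|_0\Phi_{\exp(-t\xi)}^*s=0$ for $s\in\Sec{L}^G$, giving $\Psi(\mathfrak{g})\subseteq\Dr{L}^{0G}$; the reverse inclusion I would check point-wise, using that freeness makes $\psi:\mathfrak{g}\to\Ker{\delta}$ an isomorphism onto the vertical der directions, so the fundamental derivations generate the vertical piece. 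Reading the anchor sequence $0\to\Real\to\Der_{[x]}(L/G)\to\Tan_{[x]}(M/G)\to 0$ against $0\to\Real\to\Der_xL\to\Tan_xM\to 0$ then yields $\Der_{[x]}(L/G)\cong\Der_xL/\mathfrak{g}$, since quotienting the middle term by the vertical $\psi(\mathfrak{g})_x\cong\mathfrak{g}$ quotients the anchor image by $\mathfrak{g}$ while leaving the $\Real$-summand untouched.
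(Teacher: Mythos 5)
Your proposal is correct and follows essentially the same route as the paper's proof: descent through the identifications $\Cin{M/G}\cong\Cin{M}^G$ and $\Sec{L/G}\cong\Sec{L}^G$, the three bracket inclusions exhibiting $\Dr{L}^{0G}$ as a Lie ideal in the Lie subalgebra $\Dr{L}^G$, differentiation/integration of $(\Phi_{\exp t\xi})_*D=D$ for the connected case with $\Dr{L}^{0G}$ identified with the fundamental derivations, and injectivity of $\Psi$ (from freeness and properness) applied point-wise to obtain $\Der_{[x]}(L/G)\cong\Der_xL/\mathfrak{g}$. The only divergence is one of detail rather than method: where the paper dismisses the module isomorphism as a ``direct consequence'' of the ring and section identifications, you make the surjectivity of the restriction map explicit via a principal-connection lift of $\bar{X}$ and partition-of-unity gluing of local lifts, which is precisely the argument the paper leaves implicit.
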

\begin{proof}
The module isomorphism $\Dr{L/G}\cong \Dr{L}^G/\Dr{L}^{0G}$ is a direct consequence of the ring isomorphism $\Cin{M/G}\cong \Cin{M}^G$ and the module isomorphism $\Sec{L/G}\cong\Sec{L}^G$. The isomorphism as Lie algebras follows from the fact that, by construction, we have
\begin{align*}
    [\Dr{L}^G,\Dr{L}^G] & \subseteq  \Dr{L}^G\\
    [\Dr{L}^G, \Dr{L}^{0G}] & \subseteq   \Dr{L}^{0G}\\
    [ \Dr{L}^{0G}, \Dr{L}^{0G}] & \subseteq   \Dr{L}^{0G},
\end{align*}
thus showing that $\Dr{L}^G\subseteq \Dr{L}$ is a Lie subalgebra and $\Dr{L}^{0G}\subseteq\Dr{L}^G$ is a Lie ideal making the subquotient $\Dr{L}^G/\Dr{L}^{0G}$ into a Lie algebra reduction. When $G$ is connected, its action is uniquely specified by the infinitesimal counterpart, which can be equivalently regarded as a Lie algebroid morphism $\Psi:(\mathfrak{g}_M,\psi,[,])\to (\Der L,\delta,[,])$, where $\mathfrak{g}_M=M\times \mathfrak{g}$ here denotes the action Lie algebroid with anchor given by the infinitesimal action on the base manifold $\psi:\mathfrak{g}\to \Sec{\Tan M}$. It is clear by construction that for the action of a connected $G$ we have $\Dr{L}^{0G}=\Psi(\mathfrak{g})$ and that $G$-invariance under push-forward, $(\Phi_g)_*D=D$, becomes vanishing commutator with the infinitesimal action, $[\Psi(\xi),D]=0$. Then the second isomorphism follows. Since the action is free and proper, the map $\Psi$ will be injective as a vector bundle morphism covering the identity map on $M$, then applying the second isomorphism point-wise and injectivity of the infinitesimal action, so that $\Psi(\mathfrak{g})_x\cong \mathfrak{g}$, we find the last desired result.
\end{proof}

Another crucial property of the tangent functor is that it sends Cartesian products to direct sums, i.e. if $M_1$ and $M_2$ are smooth manifolds and we denote by $\Proj_i:M_1\times M_2\to M_i$ the canonical projections, we have the following vector bundle isomorphism
\begin{equation*}
    \Tan (M_1\times M_2)\cong \Proj_1^*\Tan M_1 \oplus \Proj_2^* \Tan M_2.
\end{equation*}
Furthermore, this isomorphism allows to lift vector fields on each of the factors $X_i\in\Sec{\Tan M_i}$ to vector fields on the Cartesian product $X_1\oplus 0,0\oplus X_2\in\Sec{\Tan (M_1\times M_2)}$. The two propositions that follow below establish similar results for the der functor.
\begin{prop}[Der Bundle of the Line Product]\label{DerLineProd}
Let two line bundles $L_1, L_2\in \Line_\Man$ and take their line product $L_1\utimes L_2$, then there is a canonical isomorphism of L-vector bundles covering the identity map on the base product:
\begin{equation*}
\normalfont
    \Der (L_1\utimes L_2)\cong p_1^*\Der L_1 \oplus p_2^*\Der L_2.
\end{equation*}
\end{prop}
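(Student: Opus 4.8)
The plan is to build the isomorphism out of the two canonical projection factors of the line product (Proposition~\ref{LineProd}) and then verify it fibrewise by a rank count together with an injectivity argument. First I would apply the der functor (Proposition~\ref{DerFunct}) to the submersion factors $P_i : L_1 \utimes L_2 \to L_i$, obtaining Lie algebroid morphisms $\Der P_i : \Der(L_1\utimes L_2)\to \Der L_i$ covering $p_i : M_1\dtimes M_2 \to M_i$. Since $\Der P_i$ covers $p_i$, the universal property of the pull-back recasts each as a morphism $\Der(L_1\utimes L_2)\to p_i^*\Der L_i$ over the identity, and I would assemble them into
\[
\Phi := \Der P_1 \oplus \Der P_2 : \Der(L_1\utimes L_2)\longrightarrow p_1^*\Der L_1 \oplus p_2^*\Der L_2.
\]
Recalling that $(P_1)_B = \Id$ and $(P_2)_B = B$ on fibres, the line component of $\Der P_1$ is the identity while that of $\Der P_2$ is the tautological isomorphism $B : (p_1^*L_1)_B = L_{x_1}\to L_{x_2} = (p_2^*L_2)_B$. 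This is exactly the canonical factor identifying the lines of the two summands, so the L-direct sum is well formed over the line $L_1\utimes L_2$ and $\Phi$ is a morphism of L-vector bundles covering the identity.

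Next I would check that the ranks agree. Writing $n_i=\dimm M_i$, the base product $M_1\dtimes M_2$ is a principal $\Real^\times$-bundle over $M_1\times M_2$, hence has dimension $n_1+n_2+1$, and the anchor sequence (\ref{diff1}) of a rank-one bundle gives $\operatorname{rk}\Der(L_1\utimes L_2)=n_1+n_2+2$; on the other hand $\operatorname{rk}(p_1^*\Der L_1\oplus p_2^*\Der L_2)=(n_1+1)+(n_2+1)=n_1+n_2+2$. Because $\Phi$ covers the identity and both bundles have constant, equal rank, it suffices to prove that $\Phi$ is fibrewise injective.

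For injectivity, fix $B=B_{x_1x_2}$ and suppose $a_B\in\Der_B(L_1\utimes L_2)$ satisfies $\Der P_1(a_B)=0$ and $\Der P_2(a_B)=0$. Using the anchor compatibility $\Tan p_i\circ\delta_{12}=\delta_i\circ\Der P_i$ from Proposition~\ref{DerFunct}, the two vanishings force $\delta_{12}(a_B)\in\Ker{\Tan p_1}\cap\Ker{\Tan p_2}$, which is precisely the vertical line of the $\Real^\times$-bundle $M_1\dtimes M_2\to M_1\times M_2$. Taking a local nonvanishing section $c_1$ of $L_1$ and setting $s_0:=P_1^*c_1$, the defining formula for the der map gives $a_B(s_0)=\Der P_1(a_B)(c_1)=0$, so $a_B$ annihilates a local generator of $\Sec{L_1\utimes L_2}$ as a $\Cin{M_1\dtimes M_2}$-module.

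The remaining task — and this is the step I expect to be the main obstacle — is to show that the vertical component of $\delta_{12}(a_B)$ also vanishes, since the anchor parts of both $\Der P_1$ and $\Der P_2$ are blind to it. Here I would exploit that $(P_2)_B=B$ is the tautological pairing: trivializing the $L_i$ by nonvanishing sections $e_i$ identifies the fibre coordinate $b$ of $M_1\dtimes M_2$ with the ratio function $\tfrac{e_1}{e_2}$ of Proposition~\ref{SpanFuncProd}, and a short computation gives $P_2^*e_2=\tfrac{1}{b}\,P_1^*e_1$. Expanding $\Der P_2(a_B)(e_2)=B\bigl(a_B(\tfrac{1}{b}\,P_1^*e_1)\bigr)$ via the Leibniz rule, the term in $a_B(s_0)=0$ drops out and only the vertical derivative $\delta_{12}(a_B)[\tfrac{1}{b}]$ survives, producing a nonzero multiple of the vertical magnitude of $\delta_{12}(a_B)$; its vanishing forces that magnitude to be zero. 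Hence $\delta_{12}(a_B)=0$ and $a_B(s_0)=0$, so $a_B=0$. This establishes fibrewise injectivity, and with the rank count it shows that $\Phi$ is the desired canonical isomorphism of L-vector bundles covering the identity on $M_1\dtimes M_2$.
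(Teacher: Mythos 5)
Your proof is correct, and it shares the paper's overall skeleton---the same comparison map $\Phi=\Der P_1\oplus \Der P_2$ assembled from the projection factors, the same identification of the two lines $p_1^*L_1$ and $p_2^*L_2$ via the tautological isomorphism (the paper phrases this through the canonical factor $C_{12}$), and the same rank count $n_1+n_2+2$ on both sides---but it closes the dimension argument from the opposite end. The paper asserts fibrewise \emph{surjectivity}, claiming it ``follows directly from the definition of the der map and the fact that projection factors are surjective''; as stated this glosses over the vertical $\Real^\times$-direction of $M_1\dtimes M_2$, which neither $\Tan p_1$ nor $\Tan p_2$ sees, and a careful surjectivity proof would really need the lifts $k_i:\Dr{L_i}\to \Dr{L_1\utimes L_2}$ constructed only afterwards (Proposition~\ref{DerivLineProd}) or an equivalent local construction. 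You instead prove fibrewise \emph{injectivity}, and your computation isolates exactly this delicate point: anchor compatibility confines $\delta_{12}(a_B)$ to the vertical line, $\Der P_1(a_B)=0$ kills the action on the module generator $P_1^*e_1$, and the relation $P_2^*e_2=\tfrac{1}{b}\,P_1^*e_1$ (correct, since $b=\tfrac{e_1}{e_2}$ is the fibre coordinate) converts the vanishing of $\Der P_2(a_B)(e_2)$ into $-\tfrac{v}{b}\,e_2(x_2)=0$, forcing the vertical magnitude $v$ to vanish; with $\delta_{12}(a_B)=0$ and $a_B(P_1^*e_1)=0$, the Leibniz rule then annihilates $a_B$ on all spanning sections $f\cdot P_1^*s_1$, as required. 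The one point worth flagging is cosmetic: derivations at a point are local operators (the Leibniz rule makes them factor through germs), so your use of the local nonvanishing sections $e_i$ is legitimate even though the paper defines $\Der L$ via global sections. Net effect: your route is logically equivalent to the paper's but is arguably the more complete of the two, since it makes explicit the single fibre direction where the isomorphism could conceivably fail, precisely where the paper's one-line surjectivity claim is at its thinnest.
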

\begin{proof}
Note first that both vector bundles are, by construction, vector bundles over the same base manifold $M_1\dtimes M_2$. The first term $p_1^*\Der L_1$ is clearly an L-vector bundle with line $p_1^*L_1$ and although the second term is an L-vector bundle with line $p_2^*L_2$ we can use the canonical factor $C_{12}$ to isomorphically regard $p_2^*\Der L_2$ as an L-vector bundle with line $p_1^*L_1$. Then the direct sum as L-vector bundles is well-defined and giving an L-vector bundle isomorphism now reduces to finding a vector bundle isomorphism. A quick check introducing trivializations (see the end of this section for details) shows that both vector bundles have the same rank and so it is enough to find a fibre-wise surjective map $\Phi$ between the vector bundles. We can write this map explicitly as
\begin{align*}
\Phi_{B_{x_1x_2}}:\Der_{B_{x_1x_2}}(L_1\utimes L_2) & \to \Der_{x_1}L_1\oplus \Der_{x_2}L_2\\
a & \mapsto \Der_{B_{x_1x_2}}P_1(a) \oplus \Der_{B_{x_1x_2}}P_2(a)
\end{align*}
where $P_1$ and $P_2$ are the line product projection factors as in (\ref{lproduct}). Surjectivity of this map follows directly from the definition of the der map and the fact that projection factors are surjective.
\end{proof}
\begin{prop}[Derivations of the Line Product]\label{DerivLineProd}
Let two line bundles $L_1, L_2\in \Line_\Man$ and take their line product $L_1\utimes L_2$, then we find the derivations of each factor as submodules of the derivations of the product
\begin{equation*}
\normalfont
\begin{tikzcd}
\Dr{L_1}\arrow[r,hook, "k_1"] & \Dr{L_1\utimes L_2} & \Dr{L_2}\arrow[l,hook',"k_2"'].
\end{tikzcd}
\end{equation*}
The maps $k_i$ are Lie algebra morphisms making $\normalfont \Dr{L_i}\subseteq\Dr{L_1\utimes L_2}$ into Lie subalgebras which, furthermore, satisfy
\begin{equation*}
\normalfont
    [\Dr{L_1},\Dr{L_1}]\subseteq \Dr{L_1} \qquad [\Dr{L_1},\Dr{L_2}]=0 \qquad [\Dr{L_2},\Dr{L_2}]\subseteq\Dr{L_2}.
\end{equation*}
\end{prop}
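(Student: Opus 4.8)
The plan is to construct the two inclusions directly from the splitting of Proposition \ref{DerLineProd} and then obtain all three bracket relations formally from the Lie-algebroid-morphism property of the projection factors established in Proposition \ref{DerFunct}. The canonical isomorphism $\Der(L_1\utimes L_2)\cong p_1^*\Der L_1\oplus p_2^*\Der L_2$ identifies a derivation $\tilde D\in\Dr{L_1\utimes L_2}$ with the pair $(\Der P_1\circ\tilde D,\ \Der P_2\circ\tilde D)$ of its images under the two projection factors $P_1,P_2$ of diagram (\ref{lproduct}). I would define $k_1$ by sending $D_1\in\Dr{L_1}\cong\Sec{\Der L_1}$ to the unique section of $\Der(L_1\utimes L_2)$ whose first component is the pulled-back section $p_1^*D_1\in\Sec{p_1^*\Der L_1}$ and whose second component vanishes, and define $k_2$ symmetrically. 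Well-definedness and smoothness are immediate from Proposition \ref{DerLineProd}, while injectivity of $k_i$ follows from injectivity of pullback of sections along the surjective submersion $p_i$ together with injectivity of the inclusion into a direct summand; linearity over $p_i^*:\Cin{M_i}\to\Cin{M_1\dtimes M_2}$ follows from the corresponding property of pulled-back sections.

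The second step is to recast this definition in the language of $B$-relatedness, which is the form Proposition \ref{DerFunct} can act on. By construction $k_1(D_1)$ is the unique derivation satisfying $k_1(D_1)\sim_{P_1} D_1$ and $k_1(D_1)\sim_{P_2} 0$, and dually $k_2(D_2)\sim_{P_1} 0$, $k_2(D_2)\sim_{P_2} D_2$, where $0$ denotes the zero derivation (which lies in $\Dr{L}$, with vanishing symbol). The key uniqueness statement I will invoke repeatedly, again a consequence of the isomorphism of Proposition \ref{DerLineProd}, is that a derivation on $L_1\utimes L_2$ is completely determined by the pair $(\Der P_1\circ\tilde D,\ \Der P_2\circ\tilde D)$: two derivations $P_1$- and $P_2$-related to the same data coincide.

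With this dictionary the three identities are formal. Since $\Der P_1$ and $\Der P_2$ are Lie algebroid morphisms (Proposition \ref{DerFunct}), $B$-relatedness is preserved by brackets. From $k_1(D_1)\sim_{P_1} D_1$ and $k_1(D_1')\sim_{P_1} D_1'$ I get $[k_1(D_1),k_1(D_1')]\sim_{P_1} [D_1,D_1']$, and from $k_1(D_1)\sim_{P_2} 0$, $k_1(D_1')\sim_{P_2} 0$ I get $[k_1(D_1),k_1(D_1')]\sim_{P_2} 0$. As $k_1([D_1,D_1'])$ is characterized by exactly this same pair of relatedness data, uniqueness forces $[k_1(D_1),k_1(D_1')]=k_1([D_1,D_1'])$; this simultaneously shows $k_1$ is a Lie algebra morphism and that $[\Dr{L_1},\Dr{L_1}]\subseteq\Dr{L_1}$, and the same argument with $P_2$ handles $\Dr{L_2}$. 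For the mixed bracket, $k_1(D_1)\sim_{P_1} D_1$ while $k_2(D_2)\sim_{P_1} 0$, so $[k_1(D_1),k_2(D_2)]\sim_{P_1}[D_1,0]=0$, and symmetrically $[k_1(D_1),k_2(D_2)]\sim_{P_2} 0$; a derivation $P_1$- and $P_2$-related to zero is the zero derivation, whence $[\Dr{L_1},\Dr{L_2}]=0$.

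I expect the only genuine subtlety, and hence the step warranting care, to be the translation in the second paragraph: confirming that the direct-sum decomposition of Proposition \ref{DerLineProd} is exactly equivalent to the pair of relatedness conditions, i.e.\ that the component of $\tilde D$ in $p_i^*\Der L_i$, read as a section of $\Der L_i$ along $p_i$, coincides with $\Der P_i\circ\tilde D$. Once this identification is in place, every remaining assertion is a purely formal consequence of Propositions \ref{DerLineProd} and \ref{DerFunct}, requiring no local computation.
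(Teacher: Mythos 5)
Your proof is correct, and it is genuinely different from the one in the paper --- in fact it is exactly the route the paper mentions and then deliberately sets aside: the proof of Proposition \ref{DerivLineProd} opens by remarking that the statement \emph{can} be proved using the isomorphism of Proposition \ref{DerLineProd}, and then gives an ``independent proof that highlights the module structure of the sections of the line product'' instead. Your version makes that first route rigorous: you define $k_i(D_i)$ through the splitting $\Der(L_1\utimes L_2)\cong p_1^*\Der L_1\oplus p_2^*\Der L_2$, translate the definition into the relatedness conditions $k_1(D_1)\sim_{P_1}D_1$, $k_1(D_1)\sim_{P_2}0$ (and symmetrically for $k_2$), and then obtain all three bracket identities formally from the bracket-preservation of $B$-relatedness under the Lie algebroid morphisms $\Der P_i$ (Proposition \ref{DerFunct}), together with the uniqueness supplied by fibre-wise injectivity of the splitting map. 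The one subtlety you flag --- that the $p_i^*\Der L_i$-component of a derivation $\tilde D$ really is $\Der P_i\circ \tilde D$ --- is in fact immediate, since the isomorphism in the proof of Proposition \ref{DerLineProd} is \emph{defined} by $a\mapsto \Der P_1(a)\oplus \Der P_2(a)$; so nothing remains to check there. The paper's own proof instead pins down a derivation of the line product by its action on the spanning sections $P_i^*\Sec{L_i}$ and by its symbol's action on the ratio functions of Proposition \ref{SpanFuncProd}, writing explicit defining conditions for $\overline{D}_i=k_i(D_i)$. The trade-off: your argument is shorter and wholly formal, needing no local data and no mention of symbols; the paper's explicit conditions pay off downstream, where the concrete formulas for $k_i(D_i)$ on pull-back sections are what get used in the computations of Propositions \ref{JetLineProd} and \ref{JetLiftFact}.
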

\begin{proof}
This can be proved using the isomorphism in Proposition \ref{DerLineProd} above and the fact that sections of the der bundle are identified with derivations. However, we give an independent proof that highlights that module structure of the sections of the line product. Recall that sections of the line product $\Sec{L_1\utimes L_2}$ are spanned by the pull-backs $P^*_i\Sec{L_i}$ over the functions on the base product $\Cin{M_1\dtimes M_2}$. This means that a derivation $D$ is characterised by its action on projection pull-backs and the action of its symbol $X$ on spanning functions of the base product, which are the ratio functions defined in Proposition \ref{SpanFuncProd}. With this in mind, given derivations $D_i\in\Dr{L_i}$ we give the derivations on the line product $k_i(D_i)=\overline{D}_i\in\Dr{L_1\utimes L_2}$ determined uniquely by the conditions
\begin{align*}
    \overline{D}_1(P^*_1s_1)&=P^*D_1(s_1) &\overline{D}_1(P^*_2s_2)&=0\\
    \overline{D}_2(P^*_1s_1)&=0 &\overline{D}_2(P^*_2s_2)&=P^*_2D_2(s_2)
\end{align*}
and with symbols $\overline{X}_i\in\Sec{T(M_1\dtimes M_2)}$ defined on ratio functions by
\begin{align*}
    \overline{X}_1[\tfrac{s_1}{u_2}]=\tfrac{D_1(s_1)}{u_2}\\
    \overline{X}_2[\tfrac{s_2}{u_1}]=\tfrac{D_2(s_2)}{u_1}
\end{align*}
for all sections $s_i\in\Sec{L_i}$ and local non-vanishing sections $u_i\in\Sec{L_i^\times}$. The fact that $k_i$ are Lie algebra morphisms and that $[\overline{D}_1,\overline{D}_2]=0$ follows directly from the defining conditions above.
\end{proof}

In summary, we clearly see that the der functor for line bundles
\begin{equation*}
    \Der : \Line_\Man \to \Lie_\Man\subset \LVect
\end{equation*}
plays a categorical role entirely analogous to that of the the tangent functor for smooth manifolds.\newline

For a conventional manifold one defines the cotangent bundle as the vector bundle of point-wise differentials of functions, which is indeed dual to the tangent bundle $\Cot M\cong (\Tan M)^*$. Differentials of functions give sections of this bundle and one defines the exterior algebra of differential forms from the exterior derivative $d:\Cin{M}\to \Sec{\Cot M}$, which also induces the seminal identity of Cartan calculus on $M$: $X[f]=i_Xdf=\mathcal{L}_Xf$. When we consider a section of a line bundle $s\in\Sec{L}$ as the generalization of a unit-free function, the analogue of a point-wise differential is the 1-jet value $j_xs\in \Jet^1L$, thus hinting to the bundle of 1-jets as the analogue of the cotangent bundle for unit-free manifolds. It follows from our remarks about vector bundles of rank $1$ above that the \textbf{jet bundle} $\Jet^1 L$ is indeed the L-dual to the der bundle
\begin{equation*}
    \Jet^1 L=(\Der L)^*\otimes L =: (\Der L)^{*L}.
\end{equation*}
The action of a section of the der bundle $a\in\Sec{\Der L}$ on a section $s\in\Sec{L}$ as a local derivation can now be rewritten as
\begin{equation*}
    a[s]=j^1s(a)
\end{equation*}
where the jet prolongation $j^1:\Sec{L}\to \Sec{\Jet^1 L}$ gives the unit-free generalization of the exterior derivative. In fact, this map extends to a graded differential on the \textbf{der complex} of the line bundle $(\Sec{\wedge^\bullet(\Der L)^*\otimes L},d_L)$. Recall that any Lie algebroid with a representation on a vector bundle carries a natural Cartan calculus, in the case of the der bundle of a line bundle this is automatically the case with the identity map as the representation. These remarks confirm that the jet bundle of a line bundle is indeed the unit-free analogue of the cotangent bundle. The geometry of jet bundles will be extensively studied in section \ref{CanHamJ}.\newline

The symbol (\ref{diff1}) and Spencer (\ref{spncr1}) short exact sequences find a straightforward reformulation in the category of L-vector bundles. Note that, for any line bundle $\lambda:L\to M$, the anchor map $\delta:\Der L\to \Tan M$ is surjective by construction and elements of its kernel correspond to bundle endomorphisms induced from fibre-wise scalar multiplication, $\Ker{\delta}\cong \Real_M$. Then, regarding $\Der L$ and $\Tan M$ as L-vector bundles and $\delta$ as a L-vector bundle morphism in a trivial way, the symbol sequence is simply the short exact sequence of L-vector bundles induced by the fact that $\delta$ is surjective
\begin{equation*}
\begin{tikzcd}
0 \arrow[r] & \Real_M \arrow[r] & \Der L \arrow[r, "\delta"] & \Tan M \arrow[r] & 0.
\end{tikzcd}
\end{equation*}
This will be sometimes called the \textbf{der sequence} of the line bundle $L$. The Spencer sequence then corresponds precisely to the L-dual of the short exact sequence above
\begin{equation*}
\begin{tikzcd}
0 &  L \arrow[l] & \Jet^1L \arrow[l]& (\Tan M)^{*L} \arrow[l,"i"'] & 0 \arrow[l]
\end{tikzcd}
\end{equation*}
where $i=\delta^{L*L}$ is injective. We will refer to this sequence as the \textbf{jet sequence} of the line bundle $L$. We will use the notation $\Tan^{*L} M:= (\Tan M)^{*L}$ in analogy with the usual notation for cotangent bundles.\newline

We end this section by giving a summary of the results proved above for the particular case of trivial line bundles. Note that our discussion will also apply to general line bundles when restricted to trivializing neighbourhoods. Sections of a trivial line bundle $\Real_M$ are isomorphic to the smooth functions of the base $\Sec{\Real_M}\cong \Cin{M}$ with the module structure being simply point-wise multiplication. This implies that there is now a natural inclusion $\Dr{\Cin{M}}\subseteq\Dr{\Real_M}$ making the der short exact sequence split and thus giving an isomorphism of modules
\begin{equation*}
    \Dr{\Real_M} \cong \Sec{\Tan M}\oplus \Cin{M}.
\end{equation*}
The action of a derivation on a function $s\in \Sec{\Real_M}$, regarded as a section of the trivial bundle $\Real_M$, is given by
\begin{equation*}
    (X\oplus f)[s]=X[s]+fs.
\end{equation*}
From this it is easy to see that the der bundle is
\begin{equation*}
    \Der \Real_M \cong \Tan M \oplus \Real_M
\end{equation*}
with anchor $\delta=\Proj_1$ and Lie bracket bracket
\begin{equation*}
    [X\oplus f,Y\oplus g]=[X,Y]\oplus X[g]-Y[f].
\end{equation*}
Note how this Lie bracket is entirely determined by the fact that vector fields are the Lie algebra of derivations on functions. Taking $\Real_M$-duals corresponds to taking ordinary duals, therefore the jet bundle is
\begin{equation*}
    \Jet^1 \Real_M\cong \Cot M\oplus \Real_M.
\end{equation*}
The jet prolongation map then becomes
\begin{align*}
j^1: \Cin{M} & \to \Sec{\Cot M}\oplus \Cin{M}\\
s & \mapsto ds\oplus s,
\end{align*}
which indeed only carries the information of the ordinary exterior differential. The base product of two trivial line bundles $\Real_M$ and $\Real_N$ is
\begin{equation*}
    M\dtimes N\cong M\times N\times \Real^\times
\end{equation*}
and the line product is again a trivial line bundle
\begin{equation*}
    \Real_M\utimes \Real_N \cong \Real_{M\dtimes N}.
\end{equation*}
A factor between trivial line bundles $B:\Real_M\to \Real_N$ is given by a pair $B=(b,\beta)$ with $b:M\to N$ a smooth map and $\beta\in\Cin{M}$ a nowhere-vanishing function. We have explicitly
\begin{align*}
(b,\beta): \Real_M & \to \Real_N\\
(x,l) & \mapsto (b(x),\beta(x)l).
\end{align*}
Pull-backs then become
\begin{equation*}
    (b,\beta)^*s=\tfrac{1}{\beta}\cdot b^*s
\end{equation*}
for all $s\in\Cin{N}$. A simple computation shows that the der map of a factor $(b,\beta)$ gives a map of the form
\begin{align*}
\Der (b,\beta): \Tan M\oplus \Real_M & \to \Tan N\oplus \Real_N\\
v_x\oplus a_x & \mapsto \Tan_x b(v_x)\oplus a_x-\tfrac{1}{\beta(x)}d_x\beta(v_x).
\end{align*}
For a diffeomorphic factor $(b,\beta):\Real_M\to \Real_M$, i.e. when $b$ is a diffeomorphism, the der push-forward of a derivation is given by
\begin{equation*}
    (b,\beta)_*(X\oplus f)=b_* X \oplus b_*f +\beta\cdot X[\tfrac{1}{\beta}].
\end{equation*}

\subsection{Summary of Poisson and Jacobi Geometry} \label{JacGeo}

In preparation for our discussion on geometric mechanics we summarize the basic definitions and results of Poisson and Jacobi geometry. An excellent modern reference on the theory of Poisson manifolds is \cite{fernandes2014lectures}. We adopt L. Vitagliano's point of view that line bundles are the natural category where Jacobi structures are defined, see e.g. \cite{vitagliano2014deformations}, \cite{tortorella2017deformations} or \cite{vitagliano2015dirac}. Most of the results we prove here are the non-trivial counterparts of the coisotropic calculus for trivial Jacobi manifolds of \cite{ibanez1997coisotropic}.\newline

A \textbf{Poisson algebra} is a triple $(A,\cdot,\{,\})$ with $A$ a $\mathbb{R}$-vector space, $(A,\cdot)$ a commutative (unital, associative) algebra and $(A,\{,\})$ a Lie algebra such that the following \textbf{Leibniz rule} holds
\begin{equation*}
    \{a,b\cdot c\}=\{a,b\}\cdot c + b\cdot \{a,c\}.
\end{equation*}
Equivalently, this condition can be rephrased as
\begin{equation*}
    \text{ad}_{\{,\}}:A\to \Dr{A,\cdot}
\end{equation*}
making the adjoint map of the Lie algebra a homomorphism of Lie algebras from $A$ to the derivations of both bilinear products on $A$, $\Dr{A,\{,\}}\cap \Dr{A,\cdot}$.The algebra endomorphisms given by the adjoint map of the Lie bracket are usually called \textbf{Hamiltonian derivations} and are denoted by $X_a:=\{a,-\}$. A linear map $\psi:A\to B$ is called a \textbf{Poisson homomorphism} if $\psi:(A,\cdot)\to (B,\cdot)$ is a commutative algebra homomorphism and $\psi:(A,\{,\})\to (B,\{,\})$ is a Lie algebra homomorphism. A subspace $I\subseteq A$ is called a \textbf{coistrope} if it is a multiplicative ideal and a Lie subalgebra.\newline

A \textbf{Poisson manifold} is a smooth manifold $P$ whose commutative algebra of smooth functions has the structure of a Poisson algebra $(\Cin{P},\cdot,\{,\})$, we also use the term \textbf{Poisson structure} to refer to this datum. A \textbf{Poisson map} is a smooth map between Poisson manifolds whose pull-back on functions gives a morphism of Poisson algebras. Hamiltonian derivations become \textbf{Hamiltonian vector fields} in this context and are denoted by
\begin{equation*}
    f\in\Cin{P}\mapsto X_f=\{f,-\}\in\Sec{TP}.
\end{equation*}
It follows directly from this definition that the assignment of Hamiltonian vector fields is a $\mathbb{R}$-linear Lie algebra homomorphism,
\begin{equation*}
    X_{\{f,g\}}=[X_f,X_g].
\end{equation*}
The observation that derivations on functions are isomorphic to vector fields allows for a more geometric definition of a Poisson manifold as a pair $(P,\pi)$ with $\pi\in\Sec{\wedge^2TM}$ a bivector field, called the \textbf{Poisson bivector}, that satisfies an integrability condition\footnote{In the Schouten algebra of multivector fields extending the Lie algebra of vector fields $(\Sec{\wedge^\bullet TM},[,])$, a Poisson bivector must satisfy $[\pi,\pi]=0$.} which is equivalent to the Jacobi identity. The bracket on smooth functions then becomes
\begin{equation*}
    \{f,g\}_\pi:=\pi(df,dg).
\end{equation*}
Note that this bracket satisfies the Leibinz rule automatically by construction. The bivector gives a musical map
\begin{equation*}
    \pi^\sharp:T^*P\to TP
\end{equation*}
which is a vector bundle morphism covering the identity. The Hamiltonian vector field of a function $f\in\Cin{P}$ can now be given explicitly as
\begin{equation*}
    X_f=\pi^\sharp(df).
\end{equation*}
The map $X_{-}=\pi^\sharp\circ d:\Cin{P}\to\Sec{\Tan P}$ is usually called the \textbf{Hamiltonian map}. An important class of Poisson manifold are the \textbf{fibre-wise linear Poisson manifolds} defined as Poisson structures on the total space of a vector bundle $\epsilon:E\to M$ whose bracket is compatible with the sub-module of fibre-wise constant functions $\epsilon^*(\Cin{M})$ and fibre-wise linear functions $l(\Sec{E^*})$ in the following sense:
\begin{align*}
    \{l(\Sec{E^*}),l(\Sec{E^*})\} & \subseteq  l(\Sec{E^*})\\
    \{l(\Sec{E^*}),\epsilon^*(\Cin{M})\} & \subseteq  \epsilon^*(\Cin{M})\\
    \{\epsilon^*(\Cin{M}),\epsilon^*(\Cin{M})\} & = 0.
\end{align*}

A submanifold $i:C\hookrightarrow P$ is called a \textbf{coisotropic submanifold} of the Poisson manifold $(P,\pi)$ if $T_xC\subset (T_xP,\pi_x)$ is a coisotropic subspace for all $x\in C$, that is
\begin{equation*}
    \pi_x^\sharp (T_xC^0)\subset T_xC.
\end{equation*}
Denoting by $I_C:=\Ker{i^*}\subset \Cin{P}$ the vanishing ideal of the submanifold $C$ we formulate the following result giving alternative equivalent definitions of coisotropic submanifolds.
\begin{prop}
Let $i:C\hookrightarrow P$ be a closed submanifold of the Poisson manifold $(P,\pi)$, then the following are equivalent:
\begin{enumerate}
    \item $C$ is coisotropic,
    \item $I_C$ is a coisotrope in the Poisson algebra $(\Cin{P},\cdot,\{,\}_\pi)$,
    \item for all $c\in I_C$ the Hamiltonian vector field $X_c$ is tangent to $C$: $X_c|_C\in \Sec{TC}$.
\end{enumerate}
\end{prop}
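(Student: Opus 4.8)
The plan is to route all three equivalences through a single dictionary between the two faces of the Poisson structure: the musical map $\pi^\sharp$ acting on conormal covectors, and the bracket $\{,\}_\pi$ acting on the vanishing ideal. The first remark is that $I_C=\Ker{i^*}$ is automatically a multiplicative ideal of $(\Cin{P},\cdot)$, so the content of (2) collapses to the single requirement $\{I_C,I_C\}_\pi\subseteq I_C$, i.e. that $I_C$ be a Lie subalgebra. The two facts I would invoke repeatedly are the pointwise identity $X_c|_x=\pi_x^\sharp(d_xc)$ and the pairing formula $\{c,c'\}_\pi(x)=\langle d_xc',X_c|_x\rangle$, both immediate from $X_f=\pi^\sharp(df)$ and $\{f,g\}_\pi=\pi(df,dg)$. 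I would then establish (1)$\Leftrightarrow$(3) geometrically and (2)$\Leftrightarrow$(3) algebraically.

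For (1)$\Leftrightarrow$(3) I would first record the linear-algebra description of the conormal space: at each $x\in C$ one has $T_xC^0=\{d_xc\mid c\in I_C\}\subseteq T_x^*P$. The inclusion $\supseteq$ is immediate, since a function vanishing on $C$ has differential annihilating $T_xC$; the reverse inclusion is the technical heart of the argument and is addressed below. Granting it, the condition in (3) that $X_c|_C\in\Sec{TC}$ for every $c\in I_C$ reads, pointwise, as $\pi_x^\sharp(d_xc)\in T_xC$ for all $x\in C$ and all $c\in I_C$; as $c$ ranges over $I_C$ the covectors $d_xc$ exhaust $T_xC^0$, so this is exactly $\pi_x^\sharp(T_xC^0)\subseteq T_xC$, which is the defining condition (1).

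For (2)$\Leftrightarrow$(3) I would use the pairing formula. Assuming (3), for $c,c'\in I_C$ and $x\in C$ one has $\{c,c'\}_\pi(x)=\langle d_xc',X_c|_x\rangle$; here $X_c|_x\in T_xC$ by (3) while $d_xc'\in T_xC^0$, so the pairing vanishes, whence $\{c,c'\}_\pi\in I_C$ and $I_C$ is a Lie subalgebra, giving (2). Conversely, assuming (2), fix $x\in C$ and $c\in I_C$; every $\alpha\in T_xC^0$ is of the form $d_xc'$ for some $c'\in I_C$ by the conormal description, and then $\langle\alpha,X_c|_x\rangle=\{c,c'\}_\pi(x)=0$ because $\{c,c'\}_\pi\in I_C$ vanishes on $C$. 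Thus $X_c|_x$ annihilates all of $T_xC^0$, so by the double-annihilator identity $X_c|_x\in(T_xC^0)^0=T_xC$, which is (3).

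The step I expect to demand the most care is the spanning claim $T_xC^0\subseteq\{d_xc\mid c\in I_C\}$, since it is where the hypothesis that $C$ is a \emph{closed} embedded submanifold is genuinely used. Locally the claim is routine: in a slice chart adapted to $C$ the coordinate functions transverse to $C$ belong to $I_C$ and their differentials span the conormal space at $x$. The subtlety is that elements of $I_C$ are \emph{globally} defined functions on $P$, so one must promote these local transverse coordinates to global members of $I_C$; this is exactly where closedness of $C$ enters, allowing a bump function supported in the chart to cut off each local coordinate to a global smooth function that vanishes on all of $C$ while retaining the prescribed differential at $x$. Once this is secured the three characterizations are linked by purely formal manipulations, the Leibniz rule ensuring the compatibility of the multiplicative and Lie-theoretic structures required for the notion of coisotrope in (2).
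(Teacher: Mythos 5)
Your proof is correct, and a remark on the comparison is in order: the paper states this Poisson proposition \emph{without} proof in its preliminaries (section \ref{JacGeo}), quoting it as standard background; the proof it does write out is for the Jacobi analogue, Proposition \ref{CoisoJacMan}. Your argument is essentially the scalar specialization of that one, with the same architecture: condition (3) serves as the hub, the conormal data is identified with differentials of vanishing functions, and the bracket condition reduces to the pairing $\{c,c'\}_\pi(x)=\langle d_xc',X_c|_x\rangle$, which is consistent with the paper's conventions $X_f=\pi^\sharp(df)$ and $\{f,g\}_\pi=\pi(df,dg)$. Where the paper's Jacobi proof works module-theoretically, writing $\Gamma_S=I_S\cdot\Sec{L}$ in a local trivialization and computing $i^*\{f\cdot s,g\cdot s'\}$ via the L-bivector, you work pointwise with linear algebra ($T_xC^0$, the double annihilator $(T_xC^0)^0=T_xC$), which in the Poisson setting is cleaner and makes the spanning lemma $T_xC^0=\{d_xc\mid c\in I_C\}$ the only nontrivial input; your bump-function globalization of the slice coordinates establishes it correctly. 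One small misattribution: what the cutoff argument genuinely uses is that $C$ is \emph{embedded}, so that a slice chart can be chosen with $C\cap U$ equal exactly to the slice, making $\chi y^j$ vanish on all of $C$ (on $C\cap U$ because $y^j$ does, on $C\setminus U$ because $\chi$ is supported in $U$); closedness is not what makes the cutoff work, but rather the standard blanket hypothesis ensuring, e.g., that $I_C$ has $C$ as its exact zero locus and that the ideal-theoretic characterization of $C$ is faithful. This does not affect the validity of your argument, since closedness is assumed anyway.
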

Let two Poisson manifolds $(P_1,\pi_1)$ and $(P_2,\pi_2)$, we define the \textbf{product Poisson manifold} simply from the Cartesian product as $(P_1\times P_2,\pi_1+\pi_2)$, where the sum of bivectors is defined via the canonical isomorphism $\Tan(P_1\times P_2)\cong \Proj_1^*\Tan P_1\oplus \Proj_2^*\Tan P_2$. Equivalently, the product Poisson manifold structure on $P_1\times P_2$ is the Poisson bracket $\{,\}_{12}$ determined uniquely by the condition that the canonical projections
\begin{equation*}
\begin{tikzcd}
P_1 & P_1\times P_2 \arrow[l,"\Proj_1"']\arrow[r,"\Proj_2"] & P_2.
\end{tikzcd}
\end{equation*}
are Poisson maps, that is
\begin{equation*}
    \Proj_1^*\{f_1,g_1\}_1=\{\Proj_1^*f_1,\Proj_1^*g_1\}_{12} \qquad \Proj_1^*\{f_2,g_2\}_2=\{\Proj_2^*f_2,\Proj_2^*g_2\}_{12}
\end{equation*}
for all $f_1,g_1\in\Cin{P_1}$ and $f_2,g_2\in\Cin{P_2}$. For a given Poisson manifold $(P,\pi)$ let us denote the \textbf{opposite} Poisson manifold by $\overline{P}:=(P,-\pi)$. The proposition below gives a characterization of Poisson maps as \textbf{coisotropic relations}, defined as coisotropic submanifolds of the product Poisson manifold.
\begin{prop}
Let two Poisson manifolds $(P_1,\pi_1)$ and $(P_2,\pi_2)$, and a smooth map $\varphi:P_1 \to P_2$, then $\varphi$ is a Poisson map iff its graph
\begin{equation*}
\normalfont
    \Grph{\varphi}\subseteq P_1\times \overline{P_2}
\end{equation*}
is a coisotropic submanifold in the product Poisson manifold.
\end{prop}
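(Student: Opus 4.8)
The plan is to reduce the statement to the algebraic characterization of coisotropic submanifolds from the preceding proposition and then to a short computation with the product bracket. First I would observe that $\Grph{\varphi}$ is a closed embedded submanifold of $P_1\times\overline{P_2}$, being the image of the embedding $(\Id,\varphi):P_1\hookrightarrow P_1\times P_2$. By part (2) of that proposition, its coisotropy is equivalent to its vanishing ideal $I_{\Grph{\varphi}}=\Ker{(\Id,\varphi)^*}$ being a coisotrope in $(\Cin{P_1\times P_2},\cdot,\{,\}_{12})$, where $\{,\}_{12}$ is the product bracket assembled from $\pi_1$ and $-\pi_2$. Since the vanishing ideal of a closed submanifold is automatically a multiplicative ideal, the only substantive requirement is that $I_{\Grph{\varphi}}$ be a Lie subalgebra, that is $\{I_{\Grph{\varphi}},I_{\Grph{\varphi}}\}_{12}\subseteq I_{\Grph{\varphi}}$.

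Next I would identify generators of $I_{\Grph{\varphi}}$. Working in a chart with coordinate functions $y^j$ on $P_2$, the functions $G_j:=\Proj_2^* y^j-\Proj_1^*\varphi^* y^j$ have differentials transverse to the graph and cut it out, hence generate $I_{\Grph{\varphi}}$ locally; more invariantly one writes $G_g:=\Proj_2^* g-\Proj_1^*\varphi^* g$ for $g\in\Cin{P_2}$. Because the bracket obeys the Leibniz rule, the Lie-subalgebra condition follows as soon as $\{G_g,G_{g'}\}_{12}\in I_{\Grph{\varphi}}$ for all generators: expanding $a=\sum p_g G_g$ and $b=\sum q_{g'}G_{g'}$ and applying Leibniz, each resulting term either retains a factor $G_g$ or $G_{g'}$, and so lies in the ideal, or is a multiple of $\{G_g,G_{g'}\}_{12}$. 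This reduction is legitimate locally, which suffices since both coisotropy and the Poisson-map condition are local.

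The core computation is then immediate. Using that both projections are Poisson maps for $\{,\}_{12}$, that the mixed brackets $\{\Proj_1^*\,\cdot\,,\Proj_2^*\,\cdot\,\}_{12}$ vanish, and that $\overline{P_2}$ carries the bracket $-\{,\}_2$, bilinearity gives
\begin{equation*}
\{G_g,G_{g'}\}_{12}=\Proj_1^*\{\varphi^* g,\varphi^* g'\}_1-\Proj_2^*\{g,g'\}_2.
\end{equation*}
Pulling this function back along $(\Id,\varphi)$, and using $\Proj_1\circ(\Id,\varphi)=\Id$ together with $\Proj_2\circ(\Id,\varphi)=\varphi$, one obtains the function $\{\varphi^* g,\varphi^* g'\}_1-\varphi^*\{g,g'\}_2\in\Cin{P_1}$. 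Since a function lies in $I_{\Grph{\varphi}}$ exactly when it vanishes on the graph, it follows that $\{G_g,G_{g'}\}_{12}\in I_{\Grph{\varphi}}$ for all $g,g'$ if and only if $\{\varphi^* g,\varphi^* g'\}_1=\varphi^*\{g,g'\}_2$ for all $g,g'$, which is precisely the statement that $\varphi$ is a Poisson map. Together with the first two steps this establishes both implications simultaneously.

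I expect the main obstacle to be the bookkeeping of the second step rather than any genuine difficulty: one must verify that the $G_g$ really do generate the vanishing ideal of the graph (so that testing coisotropy on them is valid) and that the Leibniz reduction, though only manifestly valid in charts, propagates to the global conclusion via the locality of both conditions. The computation in the third step is routine once the product-bracket conventions are fixed, so the care truly lies in the ideal-theoretic reduction and in not dropping the opposite-sign convention on $\overline{P_2}$, which is exactly what forces the cancellation encoding the Poisson-map identity.
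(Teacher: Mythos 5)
Your proof is correct and takes essentially the paper's approach: the paper states this Poisson proposition without proof, but its proof of the Jacobi analogue (Proposition \ref{JacMapCoisoRel}) runs on exactly your strategy, namely generators of the vanishing ideal of the graph of the form $P_1^*B^*s-P_2^*s$, the bracket of two generators computed via the product structure with the opposite sign on the second factor, and the observation that this bracket lies in the ideal iff the pull-back is a bracket morphism. Your additional care about local generation of $I_{\Grph{\varphi}}$ and the Leibniz reduction to generators is sound and fills in precisely the bookkeeping the paper leaves implicit.
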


It follows from our discussion above that any coisotropic submanifold $C$ carries an involutive tangent distribution given by the images of the Hamiltonian vector fields of the (locally generating) elements of the vanishing ideal $I_C$. Let us denote this distribution by $X_{I_C}\subset TC$ and note that $[X_{I_C},X_{I_C}]\subset X_{I_C}$ from the fact that $I_C$ is a coisotrope. This means that there will be a (singular) foliation in $C$, denoted by $\mathcal{X}_C$, integrating the tangent distribution of Hamiltonian vector fields of vanishing functions. The following result gives the standard construction of what is known as \textbf{coisotropic reduction}.

\begin{prop}
Let $(P,\pi)$ be a Poisson manifold and $i:C\hookrightarrow P$ a closed coisotropic submanifold. Assume that the tangent distribution $X_{I_C}$ integrates to a regular foliation with smooth leaf space $C/\mathcal{X}_C$ in such a way that there is a surjective submersion
\begin{equation*}
    q:C\to P':=C/\mathcal{X}_C.
\end{equation*}
Then, the manifold $P'$ inherits a Poisson bracket on functions $(\Cin{P'},\{,\}')$ uniquely determined by the condition
\begin{equation*}
    q^*\{f,g\}'=i^*\{F,G\} \quad \forall f,g\in\Cin{P'}
\end{equation*}
and for all $F,G\in\Cin{P}$ leaf-wise constant extensions, i.e functions of the ambient Poisson manifold satisfying
\begin{equation*}
    q^*f=i^*F, \quad q^*g=i^*G.
\end{equation*}
\end{prop}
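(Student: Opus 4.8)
The plan is to prove the statement in three movements: uniqueness, well-definedness of the candidate bracket, and verification of the Poisson axioms. Since $q:C\to P'$ is a surjective submersion, the pullback $q^*:\Cin{P'}\to\Cin{C}$ is injective and identifies $\Cin{P'}$ with the subalgebra of leaf-wise constant functions on $C$, i.e.\ those $h\in\Cin{C}$ with $X_c|_C[h]=0$ for all $c\in I_C$ (equivalently $dh$ annihilates the distribution $X_{I_C}$ integrating $\mathcal{X}_C$). Injectivity of $q^*$ makes the defining identity $q^*\{f,g\}'=i^*\{F,G\}$ determine $\{f,g\}'$ unambiguously once we know the right-hand side descends, which yields uniqueness at once. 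Existence of the extensions $F,G$ is guaranteed because $C$ is closed: any smooth function on a closed submanifold extends to a smooth function on the ambient manifold.

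The crux is the following lemma, which I would isolate first: if $H\in I_C$ and $K\in\Cin{P}$ is any leaf-wise constant extension (so $i^*K=q^*k$ for some $k\in\Cin{P'}$), then $i^*\{H,K\}=0$. To see this, write $\{H,K\}=X_H[K]$. Because $H\in I_C$ and $C$ is coisotropic, the earlier characterization of coisotropic submanifolds gives $X_H|_C\in\Sec{TC}$, and by definition $X_H|_C$ is a section of the distribution $X_{I_C}$. Tangency lets us restrict, $i^*(X_H[K])=X_H|_C[i^*K]$, and since $i^*K$ is leaf-wise constant its differential annihilates $X_{I_C}$, so the expression vanishes.

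With the lemma in hand, well-definedness follows in two steps, both using it twice over. First, to check that $i^*\{F,G\}$ is leaf-wise constant (hence lies in the image of $q^*$), I would compute $i^*\{c,\{F,G\}\}$ for $c\in I_C$: expanding by the Jacobi identity as $\{\{c,F\},G\}+\{F,\{c,G\}\}$ and noting that the lemma gives $\{c,F\},\{c,G\}\in I_C$, a second application of the lemma kills both terms. Second, independence of the choice of extensions: replacing $F$ by $F'$ with $i^*F=i^*F'$ means $F-F'\in I_C$, so $i^*\{F-F',G\}=0$ by the lemma (and symmetrically in $G$), whence $\{f,g\}'$ does not depend on the lifts. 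Bilinearity and antisymmetry are then inherited verbatim from $\{,\}$ through the injective $q^*$.

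Finally I would verify the Leibniz and Jacobi identities by lifting. For Leibniz, if $F,G,H$ extend $f,g,h$ then $GH$ extends $gh$, and applying the Leibniz rule of $\{,\}$ on $P$ and restricting to $C$ gives $q^*\{f,gh\}'=q^*(\{f,g\}'h+g\{f,h\}')$, whence the identity by injectivity of $q^*$. For Jacobi, the key observation is that $\{G,H\}$ is itself a leaf-wise constant extension of $\{g,h\}'$, precisely the descent established above, so $q^*\{f,\{g,h\}'\}'=i^*\{F,\{G,H\}\}$; summing the three cyclic terms and invoking the Jacobi identity of $\{,\}$ on $P$ makes the total vanish, giving Jacobi for $\{,\}'$. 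The main obstacle throughout is the lemma and the attendant well-definedness argument, since that is exactly where coisotropy of $C$ — the tangency to $C$ of the Hamiltonian vector fields of $I_C$ — must be used to guarantee both that the reduced bracket lands among the leaf-wise constant functions and that it is insensitive to the choice of extension.
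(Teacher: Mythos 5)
Your proof is correct. Note that the paper states this proposition without proof --- it appears in the preliminaries as the standard construction of coisotropic reduction --- so there is no in-paper argument to compare against; yours is precisely the standard algebraic argument. The key lemma ($H\in I_C$ and $K$ a leaf-wise constant extension imply $i^*\{H,K\}=0$, via tangency of $X_H$ to $C$ from the paper's preceding characterization of coisotropic submanifolds, which also legitimizes the restriction step $i^*(X_H[K])=X_H|_C[i^*K]$) correctly does double duty: combined with the Jacobi identity it gives descent of $i^*\{F,G\}$ to the leaf space, and directly it gives independence of the choice of extensions; the Leibniz and Jacobi identities then lift through the injective $q^*$ exactly as you describe, with the observation that $GH$ extends $gh$ and $\{G,H\}$ extends $\{g,h\}'$ supplying the needed extensions. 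Two small points you leave implicit are routine but worth stating: leaf-wise constancy of $i^*\{F,G\}$ yields a well-defined function on $P'$ because the leaves of $\mathcal{X}_C$ (the fibres of $q$) are connected, and smoothness of the descended function uses local sections of the submersion $q$.
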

A particular case of coistropic reduction is given by a \textbf{Hamiltonian group action}: this is the situation when a Lie group acts on a Poisson manifold $G\Acts (P,\pi)$ via Poisson maps and with infinitesimal generators $\psi:\mathfrak{g}\to \Sec{\Tan P}$ given by Hamiltonian vector fields via a (co)moment map $\overline{\mu}:\mathfrak{g}\to \Cin{M}$ with the defininfg conditions:
\begin{equation*}
    \psi(\xi)=X_{\overline{\mu}(\xi)}, \qquad \{\overline{\mu}(\xi),\overline{\mu}(\zeta)\}_\pi=\overline{\mu}([\xi,\zeta]) \qquad \forall \xi,\zeta\in \mathfrak{g}.
\end{equation*}
The map functional dual map $\mu:P\to \mathfrak{g}^*$ is called the \textbf{moment map}. When $0\in\mathfrak{g}^*$ is a regular value, $\mu^{-1}(0)\subseteq P$ is easily checked to be a coisotropic submanifold. If the Poisson action restricts to a free and proper action on $\mu^{-1}(0)$ then the orbit space $\mu^{-1}(0)/G$ is the coisotropic reduction of $(P,\pi)$. This is called \textbf{Hamiltonian reduction}.\newline

A \textbf{symplectic manifold} is a Poisson manifold with non-degenerate bivector, thus equivalently determined by a non-degenerate 2-form, $(S,\omega$). The integrability condition for the non-degenerate bivector is equivalent to $d\omega=0$. All the notions introduced above for Poisson manifolds apply to symplectic manifolds but the non-degeneracy requirement makes symplectic manifolds a great deal more rigid than general Poisson manifolds. Firstly, it is a direct consequence of elementary linear algebra that symplectic manifolds must be even dimensional and that \textbf{symplectomorphisms}, smooth maps preserving the 2-forms, are necessarily (local) diffeomorphisms. The musical map now has an inverse
\begin{equation*}
    \omega^\flat :\Tan S\to \Cot S
\end{equation*}
and the Hamiltonian vector fields satisfy the following defining identities
\begin{equation*}
    \omega^\flat(X_f)=df, \qquad \omega(X_f,X_g)=\{f,g\}_\omega \qquad \forall f,g\in\Cin{S}
\end{equation*}
In addition to coisotropic submanifolds, \textbf{isotropic submanifolds} can now be defined as submanifolds with vanishing restriction of the 2-form. Of particular relevance for our discussion in this paper are \textbf{Lagrangian submanifolds} which are submanifolds that are both isotropic and coisotropic, or equivalently, coisotropic submanifolds of dimension half of that of the ambient symplectic manifold. The notions of coisotropic reduction and Hamiltonian reduction apply verbatim to the case of symplectic manifolds.\newline

In light of our discussion in section \ref{CatLineB}, Jacobi geometry is naturally motivated as the unit-free analogue to Poisson geometry. A \textbf{Jacobi manifold} or \textbf{Jacobi structure} is defined as a line bundle $\lambda:L\to M$, indeed an object in the category $\Line_\Man$, whose space of sections carries a local Lie algebra structure $(\Sec{L},\{,\})$. A local Lie bracket is such that each of its arguments acts as a differential operator of degree at most $1$, therefore, since for a line bundle we have $\Diff_1(L)=\Dr{L}$, the locality condition is tantamount to the adjoint map of the Lie bracket mapping into derivations
\begin{equation*}
    \text{ad}_{\{,\}}:\Sec{L}\to \Dr{L}.
\end{equation*}
Note that this is entirely analogous to the Leibniz condition of a Poisson bracket where the smooth functions are replaced by the sections of the line bundle and vector fields are replaced by derivations. Since the symbol of any derivation defines a vector field, the analogue of the Hamiltonian map for a Jacobi manifold becomes a pair of $\Real$-linear maps
\begin{align*}
\Delta: \Sec{L} & \to \Dr{L}\\
u & \mapsto \Delta_u:=\{u,-\},
\end{align*}
\begin{align*}
X: \Sec{L} & \to \Sec{\Tan M}\\
u & \mapsto X_u:=\delta_*(\Delta_u).
\end{align*}
For a section $u\in\Sec{L}$, $\Delta_u$ is called its \textbf{Hamiltonian derivation} and $X_u$ its \textbf{Hamiltonian vector field}. As for any derivation of a line bundle, the Hamiltonian vector field map is uniquely determined by the following Leibniz identity
\begin{equation*}
    \Delta_u(f\cdot v)=f\cdot\Delta_u(v)+X_u[f]\cdot v,
\end{equation*}
which must hold for all sections $u,v\in\Sec{L}$ and smooth functions $f\in\Cin{M}$. The following proposition establishes these as the Jacobi analogue of the Hamiltonian map.
\begin{prop}[Hamiltonian Maps of a Jacobi Structure]\label{HamMapsJac}
The Hamiltonian maps $\Delta$ and $X$ are Lie algebra morphisms, i.e.
\begin{equation*}
    \Delta_{\{u,v\}}=[\Delta_u,\Delta_v] \qquad X_{\{u,v\}}=[X_u,X_v]
\end{equation*}
for all $u,v\in\Sec{L}$.
\end{prop}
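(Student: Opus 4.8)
The plan is to prove the two identities in sequence, obtaining the second as a formal consequence of the first together with the Lie algebroid structure of $\Der L$ already established above. For the derivation map $\Delta$, I would first observe that $\Delta_u = \{u,-\}$ is precisely the adjoint representation of the local Lie algebra $(\Sec{L},\{,\})$, so that the identity $\Delta_{\{u,v\}}=[\Delta_u,\Delta_v]$ ought to be a direct transcription of the Jacobi identity. Concretely, I would evaluate both operators on an arbitrary section $w\in\Sec{L}$,
\begin{equation*}
\Delta_{\{u,v\}}(w)=\{\{u,v\},w\}, \qquad [\Delta_u,\Delta_v](w)=\{u,\{v,w\}\}-\{v,\{u,w\}\},
\end{equation*}
and note that these agree exactly because the Jacobi identity for $\{,\}$ reads $\{\{u,v\},w\}=\{u,\{v,w\}\}-\{v,\{u,w\}\}$ (using antisymmetry). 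Since $w$ is arbitrary, the operator identity follows. This step is purely algebraic and should present no difficulty; the only thing to check is that all three brackets land in the correct module, which is guaranteed by locality, i.e. $\Delta_u\in\Dr{L}$.

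For the Hamiltonian vector field map $X$, I would exploit the factorization $X=\delta_*\circ\Delta$, where $\delta_*:\Dr{L}\to\Sec{\Tan M}$ is the map on derivations induced by the anchor of the der bundle. Since $(\Der L,\delta,[,])$ is a Lie algebroid, its anchor induces a Lie algebra morphism on sections, $\delta_*([D,D'])=[\delta_*D,\delta_*D']$. Composing this with the first identity then yields
\begin{equation*}
X_{\{u,v\}}=\delta_*(\Delta_{\{u,v\}})=\delta_*([\Delta_u,\Delta_v])=[\delta_*\Delta_u,\delta_*\Delta_v]=[X_u,X_v],
\end{equation*}
which is the desired conclusion.

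The main obstacle, and the only substantive point, is the claim that the symbol map $\delta_*$ is bracket-preserving; everything else is formal. If one prefers a self-contained argument that does not simply invoke the Lie algebroid axioms for $\Der L$, I would instead verify this directly by computing the action of $[\Delta_u,\Delta_v]$ on a product $f\cdot w$, with $f\in\Cin{M}$, using repeatedly the defining Leibniz identity $\Delta_u(f\cdot v)=f\cdot\Delta_u(v)+X_u[f]\cdot v$. Expanding $\Delta_u\Delta_v(f\cdot w)$ and $\Delta_v\Delta_u(f\cdot w)$ and subtracting, the four first-order cross terms involving $X_u[f]$ and $X_v[f]$ cancel pairwise, leaving
\begin{equation*}
[\Delta_u,\Delta_v](f\cdot w)=f\cdot[\Delta_u,\Delta_v](w)+[X_u,X_v][f]\cdot w.
\end{equation*}
Comparing this with the Leibniz identity for the degree-one operator $[\Delta_u,\Delta_v]=\Delta_{\{u,v\}}$ and invoking the uniqueness of the symbol of a first-order operator on a line bundle, one reads off $X_{\{u,v\}}=[X_u,X_v]$. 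The crux of this alternative route is precisely the pairwise cancellation of the cross terms, which encodes the fact that the symbol sees only the leading order of the composite operator.
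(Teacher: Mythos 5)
Your proof is correct, and the first identity is handled exactly as in the paper: $\Delta_{\{u,v\}}=[\Delta_u,\Delta_v]$ is just the Jacobi identity read off on an arbitrary third section. For the second identity, however, you depart from the paper's route. The paper expands the single triple bracket $\{u,\{v,f\cdot w\}\}$ twice --- once applying the Jacobi identity first and then Leibniz, once applying Leibniz alone --- and equates the two resulting expressions, so that the surviving discrepancy $X_{\{u,v\}}[f]\cdot w + X_vX_u[f]\cdot w = X_uX_v[f]\cdot w$ delivers the claim. Your Route B is the same computation repackaged more transparently: you compute the commutator
\begin{equation*}
[\Delta_u,\Delta_v](f\cdot w)=f\cdot[\Delta_u,\Delta_v](w)+[X_u,X_v][f]\cdot w,
\end{equation*}
with the pairwise cancellation of the four cross terms playing the role of the paper's matching of lower-order terms, and then you conclude by the uniqueness of the symbol --- which is legitimate here since the paper defines a derivation with the clause $\exists!\, X_D$, and the commutator of first-order operators is again first order by the filtration $[\Diff_1(L),\Diff_1(L)]\subseteq\Diff_1(L)$. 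Your Route A is genuinely different and shorter: it factors $X=\delta_*\circ\Delta$ and invokes the fact that the anchor of the Lie algebroid $(\Der L,\delta,[,])$ preserves brackets on sections. This is a clean structural argument, but note that it outsources the real content to the earlier assertion (stated in the paper as ``straightforward to check'') that $\Der L$ is a Lie algebroid with anchor the symbol map; the bracket-compatibility of that anchor is proved by precisely the cancellation computation of your Route B, so Route A buys brevity and conceptual clarity at the cost of circularity if one has not independently verified the algebroid axioms, while Route B (like the paper's argument) is self-contained given only the Leibniz identity and the Jacobi identity.
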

\begin{proof}
The Lie algebra morphism condition for $\Delta$ follows directly from the Jacobi identity of the bracket. To show that $X$ is a Lie algebra morphism we consider the triple bracket $\{u,\{v,f\cdot w\}\}$. We can expand the expression by using the Jacobi identity of the Lie bracket first and then applying the Leibniz property repeatedly to obtain
\begin{equation*}
   \{u,\{v,f\cdot w\}\}=f\cdot\{u,\{v,w\}\}+X_{u}[f]\cdot \{v,w\} + X_{v}[f]\cdot \{u,w\} + X_{\{u,v\}}[f]\cdot w + X_{v}X_{u}[f]\cdot w,
\end{equation*}
or we can expand only using the Leibinz property obtaining
\begin{equation*}
   \{u,\{v,f\cdot w\}\}=f\cdot\{u,\{v,w\}\}+X_{u}[f]\cdot \{v,w\} + X_{v}[f]\cdot \{u,w\} + X_{u}X_{v}[f]\cdot w.
\end{equation*}
Since both expressions must agree, it follows that
\begin{equation*}
   X_{\{u,v\}}[f]\cdot w + X_{v}X_{u}[f]\cdot w=X_{u}X_{v}[f]\cdot w
\end{equation*}
which must hold for all sections $u,v,w\in\Sec{A}$ and functions $f\in\Cin{M}$, thus implying the desired Lie algebra homomorphism condition.
\end{proof}
The fact that the Lie bracket $\{,\}$ is a derivation on each argument allows us to write
\begin{equation*}
    \{u,v\}=\Tilde{\Lambda}(j^1u,j^1v) \quad \text{ with }\quad  \Tilde{\Lambda}\in \Sec{\wedge^2(\Jet^1L)^*\otimes L}.
\end{equation*}
We thus see that the bilinear form $\Tilde{\Lambda}$ is the analogue of the Poisson bivector and we appropriately call it the \textbf{Jacobi biderivation}. Noting that $(\Jet^1L)^*\otimes L=((\Der L)^{*L})^{*L}\cong \Der L$, the biderivation induces a musical map
\begin{equation*}
    \Tilde{\Lambda}^\sharp:\Jet^1 L \to \Der L
\end{equation*}
which allows us to rewrite the Hamiltonian derivation of a section $u\in\Sec{L}$ as
\begin{equation*}
    \Delta_u = \Tilde{\Lambda}^\sharp (j^1u).
\end{equation*}
This musical map connects the, otherwise disconnected, symbol and Spencer short exact sequences of vector bundles in the following way
\begin{equation*}
\begin{tikzcd}
0 \arrow[r] &  \Real_M \arrow[r] & \Der L \arrow[r, "\delta"]  & \Tan M \arrow[r]  & 0 \\
0 &  L \arrow[l] & \Jet^1L  \arrow[u, "\Tilde{\Lambda}^\sharp"] \arrow[l]& \Tan^{*L} M \arrow[u,"\Lambda^\sharp"] \arrow[l,"i"] & 0 \arrow[l] 
\end{tikzcd}
\end{equation*}
where we have defined the bundle map $\Lambda^\sharp$ so that the diagram commutes. By construction, we see that this bundle map is indeed a musical map for the bilinear form defined as the fibre-wise pull-back of the Jacobi biderivation via the injective map of the Spencer sequence: $\Lambda=i^*\Tilde{\Lambda}\in\Sec{\wedge^2(\Tan^{*L} M)\otimes L}$. We call this form the \textbf{Jacobi L-bivector}. These bilinear forms should be interpreted as the symbols of the Hamiltonian maps when regarded as differential operators themselves since they indeed control the interaction of the Lie bracket with the $\Cin{M}$-module structure via the following identities
\begin{align*}
    \Delta_{f\cdot u}&=f\cdot \Delta_u+\Tilde{\Lambda}^\sharp(i(df\otimes u))\\
    X_{f\cdot u}&=f\cdot X_u+\Lambda^\sharp(df\otimes u).
\end{align*}
where $f\in\Cin{M}$ and $u,v\in\Sec{L}$.\newline

In some situations it will be convenient to have a characterization of a Jacobi structure in terms of the Lie bracket and the Hamiltonian maps seen as $\Real$-linear maps satisfying some compatibility conditions with the module structure of sections. The next proposition gives this characterization.
\begin{prop}[Extension by Symbol]\label{ExtBySymb}
Let $\lambda:L\to M$ be a line bundle and let $\Sigma\subseteq \Sec{L}$ be a subspace of spanning sections such that $\Sec{L}\cong \Cin{M}\cdot \Sigma$. The datum of a Jacobi structure on $L$, i.e. a local Lie algebra $(\Sec{L},\{,\})$, is equivalent to a triple $((\Sigma,[,]),X,\Lambda)$ where
\begin{itemize}
\normalfont
    \item $(\Sigma,[,])$ is a $\Real$-linear Lie bracket,
    \item $X:\Sec{L}\to \Sec{\Tan M}$ is a $\Real$-linear map and,
    \item $\Lambda\in\Sec{\wedge^2\Tan^{*L} M\otimes L}$ inducing a vector bundle morphism $\Lambda^\sharp:\Tan^{*L} M\to \Tan M$
\end{itemize}
satisfying the compatibility conditions
\begin{align*}
    [X_s,X_{s'}]&=X_{[s,s']}\\
    X_{f\cdot s}&=f\cdot X_s+\Lambda^\sharp(df\otimes s) \\
    [X_s,\Lambda^\sharp(df\otimes s')]&=\Lambda^\sharp(dX_{s}[f]\otimes s' + df\otimes [s,s'])
\end{align*}
for all $f\in\Cin{M}$ and $s,s'\in\Sigma$. The equivalence is realized by imposing
\begin{align*}
    \{s,s'\}&:=[s,s'] \\
    \{s,f\cdot s'\}&:=f\cdot \{s,s'\} + X_{s}[f]\cdot s'
\end{align*}
which uniquely determine a local Lie bracket $(\Sec{L},\{,\})$.
\end{prop}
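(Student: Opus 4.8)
The statement is an equivalence between two packages of data, so the plan is to exhibit two mutually inverse constructions and check that each is well defined. I would split the argument into the ``easy'' forward direction (a genuine bracket produces the triple) and the substantive backward direction (the triple reconstructs a unique local Lie bracket).

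\emph{Forward direction.} Given a local Lie algebra $(\Sec{L},\{,\})$, I would set $[,]:=\{,\}|_{\Sigma\times\Sigma}$, let $X$ be the Hamiltonian vector field map $u\mapsto X_u$, and take $\Lambda:=i^*\Tilde{\Lambda}$ to be the Jacobi $L$-bivector. The first condition, $X_{[s,s']}=[X_s,X_{s'}]$, is exactly Proposition \ref{HamMapsJac} restricted to $\Sigma$; the second, $X_{f\cdot s}=f\cdot X_s+\Lambda^\sharp(df\otimes s)$, is the symbol identity for the Hamiltonian vector field map recorded immediately before the statement. The third I would derive by applying $X$ to the defining Leibniz expansion $\{s,f\cdot s'\}=f\{s,s'\}+X_s[f]\cdot s'$ and comparing it with $X_{\{s,f\cdot s'\}}=[X_s,X_{f\cdot s'}]$ (Proposition \ref{HamMapsJac} again); after substituting the second condition on both sides and cancelling the common first-order terms, precisely $[X_s,\Lambda^\sharp(df\otimes s')]=\Lambda^\sharp(dX_s[f]\otimes s'+df\otimes[s,s'])$ survives.

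\emph{Backward direction.} Given the triple, the two displayed formulas determine \emph{at most} one bracket, because $\Sec{L}\cong\Cin{M}\cdot\Sigma$ lets me reduce any bracket to brackets of elements of $\Sigma$; this yields uniqueness. For existence I would build the bracket locally, to sidestep the ambiguity coming from $\Cin{M}$-relations inside $\Sigma$: since $\Sigma$ spans and a line bundle admits local nonvanishing sections, near any point some $s_0\in\Sigma$ is nonvanishing, so on a trivializing neighbourhood every section is uniquely $f\cdot s_0$, and the formulas read off an explicit expression for $\{f s_0,g s_0\}$ in terms of $X_{s_0}$ and $\Lambda^\sharp$, manifestly free of representation ambiguity. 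Antisymmetry then follows from the skew-symmetry of $\Lambda\in\Sec{\wedge^2\Tan^{*L}M\otimes L}$ together with antisymmetry of $[,]$ on $\Sigma$, and locality --- that each slot is a derivation --- is built into the Leibniz shape of the second formula, so $\{s,-\}=:\Delta_s$ lies in $\Dr{L}$ by construction.

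The main obstacle, as expected, is the Jacobi identity, which I would prove by the rigidity of derivations of a line bundle rather than by brute expansion. Jacobi is equivalent to $\Delta_{\{u,v\}}=[\Delta_u,\Delta_v]$ for all sections; both sides are derivations, so they coincide once their symbols agree and once they agree on the spanning set $\Sigma$. Here I would use that $\text{End}(L)\cong\Real_M$, so two derivations with equal symbol differ by multiplication by a function, and a function annihilating all of $\Sigma$ must vanish since $\Cin{M}\cdot\Sigma=\Sec{L}$. Agreement on $\Sigma$-triples is exactly the Jacobi identity of $[,]$ on $\Sigma$ combined with the first condition, whereas agreement of the symbols for general arguments $u=f\cdot s$, $v=g\cdot s'$ is where the third condition does the work: expanding $[X_u,X_v]$ and $X_{\{u,v\}}$ via the second condition, all terms cancel except the $\Lambda^\sharp$-cross-terms, whose equality is precisely the third condition. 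Finally I would check that the locally constructed bracket reproduces the prescribed $[,]$, $X$ and $\Lambda$ and that the local pieces glue (forced by the uniqueness already established), and that the two constructions are mutually inverse, completing the equivalence.
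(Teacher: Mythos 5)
Your forward direction is correct and coincides with the paper's (which compresses it to ``it follows by direct computation''), and your local-trivialization handling of well-definedness is in fact more explicit than the paper's, which simply asserts that the brackets $\{f\cdot s,g\cdot s'\}$ determine everything; note only that your deferred final ``check that the locally constructed bracket reproduces the prescribed $[,]$'' is not a formality, since it is exactly where $\Cin{M}$-linear relations inside $\Sigma$ could bite. The genuine gap is in your Jacobi-identity step. The rigidity frame is sound: $\Delta_{\{u,v\}}=[\Delta_u,\Delta_v]$ follows once the symbols agree \emph{and} the two derivations agree on $\Sigma$. But your division of labour does not close. You verify agreement on $\Sigma$ only for bare pairs $u,v\in\Sigma$ (where it is indeed the Jacobi identity of $[,]$); for a general pair $u=f\cdot s$, $v=g\cdot s'$, agreement on $w\in\Sigma$ is the identity $\{\{f\cdot s,g\cdot s'\},w\}=\{f\cdot s,\{g\cdot s',w\}\}-\{g\cdot s',\{f\cdot s,w\}\}$, which does \emph{not} follow from Jacobi on $\Sigma$ plus the first condition --- it is precisely the computation the paper performs (the expansion of $\{w,\{f\cdot u,g\cdot v\}\}$), and it is there, with one slot kept bare in $\Sigma$, that the third condition is actually consumed.

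Your symbol check for general pairs also fails as stated. Expanding $[X_{f\cdot s},X_{g\cdot s'}]=[fX_s+\Lambda^\sharp(df\otimes s),\,gX_{s'}+\Lambda^\sharp(dg\otimes s')]$ produces, besides terms matchable against the third condition, the commutator $[\Lambda^\sharp(df\otimes s),\Lambda^\sharp(dg\otimes s')]$, while expanding $X_{\{f\cdot s,g\cdot s'\}}$ via the second condition produces terms of the shape $\Lambda^\sharp\bigl(d(\Lambda^\sharp(df\otimes s)[g])\otimes s'\bigr)$; their equality is an integrability statement of $[\Lambda,\Lambda]$-type that is \emph{not} an instance of any of the three listed conditions, so the claimed ``all terms cancel except the $\Lambda^\sharp$-cross-terms, whose equality is precisely the third condition'' is false for two general arguments. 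This is exactly why the paper never confronts a general-by-general pair: its single non-trivial check keeps one argument a bare element of $\Sigma$, so only commutators of the shape $[X_w,\Lambda^\sharp(df\otimes u)]$ --- the shape of the third condition --- ever arise. The repair is to reorganize in the paper's order: first prove the Jacobi identity on triples $(w,f\cdot u,g\cdot v)$ with $w\in\Sigma$, where the expansion reduces exactly to the third condition; from these triples deduce, by the argument of Proposition \ref{HamMapsJac} and the identity $J(u,v,h\cdot w)=h\,J(u,v,w)+\bigl(X_{\{u,v\}}-[X_u,X_v]\bigr)[h]\cdot w$ for the Jacobiator $J$, the vanishing of the symbol defect for pairs with one bare entry; and only then bootstrap, via total antisymmetry of $J$, to arbitrary triples, so that the double-$\Lambda^\sharp$ commutator is controlled by the already-established partial Jacobi identity rather than invoked as a consequence of the third condition directly.
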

\begin{proof}
Since $\Sigma$ is a subspace of spanning sections, a Jacobi bracket will be completely determined by specifying brackets of the form $\{f\cdot s,g\cdot s'\}$ for all $f,g\in\Cin{M}$ and $s,s'\in\Sigma$. It follows by direct computation that the Hamiltonian vector field $X$ and the L-bivector $\Lambda$ of a Jacobi structure satisfy the compatibility conditions in the proposition and, in virtue of these, the spanning brackets expand as
\begin{equation*}
    \{f\cdot s,g\cdot s'\}=fg\cdot\{s,s'\}+fX_s[s]\cdot s' - gX_{s'}[f]\cdot s +\Lambda^\sharp(df\otimes s)[g]\cdot s'.
\end{equation*}
Note that antisymmetry of $\Lambda$ implies that
\begin{equation*}
    \Lambda^\sharp(df\otimes s)[g]\cdot s'=-\Lambda^\sharp(dg\otimes s')[f]\cdot s
\end{equation*}
Now, by setting $\{s,s'\}=[s,s']$ it is clear that the triple $((\Sigma,[,]),X,\Lambda)$ determines an antisymmetric bracket $(\Sec{L},\{,\})$, thus we are only left with showing that it does indeed satisfy the Jacobi identity. Assuming the Jacobi identity for the $\Real$-linear bracket $[,]$, this is shown to be equivalent to the second compatibility condition
\begin{equation*}
    [X_u,\Lambda^\sharp(df\otimes v)]=\Lambda^\sharp(dX_u[f]\otimes v + df\otimes \{u,v\}).
\end{equation*}
The only non-trivial check corresponds to the triple bracket $\{w,\{f\cdot u,g\cdot v\}\}$ whose Jacobi identity is
\begin{equation*}
    \{w,\{f\cdot u,g\cdot v\}\}=\{\{w,f\cdot u\},g\cdot v\}+\{f\cdot u,\{w,g\cdot v\}\}.
\end{equation*}
We can expand both sides using the defining identities of $X$ and $\Lambda$ and, simplifying whenever possible with the Jacobi identity of the bracket, we obtain the following identity
\begin{align*}
& X_w[g]X_v[f]\cdot u - gX_{\{v,w\}}[f]\cdot u -X_w[f]X_u[g]\cdot v - fX_{\{w,u\}}[g]\cdot v = \\
& \Lambda^\sharp(df\otimes u)[g]\cdot \{w,v\} +X_w[\Lambda^\sharp(df\otimes u)[g]]\cdot v -[X_w,X_{f\cdot u}][g]\cdot v + [X_w,X_{g\cdot v}][f]\cdot u
\end{align*}
Then, further expanding by the defining identities of $X$ and crucially using the $\Cin{M}$-linearity of $\Lambda^\sharp$, after simplifications we get
\begin{equation*}
    X_w[\Lambda^\sharp(df\otimes u)[g]]\cdot v -\Lambda^\sharp(df\otimes \{w,u\})[g]\cdot v = \Lambda^\sharp(dX_w[f]\otimes u)[g]\cdot v - \Lambda^\sharp(dX_w[g]\otimes v)[f]\cdot u.
\end{equation*}
We can use the antisymmetry of $\Lambda$ to rewrite the last term and make the identity into $\Cin{M}$-linear combination of terms in $v$, thus obtaining the desired result
\begin{equation*}
    [X_w,\Lambda^\sharp(df\otimes u)][g]\cdot v = \Lambda^\sharp(dX_w[f]\otimes u + df\otimes \{w,u\})[g]\cdot v.
\end{equation*}
\end{proof}

Analogously to the linear Poisson structures defined on vector bundles we define the corresponding notion for Jacobi structures on L-vector bundles. Let $\epsilon:E^L\to M$ be a L-vector bundle and consider the pull-back line bundle over the total space $L_E:=\epsilon^*L$. We identify the fibre-wise constant sections as $\epsilon^*:\Sec{L}\hookrightarrow \Sec{L_E}$ and the fibre-wise linear sections as $l:\Sec{E^{*L}}\hookrightarrow \Sec{L_E}$, which form a submodule of spanning sections for $\Sec{L_E}$. A \textbf{fibre-wise linear Jacobi structure} is defined as a Jacobi structure $(\Sec{L_E},\{,\})$ that is completely determined by its restriction to spanning sections in the following way
\begin{align*}
    \{l(\Sec{E^{*L}}),l(\Sec{E^{*L}})\} & \subseteq  l(\Sec{E^{*L}})\\
    \{l(\Sec{E^{*L}}),\epsilon^*(\Sec{L}\} & \subseteq  \epsilon^*(\Sec{L})\\
    \{\epsilon^*(\Sec{L}),\epsilon^*(\Sec{L})\} & = 0.
\end{align*}

Let $i:S\hookrightarrow M$ be an embedded submanifold of a line bundle $\lambda:L\to M$ and let us denote by $L_S:=i^*L$ the induced line bundle over $S$ and by $\Gamma_S\subseteq \Sec{L}$ the submodule of vanishing sections. Note that at each point $x\in M$ the jet space $\Jet^1_x L$ carries a $L_x$-valued bilinear form given by the Jacobi biderivation $\Tilde{\Lambda}_x$ and so the notion of isotropic subspace of a jet space (one where the bilinear form restricts to zero) is well-defined point-wise. We can then define coisotropic submanifolds of Jacobi manifolds in an entirely analogous way to the coisotropic submanifolds of Poisson manifolds. We say that $S$ is \textbf{coisotropic} if the L-annihilator of its der bundle $(\Der L_S)^{0L}\subset \Jet^1L$ is an isotropic subbundle with respect to the bilinear form $\Tilde{\Lambda}$. Equivalently, $S$ is coisotropic if
\begin{equation*}
    \Tilde{\Lambda}^\sharp_x((\Der_xL_S)^{0L})\subseteq \Der_xL_S \qquad \forall x\in S.
\end{equation*}
The following proposition gives several equivalent characterizations of coisotropic submanifolds.
\begin{prop}[Coisotropic Submanifolds of a Jacobi Manifold]\label{CoisoJacMan}
Let $i:S\hookrightarrow M$ be an embedded submanifold of a Jacobi structure $(\Sec{L},\{,\})$, then the following conditions are equivalent
\begin{enumerate}
\normalfont
    \item $S$ is a coisotropic submanifold;
    \item $\Lambda^\sharp_x(\Tan_x S^0\otimes L_x)\subseteq \Tan_x S \quad \forall x\in S$;
    \item the vanishing sections $\Gamma_S$ form a Lie subalgebra, $\{\Gamma_S,\Gamma_S\}\subseteq \Gamma_S$;
    \item the Hamiltonian vector fields of vanishing sections are tangent to $S$, $X_s\in\Sec{\Tan S}$ for $s\in\Gamma_S$.
\end{enumerate}
\end{prop}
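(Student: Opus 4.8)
The plan is to prove the equivalences along the chain $2 \Leftrightarrow 1 \Leftrightarrow 4 \Leftrightarrow 3$, separating a purely linear-algebraic part (relating the two musical maps $\Tilde{\Lambda}^\sharp$ and $\Lambda^\sharp$ through the der and jet sequences) from an algebraic part (handling the bracket by the Leibniz rule). Everything will be checked over points $x \in S$, where it reduces to L-linear algebra in the fibres.

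First I would record three preliminary identifications. (a) Since $\Ker{\delta}$ consists of fibre-wise scalar multiplications, which always restrict to $L_S$, we have $\Real_M \subseteq \Der L_S$; combined with $\delta(\Der L_S) = \Tan S$ from Proposition \ref{DerSubMan} and the rank count $\dimm \Der_x L_S = \dimm S + 1 = \dimm \delta^{-1}(\Tan_x S)$, this yields $\Der_x L_S = \delta^{-1}(\Tan_x S)$. (b) Because $\Real_M = \Ker{\delta}$, every element of $(\Der_x L_S)^{0L}$ lies in the image of the Spencer injection $i = \delta^{L*L}$, along which the natural pairing reads $i(\alpha)(a) = \alpha(\delta a)$ for $\alpha \in \Tan^{*L}_x M$ and $a \in \Der_x L$; hence $i(\alpha)$ annihilates $\Der_x L_S$ iff $\alpha$ annihilates $\delta(\Der_x L_S) = \Tan_x S$, giving $(\Der_x L_S)^{0L} = i(\Tan_x S^0 \otimes L_x)$. (c) For $s \in \Gamma_S$ and $x \in S$, expanding $j^1_x s(a) = a[s]$ in a local writing $s = \sum g_i s_i$ with $g_i \in I_S$ shows, using $g_i(x) = 0$ and $\delta(a)[g_i] = 0$ for $a \in \Der_x L_S$, that $j^1_x s \in (\Der_x L_S)^{0L}$; conversely the differentials $d_x g_i$ of local generators of $I_S$ span $\Tan_x S^0$, so the jets $\{j^1_x s : s \in \Gamma_S\}$ span the whole annihilator.

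With these in hand, $1 \Leftrightarrow 2$ is immediate from the commuting right-hand square $\delta \circ \Tilde{\Lambda}^\sharp \circ i = \Lambda^\sharp$: writing any element of $(\Der_x L_S)^{0L}$ as $i(\alpha)$ with $\alpha \in \Tan_x S^0 \otimes L_x$, condition 1 asks $\Tilde{\Lambda}^\sharp(i(\alpha)) \in \Der_x L_S = \delta^{-1}(\Tan_x S)$, which by (a) is the same as $\delta(\Tilde{\Lambda}^\sharp(i(\alpha))) = \Lambda^\sharp(\alpha) \in \Tan_x S$, i.e. condition 2. The equivalence $1 \Leftrightarrow 4$ uses the same reduction together with $X_s = \delta(\Delta_s) = \delta(\Tilde{\Lambda}^\sharp(j^1 s))$: by (c) the jets $j^1_x s$ of vanishing sections range over a spanning set of $(\Der_x L_S)^{0L}$, and $\Tilde{\Lambda}^\sharp(j^1_x s) \in \Der_x L_S$ is equivalent, again by (a), to $X_s|_x = \delta(\Tilde{\Lambda}^\sharp(j^1_x s)) \in \Tan_x S$, which is condition 4.

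Finally I would prove $3 \Leftrightarrow 4$ algebraically. For $4 \Rightarrow 3$, take $s,s' \in \Gamma_S$ and evaluate $\{s,s'\}(x) = \Delta_s(s')(x) = a_x(s')$ at $x \in S$, where $a_x = \Delta_s|_x = \Tilde{\Lambda}^\sharp(j^1_x s)$; condition 4 places $a_x \in \Der_x L_S$, and the Leibniz computation of (c) then forces $a_x(s') = 0$, so $\{s,s'\} \in \Gamma_S$. For $3 \Rightarrow 4$, fix $s \in \Gamma_S$, a function $g \in I_S$ and any section $s'$ with $s'(x) \neq 0$; since $g \cdot s' \in \Gamma_S$, condition 3 gives $\{s, g\cdot s'\} \in \Gamma_S$, and expanding by the Leibniz rule $\{s, g\cdot s'\} = g \cdot \Delta_s(s') + X_s[g]\cdot s'$ and evaluating at $x$ (where $\{s,g\cdot s'\}(x) = 0$ and $g(x) = 0$) yields $X_s[g](x)\, s'(x) = 0$, hence $X_s[g](x) = 0$ for every $g \in I_S$, i.e. $X_s|_x \in \Tan_x S$. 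I expect the main obstacle to be steps (b) and (c): correctly threading the L-duality so that the L-annihilator of $\Der L_S$ is identified with $\Tan S^0 \otimes L$ through the Spencer map, and showing the $1$-jets of vanishing sections exhaust it, since this is exactly where the generically nontrivial line-bundle twisting departs from the classical Poisson argument.
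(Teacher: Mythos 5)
Your proof is correct, and it takes a genuinely different route through the four conditions than the paper does. Your step $1\Leftrightarrow 2$ is essentially the paper's own: your item (b), identifying $(\Der_x L_S)^{0L}=i(\Tan_xS^0\otimes L_x)$ through the Spencer injection $i=\delta^{L*L}$, is exactly the computation the paper performs there, with your item (a), $\Der_xL_S=\delta^{-1}(\Tan_xS)$, making explicit a fact the paper uses only implicitly. The genuine divergence is the pivot between the pointwise conditions $(1,2)$ and the bracket-theoretic conditions $(3,4)$: the paper crosses from $2$ to $3$ by expanding brackets of $I_S$-multiples in a local trivialization using the symbol formula $\Lambda^\sharp(df\otimes s)[g]$, whereas you cross from $1$ to $4$ via your key lemma (c), that the $1$-jets of vanishing sections exhaust $(\Der_xL_S)^{0L}$ (resting on $j^1_x(g\cdot s)=i(d_xg\otimes s(x))$ for $g\in I_S$), and then settle $3\Leftrightarrow 4$ pointwise --- your $3\Rightarrow 4$ coincides with the paper's Leibniz computation on $\{s,g\cdot s'\}$, while your $4\Rightarrow 3$ reads $\{s,s'\}(x)=\Delta_s|_x(s')=0$ directly from $\Delta_s|_x\in\Der_xL_S$ together with (c), a direction the paper instead obtains by passing through condition $2$. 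What your organization buys is a trivialization-free argument built around one reusable identification in the fibres, correctly threading the L-duality where the nontrivial twisting lives, as you anticipated; the cost is the extra dimension-count in (a) and the spanning argument in (c), which the paper's extension-by-symbol formula avoids by doing the $2\Leftrightarrow 3$ work in one line. Both arguments lean on the same local writing $\Gamma_S=I_S\cdot\Sec{L}$ in trivializing neighbourhoods, so neither is more general than the other.
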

\begin{proof}
Equivalence between 1. and 2. follows from the definition of the bilinear forms, $\Lambda=i^*\Tilde{\Lambda}$, and the fact that
\begin{equation*}
    i(\alpha\otimes l)\in (\Der_xL_S)^{0L} \quad \Leftrightarrow \quad i(\alpha\otimes l)(a)=0 \quad \forall a\in \Der_xL_S,
\end{equation*}
which, since $i$ is the $L$-dual to the anchor $\delta$, is tantamount to demanding
\begin{equation*}
    \alpha(\delta(a))=0 \quad \forall a\in \Der_xL_S.
\end{equation*}
Recall that $\delta(\Der L_S)=\Tan S$, then 
\begin{equation*}
    i(\alpha\otimes l)\in (\Der_xL_S)^{0L} \quad \Leftrightarrow \quad \alpha\otimes l \in \Tan_x S^0\otimes L_x.
\end{equation*}
To show equivalence between 2. and 3. we can use a local trivialization to write $\Ker{i^*}=\Gamma_S=I_S\cdot \Sec{L}$, then it follows from the definition of the Jacobi L-bivector that
\begin{equation*}
    i^*\{f\cdot s,g\cdot s\}=\Lambda^\sharp(df\otimes s)[g]\cdot i^*s=0
\end{equation*}
for $f,g\in I_S$ and any section $s\in\Sec{L}$. Finally, equivalence between 3. and 4. follows by considering $f\in I_S$, $s\in\Sec{L}$, $s'\in\Gamma_S$ and writing
\begin{equation*}
    i^*\{s',f\cdot s\}=i^*f\cdot i^*\{s',s\} + i^*(X_{s'}[f])\cdot i^*s= i^*(X_{s'}[f])\cdot i^*s
\end{equation*}
thus implying $X_{s'}[f]\in I_S$, completing the equivalence.
\end{proof}

Let $(\Sec{L_1},\{,\}_1)$ and $(\Sec{L_2},\{,\}_2)$ be two Jacobi structures, a \textbf{Jacobi map} is a factor, i.e. morphism in the category of line bundles $\Line_\Man$, $B:L_1\to L_2$ such that its pull-back on sections $B^*:(\Sec{L_2},\{,\}_2)\to (\Sec{L_1},\{,\}_1)$ is a Lie algebra morphism. The next proposition shows that the categorical product in $\Line_\Man$ allows us to define a product of Jacobi manifolds.
\begin{prop}[Product of Jacobi Manifolds]\label{ProdJacMan}
Let Jacobi structures $(\Sec{L_1},\{,\}_1)$ and $(\Sec{L_2},\{,\}_2)$, then there exists a unique Jacobi structure in the line product $(\Sec{L_1\utimes L_2},\{,\}_{12})$ such that the canonical projection factors
\begin{equation*}
\begin{tikzcd}
L_1 & L_1\utimes L_2 \arrow[l,"P_1"']\arrow[r,"P_2"] & L_2.
\end{tikzcd}
\end{equation*}
are Jacobi maps. The Jacobi structure $(\Sec{L_1\utimes L_2},\{,\}_{12})$ is called the \textbf{product Jacobi structure}.
\end{prop}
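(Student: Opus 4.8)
The plan is to follow the template of the product Poisson manifold, leaning on the three structural results already established for the line product: the module decomposition $\Sec{L_1\utimes L_2}\cong \Cin{M_1\dtimes M_2}\cdot P_1^*\Sec{L_1}$, the spanning of $\Cot(M_1\dtimes M_2)$ by differentials of ratio functions (Proposition \ref{SpanFuncProd}), and the splitting $\Der(L_1\utimes L_2)\cong p_1^*\Der L_1\oplus p_2^*\Der L_2$ together with the commutation relation $[\Dr{L_1},\Dr{L_2}]=0$ (Propositions \ref{DerLineProd} and \ref{DerivLineProd}). The uniqueness half is argued by showing the defining conditions force the bracket; the existence half is obtained by exhibiting the structure explicitly and verifying it through Proposition \ref{ExtBySymb} (Extension by Symbol).

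For uniqueness, I would first observe that a local Lie bracket on $\Sec{L_1\utimes L_2}$ is a first-order differential operator in each slot, so by the Leibniz identity it is completely fixed once we know its values on a spanning set of sections together with the associated Hamiltonian vector fields on the base product. The requirement that $P_1$ be a Jacobi map forces $\{P_1^*s_1,P_1^*s_1'\}_{12}=P_1^*\{s_1,s_1'\}_1$, and likewise for $P_2$, which pins down the bracket on the pure sections $P_i^*\Sec{L_i}$. The remaining cross brackets are then recovered by writing $P_2^*s_2=\tfrac{s_2}{u_1}\cdot P_1^*u_1$ for a local non-vanishing $u_1\in\Sec{L_1^\times}$ (via the ratio function of Proposition \ref{SpanFuncProd}) and expanding with Leibniz,
\begin{equation*}
    \{P_1^*s_1,P_2^*s_2\}_{12}=\tfrac{s_2}{u_1}\cdot P_1^*\{s_1,u_1\}_1+X_{P_1^*s_1}\big[\tfrac{s_2}{u_1}\big]\cdot P_1^*u_1,
\end{equation*}
so the whole bracket is determined once the Hamiltonian vector fields $X_{P_i^*s_i}$ are known; these are in turn fixed by evaluating the two Jacobi-map conditions against the mixed ratio functions $\tfrac{s_2}{u_1}$, whose differentials span $\Cot(M_1\dtimes M_2)$.

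For existence I would exhibit the candidate and invoke Proposition \ref{ExtBySymb}. Taking $\Sigma=P_1^*\Sec{L_1}$ as the spanning subspace, I set $[P_1^*s_1,P_1^*s_1']:=P_1^*\{s_1,s_1'\}_1$, and define the Hamiltonian vector field map $X$ and the Jacobi L-bivector $\Lambda_{12}$ on $M_1\dtimes M_2$ by transporting the data of the two factors through the splitting $\Der(L_1\utimes L_2)\cong p_1^*\Der L_1\oplus p_2^*\Der L_2$. The first two compatibility conditions of Proposition \ref{ExtBySymb}, that $X$ be a Lie morphism and that $X_{f\cdot s}=f\cdot X_s+\Lambda_{12}^\sharp(df\otimes s)$, then follow from the corresponding identities for $\{,\}_1$ and $\{,\}_2$ and the functoriality of the der map (Proposition \ref{DerFunct}); by construction the resulting bracket makes both $P_1$ and $P_2$ Jacobi maps, so this candidate is the unique one identified above.

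The main obstacle is the third compatibility condition of Proposition \ref{ExtBySymb}, namely $[X_s,\Lambda_{12}^\sharp(df\otimes s')]=\Lambda_{12}^\sharp(dX_s[f]\otimes s'+df\otimes[s,s'])$, which encodes the genuine Jacobi identity for the mixed brackets. The delicacy is that the spanning functions $f$ are ratio functions whose differentials mix the $M_1$, $M_2$, and $\Real^\times$ directions of the base product, so verifying the identity amounts to controlling how the Hamiltonian vector fields act on the mixed ratios $\tfrac{s_2}{u_1}$ across the extra fibre direction. I expect that it is precisely the vanishing commutator $[\Dr{L_1},\Dr{L_2}]=0$ from Proposition \ref{DerivLineProd} that annihilates the cross terms and reduces the check to the separately-known Jacobi identities of the two factors; this bookkeeping over the $\Real^\times$ direction should be the only substantial computation.
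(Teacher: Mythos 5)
Your existence half follows essentially the same route as the paper's proof: fix the brackets of pull-backs $P_i^*\Sec{L_i}$, then extend by symbol (Proposition \ref{ExtBySymb}) over the ratio functions of Proposition \ref{SpanFuncProd}; where the paper writes explicit defining formulas such as $X^{12}_{P_1^*s_1}[\tfrac{a}{b}]=\tfrac{\{s_1,a\}_1}{b}$ and $\Lambda^{12}(d\tfrac{a}{b}\otimes P_1^*s_1)[\tfrac{a'}{b'}]=\tfrac{\{a,a'\}_1}{b}\tfrac{s_1}{b'}$, you leave ``transporting the data through the splitting'' and the third compatibility condition as expectations, which is incompleteness rather than error. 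The genuine gap is in your uniqueness argument, which is circular. Your Leibniz expansion trades the cross bracket $\{P_1^*s_1,P_2^*s_2\}_{12}$ for $X_{P_1^*s_1}[\tfrac{s_2}{u_1}]$, but ``evaluating the Jacobi-map conditions against the mixed ratio functions'' — i.e.\ expanding $P_1^*a=\tfrac{a}{b}\,P_2^*b$ inside $\{P_1^*s,P_1^*a\}_{12}=P_1^*\{s,a\}_1$ — only returns the single relation
\begin{equation*}
    P_1^*\{s,a\}_1=\tfrac{a}{b}\,\{P_1^*s,P_2^*b\}_{12}+X_{P_1^*s}\big[\tfrac{a}{b}\big]\cdot P_2^*b,
\end{equation*}
one equation in the two unknowns $\{P_1^*s,P_2^*b\}_{12}$ and $X_{P_1^*s}[\tfrac{a}{b}]$; the $P_2$ condition gives only the mirror relation with its own new unknown $X_{P_2^*u}[\tfrac{b}{a}]$. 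Nothing in the two Jacobi-map conditions pins down the mixed data.

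That this underdetermination is real, and not a defect of the bookkeeping, is shown by the smallest example: take $M_1=M_2$ a point and $L_1=L_2=\Real$, so antisymmetry forces $\{,\}_1=\{,\}_2=0$, the base product is $\Real^\times$ with coordinate $t$, and the line product is trivial. For any $g\in\Cin{\Real^\times}$ the Lichnerowicz pair $(\pi=0,\,R=g\,\partial_t)$ defines a Jacobi bracket $\{f,h\}=fR[h]-hR[f]$. Here $P_1^*c=c$ and $P_2^*c=c/t$, and one computes $\{P_1^*c,P_1^*c'\}=0=\{P_2^*c,P_2^*c'\}$ for \emph{every} $g$, so both projections are Jacobi maps for every $g$; yet $\{P_1^*c,P_2^*c'\}=-cc'g/t^2$ vanishes only for $g=0$. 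So the projection conditions alone admit a whole family of Jacobi structures, and uniqueness cannot be derived the way you propose. Note that the paper's proof does not derive it either: it \emph{stipulates} $\{P_1^*s_1,P_2^*s_2\}_{12}:=0$ as one of the defining conditions and then constructs $X^{12}$ and $\Lambda^{12}$, so the uniqueness obtained is uniqueness of the extension-by-symbol given those stipulated brackets, not uniqueness among all structures making $P_1,P_2$ Jacobi. To repair your argument you must add the vanishing of the cross brackets (equivalently, a normalization of $X_{P_i^*s_i}$ in the mixed directions) to the characterization; with that extra condition your Leibniz expansion does close up and reproduces the paper's structure.
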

\begin{proof}
Recall from the definition of line product as a pull-back bundle that $\Sec{\L_1\utimes L_2}=\Cin{M_1\dtimes M_2} \cdot p_1^*\Sec{L_1}\cong \Cin{M_1\dtimes M_2} \cdot p_2^*\Sec{L_2}$, therefore it will suffice to determine the bracket on pull-backs of sections of each factor and then use the extension by symbol formula for some appropriately chosen Hamiltonian vector field $X^{12}:\Sec{\L_1\utimes L_2}\to \Sec{\Tan M_1\dtimes M_2}$ and L-bivector musical map $\Lambda^{12}:\Sec{\Tan ^{*L} (M_1\dtimes M_2)\otimes L_1\utimes L_2}\to \Sec{\Tan (M_1\dtimes M_2)}$. The bracket on the line product is then defined as a $\Real$-linear bracket given by
\begin{equation*}
    \{P_1^*s_1,P_1^*s_1'\}_{12}:=P_1^*\{s_1,s_1'\}_1 \qquad \{P_2^*s_2,P_2^*s_2'\}_{12}:=P_2^*\{s_2,s_2'\}_2\qquad \{P_1^*s_1,P_2^*s_2\}_{12}:=0
\end{equation*}
for all $s_i,s_i'\in\Sec{L_i}$, $i=1,2$. It suffices to define the action of the Hamiltonian vector field and the L-bivector musical map locally on spanning functions of the base product as introduced in Proposition \ref{SpanFuncProd}. By direct computation we readily check that the following defining conditions
\begin{align*}
    X^{12}_{P_1^*s_1}[\tfrac{a}{b}]&=\tfrac{\{s_1,a\}_1}{b}   & X^{12}_{P_2^*s_2}[\tfrac{b}{a}]&=\tfrac{\{s_2,b\}_2}{a}\\
    \Lambda^{12}(d\tfrac{a}{b}\otimes P_1^*s_1)[\tfrac{a'}{b'}]&=\tfrac{\{a,a'\}_1}{b}\tfrac{s_1}{b'} &\Lambda^{12}(d\tfrac{b}{a}\otimes P_2^*s_2)[\tfrac{b'}{a'}]&=\tfrac{\{b,b'\}_2}{a}\tfrac{s_2}{a'}
\end{align*}
for all $s_i\in\Sec{L_i}$, $a,a'\in\Sec{L_1^\times}$ and $b,b'\in\Sec{L_2^\times}$, determine $X^{12}$ and $\Lambda^{12}$ so that the compatibility conditions of Proposition \ref{ExtBySymb} are satisfied and thus define a Jacobi structure on the line product $L_1\utimes L_2$. The above formulas may appear asymmetric for the local non-vanishing sections $a\in\Sec{L_1^\times}$ and $b\in\Sec{L_2^\times}$, however recall that our definition of local ratio functions is such that $\tfrac{a}{b}\tfrac{b}{a}=1$ and so, by construction, we can relate both sides of the definition by the relations
\begin{equation*}
    d\tfrac{a}{b}=-(\tfrac{a}{b})^2d\tfrac{b}{a} \qquad P_1^*a=\tfrac{a}{b}P_2^*b.
\end{equation*}
\end{proof}

Consider a factor between line bundles
\begin{equation*}
\begin{tikzcd}
L_1 \arrow[r, "B"] \arrow[d, "\lambda_1"'] & L_2 \arrow[d, "\lambda_2"] \\
M_1 \arrow[r, "b"'] & M_2.
\end{tikzcd}
\end{equation*}
The construction of the base product of two line bundles $M_1\dtimes M_2$ as the space of fibre isomorphisms allows to identify the factor $B$ with a submanifold that we may regard as the line bundle analogue of a graph:
\begin{equation*}
    \LGrph{B}:=\{C_{x_1x_2}\in M_1\dtimes M_2|\quad x_2=b(x_1), \quad C_{x_1x_2}=B_{x_1}\},
\end{equation*}
we call this set the \textbf{L-graph} of the factor $B$. Given a Jacobi structure $(\Sec{L},\{,\})$ we define the \textbf{opposite Jacobi structure} as $\overline{L}:=(\Sec{L},-\{,\})$. The proposition below shows that we can regard Jacobi maps as a particular case of \textbf{coistropic relations}, which are defined, in general, as coisotropic submanifolds of the product Jacobi manifold $L_1\utimes \overline{L}_2$ and are regarded as the general notion of morphism in the category of Jacobi manifolds.
\begin{prop}[Jacobi Maps as Coisotropic Relations]\label{JacMapCoisoRel}
Let two Jacobi structures $(\Sec{L_{M_1}},\{,\}_1)$ and $(\Sec{L_{M_2}},\{,\}_2)$ and a factor $B:L_{M_1}\to L_{M_2}$, then $B$ is a Jacobi map iff its L-graph
\begin{equation*}
\normalfont
    \LGrph{B}\subseteq M_1\dtimes M_2
\end{equation*}
is a coisotropic submanifold of the product Jacobi structure $L_{M_1}\utimes \overline{L}_{M_2}$.
\end{prop}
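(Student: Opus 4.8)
The plan is to prove both implications at once by translating the coisotropic condition into the statement that $B^*$ preserves brackets, using the characterization of coisotropic submanifolds through their vanishing submodule (item 3 of Proposition \ref{CoisoJacMan}): the L-graph is coisotropic in $L_{M_1} \utimes \overline{L}_{M_2}$ if and only if its vanishing submodule $\Gamma_{\LGrph{B}} \subseteq \Sec{L_1 \utimes L_2}$ is closed under the product bracket $\{,\}_{12}$. Thus the whole argument reduces to (a) giving a workable description of $\Gamma_{\LGrph{B}}$ and (b) evaluating $\{,\}_{12}$ on a generating set.

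First I would establish the elementary lemma describing when a section vanishes on the L-graph. Writing the line product projections $P_1, P_2$ as in diagram (\ref{lproduct}) and recalling that $B^* s_2(x_1) = B_{x_1}^{-1} s_2(b(x_1))$, a direct fibre-wise computation shows that for $s_1 \in \Sec{L_1}$ and $s_2 \in \Sec{L_2}$ the section $P_1^* s_1 - P_2^* s_2 \in \Sec{L_1 \utimes L_2}$ vanishes at the point $B_{x_1} \in \LGrph{B}$ exactly when $s_1(x_1) = B^* s_2(x_1)$; hence $P_1^* s_1 - P_2^* s_2 \in \Gamma_{\LGrph{B}}$ iff $s_1 = B^* s_2$. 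In particular the graph sections $P_1^*(B^* s_2) - P_2^* s_2$, for $s_2 \in \Sec{L_2}$, all lie in $\Gamma_{\LGrph{B}}$. The technical point I expect to be the main obstacle is showing that these graph sections \emph{generate} $\Gamma_{\LGrph{B}}$ as a $\Cin{M_1 \dtimes M_2}$-module. For this I would trivialize both line bundles, so that by the trivial-bundle formulas at the end of Section \ref{CatLineB} the base product becomes $M_1 \times M_2 \times \Real^\times$, the factor becomes a pair $(b,\beta)$, and the L-graph becomes $\{(x_1, b(x_1), \beta(x_1))\}$; letting $s_2$ run over a constant section and local coordinate sections of $L_2$ then exhibits generators whose common zero locus is cut out transversally with the correct codimension $\dimm M_2 + 1$, so via $\Gamma_{\LGrph{B}} = I_{\LGrph{B}} \cdot \Sec{L_1 \utimes L_2}$ they indeed generate the submodule.

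With the generators in hand, the core computation uses the product Jacobi bracket of Proposition \ref{ProdJacMan} for $L_{M_1} \utimes \overline{L}_{M_2}$, where the opposite structure on the second factor is decisive. Since the projections are Jacobi maps and $\{P_1^*\,\cdot\,, P_2^*\,\cdot\,\}_{12} = 0$, the cross terms cancel and one finds
\begin{equation*}
\{P_1^* B^* s_2 - P_2^* s_2,\; P_1^* B^* s_2' - P_2^* s_2'\}_{12} = P_1^* \{B^* s_2, B^* s_2'\}_1 - P_2^* \{s_2, s_2'\}_2,
\end{equation*}
the minus sign in the second term being produced by the opposite bracket on $\overline{L}_{M_2}$. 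This is again of the form $P_1^* t_1 - P_2^* t_2$, so by the vanishing lemma it lies in $\Gamma_{\LGrph{B}}$ precisely when $\{B^* s_2, B^* s_2'\}_1 = B^* \{s_2, s_2'\}_2$. Ranging over all $s_2, s_2' \in \Sec{L_2}$, closure of the generators under $\{,\}_{12}$ is therefore equivalent to $B^*$ being a Lie algebra morphism, i.e. to $B$ being a Jacobi map.

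Finally I would observe that checking the bracket only on generators suffices: since $\Gamma_{\LGrph{B}}$ is a $\Cin{M_1 \dtimes M_2}$-submodule and $\{,\}_{12}$ is a biderivation, the extension-by-symbol expansion of $\{f g_i, h g_j\}_{12}$ (Proposition \ref{ExtBySymb}) is a $\Cin{M_1\dtimes M_2}$-linear combination of $\{g_i, g_j\}_{12}$ together with further terms each retaining a factor $g_i$ or $g_j$; as all of these lie in $\Gamma_{\LGrph{B}}$ whenever the generator brackets do, closure on generators upgrades to $\{\Gamma_{\LGrph{B}}, \Gamma_{\LGrph{B}}\}_{12} \subseteq \Gamma_{\LGrph{B}}$. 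Combining this with Proposition \ref{CoisoJacMan} then yields both directions.
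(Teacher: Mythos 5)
Your proof is correct and follows essentially the same route as the paper's: characterization 3 of Proposition \ref{CoisoJacMan}, the generating vanishing sections $P_1^*B^*s-P_2^*s$, and the identical bracket computation $\{P_1^*B^*s-P_2^*s,P_1^*B^*s'-P_2^*s'\}_{12}=P_1^*\{B^*s,B^*s'\}_1-P_2^*\{s,s'\}_2$ with the sign coming from $\overline{L}_{M_2}$. The only difference is that you explicitly justify two points the paper asserts ``by construction'' --- that the graph sections generate $\Gamma_{\LGrph{B}}$ as a $\Cin{M_1\dtimes M_2}$-module, and that checking closure on generators suffices via the biderivation property --- both of which are sound and strengthen the argument.
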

\begin{proof}
We will use characterization 3. in Propostion \ref{CoisoJacMan} to identify a coisotropic submanifold with a Lie subalgebra of the Jacobi bracket. Note that, by construction of $\LGrph{B}$, the vanishing sections $\Gamma_{\LGrph{B}}\subseteq \Sec{L_1\utimes \overline{L}_2}$ are generated by those of the form
\begin{equation*}
    P_1^*B^*s-P_2^*s\qquad \forall s\in\Sec{L_2}.
\end{equation*}
We can directly compute the brackets using the defining relations of the product Jacobi structure
\begin{equation*}
    \{P_1^*B^*s-P_2^*s,P_1^*B^*s'-P_2^*s'\}_{12}=P_1^*\{B^*s,B^*s'\}_1-P_2^*\{s,s'\}_2
\end{equation*}
which must hold for all $s.s'\in\Sec{L_2}$. Then it follows that
\begin{equation*}
    \{P_1^*B^*s-P_2^*s,P_1^*B^*s'-P_2^*s'\}_{12}\in \Gamma_{\LGrph{B}}\quad \Leftrightarrow \quad \{B^*s,B^*s'\}_1 = B^*\{s,s'\}_2
\end{equation*}
as desired.
\end{proof}

In direct analogy with reduction schemes of Poisson manifolds we find the notion of \textbf{Jacobi reduction}. Let a line bundle $\lambda:L\to M$ with a Jacobi structure $(\Sec{L},\{,\})$ and consider a submanifold $i:C\hookrightarrow M$ with its corresponding embedding factor $\iota: L_C\hookrightarrow L$. For another line bundle $\lambda':L'\to M'$, assume there exists a submersion factor $\pi:L_S\twoheadrightarrow L'$ with corresponding surjective submersion $p:C\twoheadrightarrow M'$. Then, we say that a Jacobi structure $(\Sec{L'},\{,\}')$ \textbf{Jacobi reduces} from the Jacobi structure $(\Sec{L},\{\,\})$ via $\pi:L_C\twoheadrightarrow L'$ when for all pairs of sections $s_1,s_2\in\Sec{L'}$ the identity
\begin{equation*}
    \pi^*\{s_1,s_2\}'=\iota^*\{S_1,S_2\}
\end{equation*}
holds for all choices of extensions $S_1,S_2$, i.e all choices of sections $S_1,S_2\in\Sec{L}$ satisfying 
\begin{equation*}
    \pi^*s_1=\iota^*S_1 \qquad \pi^*s_2=\iota^*S_2.
\end{equation*}
As a particular case of Jacobi reduction we find \textbf{Hamiltonian Jacobi reduction}: this is the situation when the reduction data $\pi:L_C\twoheadrightarrow L'$ is entirely determined by the presence of a Lie group acting via Jacobi maps $G\Acts L$, with infinitesimal action $\Psi:\mathfrak{g}\to \Dr{L}$, and a (co)moment map $\overline{\mu}:\mathfrak{g}\to \Sec{L}$ satisfying the defining conditions
\begin{equation*}
    \Psi(\xi)=\Delta_{\overline{\mu}(\xi)},\qquad \{\overline{\mu}(\xi),\overline{\mu}(\zeta)\}=\overline{\mu}([\xi,\zeta]) \qquad \forall \xi,\zeta\in\mathfrak{g}.
\end{equation*}

A Jacobi structure $(\Sec{L}, \{,\})$ is called \textbf{non-degenerate} when its Jacobi biderivation $\Tilde{\Lambda}$ induces a musical isomorphism of L-vector bundles
\begin{equation*}
        \begin{tikzcd}
        \Jet^1 L\arrow[r, "\Tilde{\Lambda}^\sharp",yshift=0.7ex] & \Der L \arrow[l,"\Tilde{\Lambda}^\flat",yshift=-0.7ex]
        \end{tikzcd}
\end{equation*}
Recall that symplectic manifolds were shown to be non-degenerate Poisson manifolds, it turns out that there is an entirely analogous connection between contact manifolds and non-degenerate Jacobi manifolds. In what follows we make this claim precise. Let us first give the standard definition of contact manifold as a pair $(M,\text{H}\subset \Tan M)$ where $\text{H}$ is a maximally non-integrable hyperplane distribution, $\text{rk}(\text{H})=\dimm M-1$. Let us clarify what we mean by maximally non-integrable by noting first that a hyperplane distribution on the tangent bundle is equivalent to the datum of a (generically non-trivial) line bundle $\lambda:L\to M$ and a non-vanishing $L$-valued 1-form $\theta:\Tan M\to L$. The equivalence is realized by setting
\begin{equation*}
    \text{H}=\Ker{\theta}, \quad \text{ which then gives } \quad \Tan M/\text{H}\cong L.
\end{equation*}
Let us denote the $\Cin{M}$-submodule of vector fields tangent to the hyperplane distribution by $\Sec{\text{H}}$. We can define the following antisymmetric map for vector fields tangent to the hyperplane distribution
\begin{align*}
\omega: \Sec{\wedge^2 \text{H}} & \to \Sec{L}\\
(X,Y) & \mapsto \theta([X,Y]).
\end{align*}
The kernel of this map clearly measures the degree to which $\text{H}$ is integrable as a tangent distribution noting that, in particular, when $[\Sec{\text{H}},\Sec{\text{H}}]\subseteq\Sec{\text{H}}$ the map $\omega$ is identically zero. It follows by construction that $\omega$ is in fact $\Cin{M}$-bilinear and thus defines a bilinear form $\omega_\text{H}:\wedge^2 \text{H}\to L$ called the \textbf{curvature form} of the hyperplane distribution $\text{H}$. A hyperplane distribution $\text{H}$ is called \textbf{maximally non-integrable} when its curvature form $\omega_\text{H}$ is non-degenerate, i.e. when the musical map $\omega^\flat_\text{H}:\text{H}\to \text{H}^{*L}$ has vanishing kernel. Such a hyperplane distribution $\text{H}\subset \Tan M$ is called a \textbf{contact structure} on $M$ and we refer to the pair $(M,\text{H})$ as a \textbf{contact manifold}. We mention in passing that only requiring the kernel of $\omega_\text{H}^\flat$ to be regular, thus giving an integrable tangent distribution, is the definition of $(M,\text{H})$ as a \textbf{precontact manifold}. Simple dimension counting applied to any tangent space of a contact manifold reveals that a manifold supporting a contact hyperplane is necessarily odd-dimensional. A \textbf{contactomorphism}, defined as a smooth map $\varphi:(M_1,\text{H}_1)\to(M_2,\text{H}_2)$ whose tangent maps the hyperplane distributions isomorphically $\Tan \varphi (\text{H}_1)=\text{H}_2$, is necessarily a diffeomorphism from the fact that hyperplane distributions have codimension $1$ everywhere. This condition is equivalent to the tangent map $\Tan \varphi$ inducing a well-defined morphism of line bundles $\Phi:L_1\cong \Tan M_1/\text{H}_1\to L_2\cong \Tan M_2/\text{H}_2$.

\begin{prop}[Non-Degenerate Jacobi Manifolds as Contact Manifolds]\label{JacManContMan}
Let $M$ be any smooth manifold, then the datum of a contact structure $(M,\text{H}\subset \Tan M)$ is equivalent to the datum of a non-degenerate Jacobi structure $(\Sec{L_M},\{,\})$.
\end{prop}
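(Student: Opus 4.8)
The plan is to build an explicit dictionary between the two structures through the L-dual pair $\Der L$ and $\Jet^1 L=(\Der L)^{*L}$, together with the der and jet sequences that tie them to $\Tan M$ and $\Tan^{*L}M$. Two canonical objects drive the argument. First, the der sequence $0\to\Real_M\to\Der L\xrightarrow{\delta}\Tan M\to 0$ singles out a nowhere-vanishing section $\mathbb{1}\in\Sec{\Der L}$, the identity derivation $s\mapsto s$ generating $\Real_M$; L-dualizing this sequence identifies the jet projection $p:\Jet^1 L\to L$ with evaluation on $\mathbb{1}$, i.e. $p(\eta)=\eta(\mathbb{1})$, since $j^1s(\mathbb{1})=\mathbb{1}[s]=s$. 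Second, a hyperplane $\text{H}$ is the same datum as a nowhere-zero $L$-valued form $\theta\in\Sec{\Tan^{*L}M}$ with $\text{H}=\Ker{\theta}$ and $L\cong\Tan M/\text{H}$, and via $i:\Tan^{*L}M\hookrightarrow\Jet^1 L$ it yields a distinguished jet $\Theta:=i(\theta)\in\Sec{\Jet^1 L}$, explicitly $\Theta=\theta\circ\delta$ viewed as a map $\Der L\to L$. I would then construct a map in each direction and check that they are mutually inverse.

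From a contact structure I would produce the der-$2$-form $\varpi:=d_L\Theta$. Expanding with the der Cartan formula $\varpi(D,D')=D[\Theta(D')]-D'[\Theta(D)]-\Theta([D,D'])$ and using that $\delta$ preserves brackets gives, for $D,D'\in\delta^{-1}(\text{H})=\Ker{\Theta}$, the identity $\varpi(D,D')=-\theta([\delta D,\delta D'])=-\omega_{\text{H}}(\delta D,\delta D')$, while $\Cin{M}$-linearity and $\iota_{\mathbb{1}}\varpi=\Theta$ handle the $\mathbb{1}$-direction. A three-step fibre-wise kernel computation -- a kernel element $D_0$ first satisfies $\Theta(D_0)=0$, then $\delta D_0=0$ by nondegeneracy of $\omega_{\text{H}}$, then $D_0=0$ -- shows $\varpi$ is nondegenerate exactly because $\omega_{\text{H}}$ is. Setting $\Tilde{\Lambda}^\sharp:=(\varpi^\flat)^{-1}$ and $\{u,v\}:=j^1v(\Tilde{\Lambda}^\sharp(j^1u))=\Tilde{\Lambda}^\sharp(j^1u)[v]$ yields a bracket whose adjoint lands in $\Dr{L}$ by construction (locality), is antisymmetric since $\varpi$ is, and satisfies the Jacobi identity because $\varpi=d_L\Theta$ is $d_L$-closed; this is a non-degenerate Jacobi structure.

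Conversely, from a non-degenerate Jacobi structure with musical isomorphism $\Tilde{\Lambda}^\sharp$ I would recover the contact form as $\kappa_0:=(\Tilde{\Lambda}^\sharp)^{-1}(\mathbb{1})$. Writing $\Tilde{\Lambda}(\xi,\eta)=\eta(\Tilde{\Lambda}^\sharp\xi)$ gives $\Tilde{\Lambda}(\kappa_0,\eta)=p(\eta)$ for all $\eta$; taking $\eta=\kappa_0$ and using antisymmetry forces $p(\kappa_0)=\Tilde{\Lambda}(\kappa_0,\kappa_0)=0$, so $\kappa_0=i(\theta)$ for a unique nowhere-zero $\theta$, and $\text{H}:=\Ker{\theta}$ is a hyperplane (the diagram relating the symbol and Spencer sequences further identifies it with $\mathrm{im}(\Lambda^\sharp)$, where $\Lambda^\sharp=\delta\circ\Tilde{\Lambda}^\sharp\circ i$ has kernel $\langle\theta\rangle=\text{H}^{0L}$ and descends to an isomorphism $\text{H}^{*L}\xrightarrow{\sim}\text{H}$). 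To see $\text{H}$ is maximally non-integrable I would use the structural fact that on the der complex the identity derivation acts as $\mathcal{L}_{\mathbb{1}}=\mathrm{id}$ (immediate from $[\mathbb{1},-]=0$ and $\mathbb{1}[-]=\mathrm{id}$, making $\iota_{\mathbb{1}}$ a contracting homotopy); combined with $d_L\varpi=0$ this gives $\varpi=d_L\iota_{\mathbb{1}}\varpi=d_L\Theta$, so nondegeneracy of $\varpi$ transfers, via the kernel computation read backwards, to nondegeneracy of $\omega_{\text{H}}$.

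Finally I would verify the two assignments are mutually inverse: starting from $\text{H}$, the induced Jacobi structure has $(\Tilde{\Lambda}^\sharp)^{-1}(\mathbb{1})=\iota_{\mathbb{1}}d_L\Theta=\Theta=i(\theta)$, recovering $\theta$ and hence $\text{H}$; starting from a Jacobi structure, $\varpi=d_L\Theta$ with $\Theta=\iota_{\mathbb{1}}\varpi=i(\theta)$ reproduces the original biderivation. The main obstacle I anticipate is precisely the equivalence between $d_L$-closedness of $\varpi$ and the Jacobi identity of the induced bracket: this is the Jacobi counterpart of the classical statement that a nondegenerate $2$-form is symplectic iff it is closed, and it must be executed inside the der-complex Cartan calculus with coefficients in the generically nontrivial line bundle $L$, carefully tracking the extra Reeb direction $\mathbb{1}$ by which $\Der L$ exceeds $\Tan M$ in rank.
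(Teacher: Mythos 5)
Your proposal is correct, and it takes a genuinely different route from the paper's. The paper argues through its extension-by-symbol machinery (Proposition \ref{ExtBySymb}): from the contact datum it directly writes down the Hamiltonian vector field map via $\theta(X_u)=u$ and the L-bivector $\Lambda^\sharp(df\otimes u):=\omega^\sharp(df|_{\text{H}}\otimes u)$, verifies the compatibility conditions, and gets non-degeneracy from the pointwise splitting $\Tan_xM\cong \text{H}_x\oplus\Real\cdot X_u(x)$; the converse is a pointwise dimension count on $\text{H}=\Lambda^\sharp(\Tan^{*L}M)$ plus an unexecuted ``direct computation''. You instead package the whole equivalence as the statement that contact structures are exactly the non-degenerate $d_L$-closed $2$-forms in the der complex, with the identity derivation $\mathbb{I}\in\Sec{\Der L}$ doing double duty: it supplies the extra direction in your three-step kernel analysis relating non-degeneracy of $\varpi=d_L\Theta$ to non-degeneracy of $\omega_{\text{H}}$, and, via $\mathcal{L}_{\mathbb{I}}=\mathrm{id}$, it makes $\iota_{\mathbb{I}}$ a contracting homotopy, so that closed der forms are automatically exact --- which is precisely what makes your two constructions mutually inverse and replaces the paper's converse computation by the structural identities $\Theta=\iota_{\mathbb{I}}\varpi$, $p(\kappa_0)=\Tilde{\Lambda}(\kappa_0,\kappa_0)=0$. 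What each approach buys: the paper's stays close to brackets and avoids developing the $L$-valued Cartan calculus in earnest, while yours is the exact Jacobi counterpart of the classical ``non-degenerate Poisson equals symplectic'' argument and explains conceptually, rather than verifies, why the correspondence is a bijection. Two small points are worth recording. First, the step you flag as the main obstacle is in fact routine given the Cartan calculus on the der complex (which the paper already asserts exists, with the tautological representation): from $d_L\varpi=0$ and $\iota_{\Delta_u}\varpi=j^1u=d_Lu$ one gets $\mathcal{L}_{\Delta_u}\varpi=0$, hence $\iota_{[\Delta_u,\Delta_v]}\varpi=\mathcal{L}_{\Delta_u}\iota_{\Delta_v}\varpi=d_L\{u,v\}$, so $[\Delta_u,\Delta_v]=\Delta_{\{u,v\}}$ by non-degeneracy and the Jacobi identity follows verbatim as in the symplectic case; conversely, the Jacobi identity gives $d_L\varpi=0$ by evaluating on Hamiltonian derivations, which span $\Der L$ pointwise because $1$-jets of sections fill each fibre of $\Jet^1L$ and $\Tilde{\Lambda}^\sharp$ is an isomorphism. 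Second, in the ``kernel computation read backwards'' a lift $D_0\in\Ker{\Theta}$ of a putative $\omega_{\text{H}}$-kernel vector $v\in\text{H}$ need not pair to zero against a Reeb-type derivation $R$ with $\Theta(R)\neq 0$; you must first correct it to $D_0+f\,\mathbb{I}$ with $f$ chosen pointwise so that $\varpi(D_0+f\,\mathbb{I},R)=0$, which is possible precisely because $\varpi(\mathbb{I},R)=\Theta(R)$ is non-vanishing --- a one-line fix, after which non-degeneracy of $\varpi$ forces $D_0+f\,\mathbb{I}=0$ and hence $v=\delta D_0=0$ as you intend.
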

\begin{proof}
Let us assume $(M,\text{H}\subset \Tan M)$ is a contact structure first and show that it defines a Hamiltonian vector field and L-bivector, which, in virtue of Proposition \ref{ExtBySymb}, give a Jacobi structure that is non-degenerate. The curvature form is non-degenerate and thus has an inverse $\omega^\sharp:\text{H}^{*L}\to \text{H}\subset \Tan M$. Define $L:=\Tan M/\text{H}$ with canonical projection defining the contact form $\theta\in\Sec{\Tan^{*L} M}$, which in turn gives an $\Real$-linear map $X:\Sec{L}\to\Sec{\Tan M}$ defined by the condition $\theta(X_u)=u$. Let us define the L-bivector musical map
\begin{equation*}
    \Lambda^\sharp(df\otimes u):=\omega^\sharp(df|_\text{H}\otimes u).
\end{equation*}
It follows from the definition of curvature form and the Jacobi identity of Lie bracket of vector fields that $X$ and $\Lambda$ so defined satisfy the extension by symbol conditions. The point-wise decomposition of tangent spaces as $T_xM\cong \text{H}\oplus \Real\cdot X_u(x)$ for some locally non-vanishing section $u$ then shows that the Jacobi biderivation associated to this Jacobi structure is indeed non-degenerate. Conversely, let us assume $\lambda:L\to M$ is a line bundle whose sections carry a non-degenerate Jacobi structure $(\Sec{L},\{,\})$ with L-bivector musical map $\Lambda^\sharp:\Tan^{*L}M\to \Tan M$. A point-wise dimension count shows that $\text{H}=\Lambda^\sharp(\Tan^{*L}M)\subset \Tan M$ is a hyperplane distribution and by direct computation we show that it is, in fact, contact and precisely the converse construction to the previous case.
\end{proof}

We conclude this section by giving a brief description of Jacobi manifolds with trivial line bundle. It follows from the results presented at the end of section \ref{CatLineB} for a trivial line bundle $\Real_M$ that a Jacobi structure corresponds to a Lie bracket on functions $(\Cin{M},\{,\})$ defined by a bivector field $\pi\in\Sec{\wedge^2 \Tan M}$ and a vector field $R\in\Sec{\Tan M}$ in the following way:
\begin{equation*}
    \{f,g\}=\pi(df,dg)+fR[g]-gR[f]
\end{equation*}
for all $f,g\in\Cin{M}$. The Jacobi identity of this bracket is tantamount to the following integrability condition
\begin{equation*}
    [R,\pi]=0\qquad [\pi,\pi]+2R\wedge\pi =0,
\end{equation*}
where $[,]$ denotes the Schouten bracket of multivector fields. When these identities hold $(\pi,R)$ is called a \textbf{Jacobi pair} and $(M,\pi,R)$ is known as a \textbf{Lichnerowicz manifold}, which is indeed simply a Jacobi manifold with trivialized line bundle. Recall that for a trivial line bundle we have the isomorphism $\Der \Real_M\cong \Tan M\oplus \Real_M$, then the Hamiltonian derivation and Hamiltonian vector field of a function $f\in\Cin{M}$ take the following concrete form
\begin{equation*}
    \Delta_f= \pi^\sharp(df)+fR\oplus -R[f] \qquad X_f=\pi^\sharp(df)+fR.
\end{equation*}
A \textbf{Lichnerowicz map} is a smooth map between Lichnerowicz manifolds whose pull-back on functions is a Lie algebra morphisms for the Jacobi brackets, a condition, in fact, equivalent to the bivector and vector fields being related by the tangent map of the smooth map. Lichnerowicz maps are special cases of Jacobi maps of trivial line bundles where the conformal factor is set to the constant function $1$.\newline

Suppose that the Jacobi structure of a Lichnerowicz manifold $(M,\pi,R)$ is non-degenerate so that the manifold is endowed with a contact structure $(M,\text{H})$. Since the line bundle is now globally trivial, the hyperplane distribution will be given by the kernel of some non-vanishing ordinary $1$-form 
\begin{equation*}
    \text{H}=\Ker{\theta}\qquad \theta\in\Omega^1(M)
\end{equation*}
called the \textbf{contact form}. In this case $(M,\theta)$ is called an \textbf{exact contact manifold}. It follows by construction that the curvature form of the hyperplane distribution is given by the restriction of the exterior derivative of the contact form
\begin{equation*}
    \omega_\text{H}=-d\theta|_\text{H}=:\omega|_\text{H},
\end{equation*}
and the fact that it is non-degenerate on $\text{H}$ is tantamount to the bilinear form
\begin{equation*}
    \eta:=\theta \otimes \theta + \omega \in \Sec{\otimes^2\Cot M}
\end{equation*}
being non-degenerate, thus giving a musical isomorphism
\begin{equation*}
        \begin{tikzcd}
        \Tan M \arrow[r, "\eta^\flat",yshift=0.7ex] & \Cot M \arrow[l,"\eta^\sharp",yshift=-0.7ex]
        \end{tikzcd}.
\end{equation*}
It then follows that the datum of an exact contact structure $(M,\theta)$ is equivalent to the datum of a non-degenerate Lichnerowicz manifold $(M,\pi,R)$, this is easily seen from the fact that the presence of a musical isomorphism allows for the natural assignment:
\begin{equation*}
    \theta = \eta^\flat R \qquad \omega=\eta^{\flat\flat}\pi
\end{equation*}
where
\begin{equation*}
    \eta^{\flat\flat}\pi(v,w):=\pi(\eta^\flat v, \eta^\flat w) \qquad \forall v,w\in\Tan M.
\end{equation*}

As a closing remark for this section, note that a Lichnerowicz manifold $(M,\pi,R)$ with $R=0$ makes $(M,\pi)$ precisely into a Poisson manifold. Thus we can regard Poisson manifolds as a special subclass of Jacobi manifolds defined on trivial line bundles and characterised by the fact that the Hamiltonian vector field of a function only depends on the differential of the function but not on its point-wise values. It is now a simple check to recover all the notions introduced for Poisson manifolds at the beginning of this section as their Jacobi analogues by imposing the two aforementioned conditions. This now retroactively justifies our choice of common terminology for both Poisson and Jacobi manifolds.

\section{A Mathematical Model for Measurands} \label{ModMes}

With the goal of developing a mathematical formalism that allows for a unit-free formulation of physics, we set out to axiomatize the fundamental notion of ``physical quantity''. Since our proposal will not be based in any particular exiting definition of physical quantity found in the mathematical physics literature, we will use the less common term ``measurand'' to refer to this concept. As it will become apparent in the sections below, this choice of terminology is motivated from our emphasis on the property that is to be measured, before any choice of unit is necessary, rather than the measurement outcome itself, which carries a magnitude and a unit.  The common usage of this term in metrology justifies our choice. Indeed, in the International Vocabulary of Metrology \cite{bimp2012metrology} one finds the following definition:\newline

\emph{2.3 (2.6)\textbf{Measurand}: quantity intended to be measured. NOTE 1 The specification of a measurand requires knowledge of the kind of quantity, description of the state of the phenomenon, body, or substance carrying the quantity, including any relevant component, and the chemical entities involved.}

\subsection{Measuring Canes and Cobblestones} \label{CanCob}

In order to introduce the basic notions of metrology, let us depict a scenario of very primitive science that, however simple and mundane as it may seem, illustrates quite clearly the main formal properties of physical quantities and the process of measurement. Suppose that one is outside in the countryside wandering near a riverbed where plenty of canes and flat cobblestones can be found. These objects, canes and flat cobblestones, are the matter of scientific inquiry in this simple scenario as one sets out to develop an elementary theory of length, area and weight. Let us denote a cane as $C\in \mathcal{C}$ and a flat cobblestone as $F\in\mathcal{F}$, where the calligraphic letters denote the somewhat abstract set of all existing objects within reach. The collection of all these objects $\mathcal{D}:=\mathcal{C}\cup \mathcal{F}$ will be called the \textbf{domain of observation} or simply the \textbf{domain}. One is interested in quantitatively comparable characteristics of objects within the domain of observation; these shall be called \textbf{measurands}. What we mean by ``quantitatively comparable'' here is that one should be able to give a numerical correspondence to a practical construction involving a pair of objects in the domain of observation. The examples of measurands below will make this notion precise. For these we shall only assume basic manipulations that can be practically made with canes and cobblestones together with the inherent human ability to count and keep a consistent memory record.
\begin{itemize}
    \item \textbf{Length.} Take two canes $C_1,C_2\in\mathcal{C}$; lay them on the ground side-by-side with near ends matching; by marking where far ends fall on the ground, move the near end of the shorter cane to the marked position; repeat this process until the far end of the shorter cane exceeds the far end of the longer cane or the two ends meet\footnote{The ``meet'' condition essentially amounts to the accuracy of the measurement process being described here. Assuming that there is a minimum accuracy is not only more realistic but is, in fact, operationally necessary for the measurement algorithm described to terminate in a finite number of steps. We will not deal with the notion of measurement uncertainty in our discussion and simply note that a more realistic numerical output will be two (instead of one) rational numbers corresponding to the lower and upper bounds of the ``meet'' condition, essentially given in this example by what the naked human eye can discern and the shape of the canes' ends.}; if the ends didn't meet, move the longer cane to its far end; repeat until the far ends meet; count the number of markings made for far ends of each cane $n_1,n_2\in\mathbb{N}$ after the process has stopped. The length comparison of cane $C_1$ to $C_2$ is given by the rational number:
    \begin{equation*}
        l(C_1,C_2):=\frac{n_1}{n_2}\in\Rat.
    \end{equation*}
    The property that makes this construction into the desired ``quantitative comparison'' is the fact that its transitivity is precisely modelled by the multiplicative structure of $\Rat$. To see this, consider a third cane $C_3\in\mathcal{C}$ and perform the length comparisons with $C_1$ and $C_2$ to verify the following experimental fact:
    \begin{equation*} \label{2oo3}
        l(C_1,C_2)\cdot l(C_2,C_3)\cdot l(C_3,C_1)=1. \tag{2\textbackslash 3}
    \end{equation*}
    We will refer to this transitivity condition as the \textbf{2-out-of-3 property}. In fact, it is this property what guarantees the consistency of  the formal assignment of a fraction (seen as an equivalence class of solutions to the basic multiplicative equation of integers) to the pair of naturals resulting from the length comparison algorithm described above. We can symbolically summarize the length comparison procedure as an assignment of the form $l:\mathcal{C}\times \mathcal{C}\to \Rat$ satisfying the 2-out-of-3 property.
    \item \textbf{Area.} Via a procedure that now takes flat cobblestones and compares their surface area by superimposing them in a similar fashion to the algorithm of the comparison of lengths, we characterize the comparison of area via a map $a:\mathcal{F}\times \mathcal{F}\to \Rat$ satisfying the 2-out-of-3 property.
    \item \textbf{Weight.} With the simple materials at hand one can construct a very primitive model of scale that will allow for the balancing of pairs of objects on an arm of varying span. An entirely analogue construction of quantitative comparison thus follows but now for any object in our observation domain $\mathcal{D}=\mathcal{C}\cup \mathcal{F}$. We shall call this comparison of weight and denote it by $w:\mathcal{D}\times \mathcal{D}\to \Rat$.
\end{itemize}
For the reminder of this section we make a few remarks about these concrete cases of Length, Area and Weight; nevertheless, the formal constructions that follow shall illustrate general points about arbitrary measurands.\newline

The measurand ``length of canes'' is conceptually thought of as the unifying property of all canes (or any straight object, in general) that allows for the operational system described above, including subdomain of applicability, algorithmic description of the procedure and numerical assignment. In particular, we see that a measurand is inseparably tied to the concrete class of physical objects on which it is defined. If we fix an object to set it as a reference or standard, say we take a cane $C_0\in\mathcal{C}$, we arrive at the notion of \textbf{unit of measurement}: all canes are assigned a rational number given by their length comparison with the unit $C_0$. In other words, a choice of a unit $C_0\in\mathcal{C}$ gives a map defined by
\begin{align*}
l_{C_0}: \mathcal{C} & \to \Rat\\
C & \mapsto l(C_0,C).
\end{align*}
Simply having a map of this form is not particularly useful but the 2-out-of-3 property of $l$ (\ref{2oo3}) allows to identify each arbitrary comparison of length with a quotient of the rational values of the map. More explicitly, having fixed a unit of length so we have a map  $l_{C_0}: \mathcal{C} \to \Rat$, the length comparison of any arbitrary pair of canes $C_1,C_2\in\mathcal{C}$ is given by
\begin{equation*}
        l(C_1,C_2)=\frac{l_{C_0}(C_1)}{l_{C_0}(C_2)}.
\end{equation*}
The length comparison $l(C_1,C_2)$ was defined from the experimental procedure with no reference whatsoever to the notion of unit, we thus expect the above equation to be independent of the choice of unit. Indeed, again by the 2-out-of-3 property (\ref{2oo3}), choosing a different unit $\overline{C}_0$ will introduce the factor $l(\overline{C}_0,C_0)$ in both numerator and denominator of the fraction, hence giving the same value for the length comparison. It is then easy to see that this leads to the usual notion of ``intrinsic length of a cane'' as the abstract entity that only becomes concrete once a unit is chosen. The language often employed to refer to measurement outcomes, that we will enforce throughout our discussion, is now fully justified: a cane $C$ is said to have length $l_{C_0}(C)\in\Rat$ in units of $C_0$.\newline

From the elementary measurement process described above for Length it is clear that given two canes $C_1,C_2\in\mathcal{C}$ we can make the far end of one meet with the near end of the other to virtually (i.e. to all metrological effects) create a new cane $C_1\parallel C_2$, we call it the \textbf{combined} cane or we say that $C_1\parallel C_2$ is the \textbf{combination} of the canes $C_1$ and $C_2$. It is then an experimental fact that for any other cane $C_3$ we have:
\begin{equation*}
    l(C_1\parallel C_2,C_3)=l(C_1,C_3)+l(C_2,C_3).
\end{equation*}
Choosing another pair of canes $\overline{C}_1$ and $\overline{C}_2$ whose comparisons with the previous pair, $l(C_1,\overline{C}_1)=q_1$ and $(C_2,\overline{C}_2)=q_2$, are known, it is also an experimental fact that
\begin{equation*} \label{comblin}
    l(\overline{C}_1\parallel \overline{C}_2,C_3)=q_1\cdot l(C_1,C_3)+q_2\cdot l(C_2,C_3),\tag{comb}
\end{equation*}
and thus we see that the combination construction satisfies a $\Rat$-linearity property. Under the choice of a unit $C_0$, this implies, in particular, an additivity property for the lengths of the canes in units of $C_0$:
\begin{equation*}
    l_{C_0}(C_1\parallel C_2)=l_{C_0}(C_1)+l_{C_0}(C_2).
\end{equation*}
It is not hard to see that any sensible definition of the combination construction for objects comparable by area or weight will satisfy a similar linearity property.\newline

Next we formalize the intuition that areas are products of lengths. Suppose that one finds a corner-shaped piece of wood or stone to act as a standard wedge $\vee$ (a form of rudimentary protractor that only measures one non-flat angle) so it is possible to consistently array two canes in two independent planar directions. Taking any two canes $C_1$ and $C_2$, with the use of the standard wedge one virtually (i.e. to all metrological effects) forms a planar shape directly comparable in area to flat cobblestones, we denote it by $C_1 \vee C_2$ and call it the \textbf{tile} of $C_1$ and $C_2$. We can thus compare the area of a generic flat cobblestone $F\in\mathcal{F}$ to the area of a tile via $a(F,C_1\vee C_2)\in\Rat$, and we call that value the area of $F$ in units of $C_1$by$C_2$. It is clear form the measurement procedures for length and area comparison that the following are experimental facts for any flat cobblestone $F\in\mathcal{F}$ and canes $C_1,C_2,C_3\in\mathcal{C}$:
\begin{align*}\label{tile1}
    a(F,C_1\vee C_2)&=a(F,C_2\vee C_1), \\ 
    a(F,C_1\vee (C_2\parallel C_3))&=a(F,C_1\vee C_2)+a(F,C_1\vee C_3). \tag{tile1}
\end{align*}
Consider now a different tile $\overline{C}_1\vee \overline{C}_2$ whose canes have length comparisons $l(C_1,\overline{C}_1)=q_1$ and $l(C_2,\overline{C}_2)=q_2$ with the canes of the tile $C_1\vee C_2$. It is then an experimental fact that, for any flat cobblestone $F\in\mathcal{F}$:
\begin{equation*}\label{tile2}
    a(F,C_1\vee C_2)=q_1\cdot q_2\cdot a(F,\overline{C}_1\vee \overline{C}_2), \tag{tile2}
\end{equation*}
which gives the usual identification of areas with products of lengths.\newline

Lastly, we briefly remark on a simple implementation of a theory of density enabled by the fact that standard scale described in Weight above allows to give quantitative comparisons between any objects. In particular, if we consider canes $\mathcal{C}$, they can be compared by length or by weight. By simultaneously comparing both characteristics it is possible to construct a new practical procedure that defines a new measurand, what we may call Density, and which is experimentally shown to be given by:
\begin{align*}
d: \mathcal{C}\times \mathcal{C} & \to \Rat\\
(C_1,C_2) & \mapsto \frac{w(C_1,C_2)}{l(C_1,C_2)}
\end{align*}
Such a measurand will be recognized as an entity that measures pairs of canes forming what we call a \textbf{ratio} by weight and length, denoted by $C_1|^w_lC_2$. To make this notion more concrete, we fix a unit of weight $C_0^w$ and a unit of length $C_0^l$, the density of an arbitrary cane $C$ will be specified by the quotient of the comparison in weight $w(C_0^w,C)$ by the comparison in length $l(C_0^l,C)$, then we say that this is the value of the density of $C$ measured in units of $C_0^w$per$C_0^l$. The upshot of the construction of Density is that it provides a natural example of a ``quantitative comparison of quantitative comparisons'', thus defining what we call a \textbf{derived measurand}. These play a fundamental role in practical science, as physical laws are typically formulated in terms of derived measurands. In this example, Density is the common characteristic of canes that allows for a prediction of a comparison of weights with the knowledge of a comparison of length or \emph{vice versa}. 

\subsection{Measurand Spaces as Potentils in the Category of Lines} \label{MeasPot}

In section \ref{CanCob} we explored the empirico-formal structure of some simple measurands in a fair amount of detail so as to motivate the purely mathematical definition that will be given in this section. Before we proceed, let us make two remarks about the use of rational numbers as the outcome of quantitative comparisons.\newline

Firstly, invoking what we could call the \textbf{Principle of Refinement}, we note that in real physics, particularly in the historical conventions of classical mechanics, one assumes that smaller units always exist so that, as far as the minimum accuracy of a measurement process allows, comparisons can be refined indefinitely. One possible way in which this principle could be implemented abstractly in our formal treatment of Length of section \ref{CanCob} is by assuming that the outcome of a length comparison is no longer a single rational value $l(C_1,C_2)\in\Rat$ but, in general, a sequence of rationals $\{l_i(C_1,C_2)\}_{i=1}^\infty$ with
\begin{equation*}
   \lim_{i,j\to\infty} |l_i(C_1,C_2)- l_j(C_1,C_2)|=0.
\end{equation*}
In other words, $\{l_i(C_1,C_2)\}_{i=1}^\infty$ is a Cauchy sequence in $\Rat$. Since from an operational point of view a length comparison needs to yield a single well-defined numerical value, the Cauchy sequence should be forced to be convergent to a single value that we would denote by $l(C_1,C_2)$. This is nothing but the usual construction of $\Real$ by completion of $\Rat$ as a metric space. Quantitative comparison outcomes will henceforth be considered real values and we shall update all the notions introduced in section \ref{CanCob} by replacing any instance of $\Rat$ by $\Real$. It is worth noting, however, that this change is only strictly necessary for our discussion in section \ref{DimHam} where we work within the formalism of differential geometry, which obviously makes extensive use of the formal construction of the real numbers.  Although all the mathematical structure necessary to formalize the definitions of this section is already present in $\Rat$, for the sake of consistency with the rest of the paper, we will only work over the field of reals $\Real$ from now on.\newline

Secondly, we have not given any physical (i.e. operationally constructive) meaning to the zero in the field of numbers, and therefore neither have we given any meaning to negative values as outcomes of quantitative comparisons. What this is really telling us is that only the affine structure of $\Rat$ (or $\Real$, as we will consider from now on) is relevant for a consistent definition of measurand. Although we could proceed by simply assuming the affine structure, we will keep a ``rooted'' approach, with the zero as a valid possible outcome of a measurement, and only explicitly use the linear structure of $\Rat$. Since all vector spaces are, in particular, affine spaces, this is an equivalent and convenient choice that will prove particularly useful in section \ref{DimHam} when we deal with line bundles over smooth manifolds as models for certain measurands in generalized classical mechanics.\newline

Following our discussion in section \ref{CanCob}, a measurand should be given abstractly by a set $\mathcal{M}$, whose elements are identified with the physical objects sharing some property that can be quantitatively compared, and a comparison map $\mathcal{P}:\mathcal{M}\times \mathcal{M}\to \Real$ satisfying the 2-out-of-3 property (\ref{2oo3}). In light of our discussion in section \ref{CatLine} we are compelled to give the following definition: a \textbf{measurand} is identified abstractly with a 1-dimensional real vector space $L\in\Line$. Physical objects sharing the same quantitatively comparable quality are identified with non-zero elements of $L$. For the reminder of this section, unless otherwise stated, by denoting $a\in L$ we assume $a$ to be non-zero. The \textbf{comparison map} is given canonically with the choice of $L$ by the ratio map $l:L\times L^{\times}\to \Real$, which was shown in (\ref{ratio1}) to satisfy the 2-out-of-3 identity
\begin{equation*}
    l_{ab}\cdot l_{bc}\cdot l_{ca}=1
\end{equation*}
as desired.\newline

Recall from section \ref{CanCob} that the comparison map could be equivalently reconstructed via the choice of a unit. The constructions above suggest that a unit should correspond to a choice of non-zero element to act as basis of the vector space. Indeed, with the current abstract definition of measurand, this approach can be formulated elegantly in terms of the category of lines. A \textbf{unit of measurement} $u$ for the measurand $L$ is identified with a factor in the category of lines
\begin{equation*}
    u:L\to \Real.
\end{equation*}

The reader may have noticed that only the multiplicative structure of lines, seen as real vector spaces, has been used so far for the mathematization of the basic properties of measurands, however maps have been required to be linear so as to make them morphisms in $\Line$. The additive structure of the line $L$ identifying a measurand is in fact empirically motivated by the combination construction $\parallel$ introduced in section \ref{CanCob}. There we saw how the comparison map, and, consequently, the unit map satisfy a linearity property (\ref{comblin}) with respect to the $\parallel$ operation. An obvious way to encode this property is to endow the set of physical objects themselves with a $\Real$-vector space structure, identifying $\parallel$ with usual vector addition $+$, and requiring maps to be linear. This then justifies the use of lines and factors in the category $\Line$ to abstractly represent measurands and the choice of units.\newline

With the proposed abstract definition of measurand in mind, let us revisit the examples of Length and Area. We identify the length and the area measurand with lines $L\in\Line$ and $A\in\Line$ respectively. The tile construction $\vee$ of section \ref{CanCob} gave an assignment of pairs of lengths to areas satisfying equations (\ref{tile1}) and (\ref{tile2}), a closer look at them reveals that they are simply a bilinearity property of the $\vee$ construction. If one takes elements of the length measurand $b_1,b_2\in L$ representing two canes $C_1$ and $C_2$, respectively, it is a simple exercise to check that making the association $b_1\otimes b_2 \sim C_1 \vee C_2$ and using the definition of the ratio map for the tensor product line leads precisely to equations (\ref{tile1}) and (\ref{tile2}). We thus see that this has effectively created a new measurand out of $L$ by taking its tensor product $L\otimes L$. If the area measurand $A$ was defined independently, as in our example with canes and cobblestones, the fact that tiles of canes are comparable by area to flat cobblestones (ensured by the presence of a standard wedge $\vee$) will be represented by the existence of a canonical factor $w_\vee:L\otimes L\to A$.\newline

In a similar fashion, the notion of ratio of measurands discussed at the end of section \ref{CanCob} gives an example of a measurand that is characterising the association of comparisons of two other measurands. Let us consider the Weight, Length and Density measurands as lines $W,L,D\in\Line$, respectively. In our discussion of section \ref{CanCob} we observed that quantitative comparisons of Density correspond to associations of comparisons of Length with comparisons of Weight. It is therefore a simple check to see that setting $D:=\Hom(L,W)\cong L^*\otimes W$ will recover the desired properties of what was called a derived measurand.\newline

In summary, we see that our example of a simple theory of Length, Area and Weight of canes and cobblestones in section \ref{CanCob} can be mathematically formulated in terms of two basic measurands: ``length of a straight object'' $L\in\Line$ and ``weight of an object'' $W\in\Line$. Other measurands of empirical relevance, such as Area $A$ and Density $D$, are given as tensor products of the lines characterizing the basic measurands:
\begin{equation*}
    A=L\otimes L, \qquad D=L^*\otimes W.
\end{equation*}

These considerations lead us to the following general definition: having set a \textbf{domain of observation}, a notion that will not be formalized here but that conceptually corresponds to specifying the class of physical objects that will be subject to scientific enquiry, a choice of quantitatively comparable characteristics between discernible phenomena will correspond to a choice of a finite family of lines $\{L_1,\dots, L_k\}$ in the category $\Line$, these will be called the \textbf{base measurands} of the domain or physical theory. All other measurands that may be of scientific relevance will be elements of the \textbf{measurand space} $M$, defined as the potentil of the family of base measurands in the category of lines $(L_1\dots L_k)^\otimes$, see the definition (\ref{potentil}).\newline

It follows structurally that any measurand space $M=(L_1\dots L_k)^\otimes$ comes naturally with the following surjective map:
\begin{align*}
t: M & \to \Int^k\\
a\in L_1^{n_1}\otimes \cdots \otimes L_k^{n_k} & \mapsto (n_1,\dots,n_k), 
\end{align*}
which is called the \textbf{type map} or \textbf{dimension map}. The element $a\in M$ is then said to be of type or dimension $t(a)\in\Int^k$.

\subsection{Typed Field structures in Measurand Spaces} \label{MeasDim}

So far in this section we have presented a first-principles approach to the mathematization of the notion of measurand which accounts for the metrological character of physical quantities as commonly understood in practical science. As philosophically satisfactory as this may be, if we were unable to recover the structure of dimensioned numbers seen in section \ref{DimNum} in a natural way, this would have been a futile effort. In this section we show that the mathematical definition of measurand space proposed in section \ref{MeasPot} indeed captures the structure of dimensioned numbers, which is, after all, the relevant structure for real-life  applications.\newline

Let a measurand space $M$ given by the potentil of the base lines $\{L_1,\dots, L_k\}\subset\Line$, that is
\begin{equation*}
    M=(L_1\dots L_k)^\otimes.
\end{equation*}
Recall that the choice of units on a measurand $L\in\Line$ is a choice of non-zero element $e\in L$ or, equivalently, a non-zero map $u:L\to\Real$. A choice of unit on $L$ induces a canonical choice of unit on $L^*$, i.e. the unique non-zero element $e^*$ such that $e^*(e)=1$, and we will denote this dual unit as a map by $u^{-1}:L^*\to\Real$, the notation $u^{-1}$, not to be confused with the mapping inverse of $u$, shall become useful below. Given another measurand $\overline{L}$ with choice of unit $\overline{u}:\overline{L}\to \Real$, the tensor product $L\otimes \overline{L}$ has also a canonical choice of unit $u\otimes \overline{u}:L\otimes \overline{L}\to \Real$. In particular, we will use the notation $u^n$ to denote the $n$-th tensor power of $u$ or $u^{-1}$ in the same way the notation $L^n$ was used in section \ref{CatLine} to denote the $n$-th tensor power of $L$ or $L^*$ depending on the sign of $n$. We then see that a choice of \textbf{base units} of $M$, that is, a choice of unit for each base line
\begin{equation*}
    \{u_i:L_i\to\Real\}_{i=1}^k,
\end{equation*}
induces a canonical choice of units on each measurand of $M$. Indeed, we have the following family of induced units for all measurands:
\begin{equation*}
    u_{n_1\cdots n_k}:L_1^{n_1}\otimes \cdots \otimes L_k^{n_k} \to\Real
\end{equation*}
where

\begin{equation*}
    u_{n_1\cdots n_k}:=u_1^{n_1}\otimes \cdots \otimes u_k^{n_k}
\end{equation*}
for all $n_1,\dots,n_k\in \Int$.\newline

We are now in the position to prove two results that will link the notion of dimensioned numbers of section \ref{DimNum} and our definition of measurand space.
\begin{prop}[Units in Measurand Spaces induce Typed Field Structures]\label{UnitMeasTypedField}
Let $M=(L_1\dots L_k)^\otimes$ be a measurand space. A choice of unit for each base line $\{u_i:L_i\to\Real\}_{i=1}^k$ endows $M$ with the structure of a $k$-dimensioned real field. In other words, a measurand space together with a choice of units has the structure of an abelian $\Real$-field with $\Int^k$ types. The typed field structure will be denoted by
\begin{equation*}
    (M, \cdot , +,t:M\to \Int^k,m:M\to \Real).
\end{equation*}
\end{prop}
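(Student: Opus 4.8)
The plan is to exhibit $M$ explicitly as the standard $k$-dimensioned real field $\Real\times\Int^k$ determined by the chosen units, and then read off the typed-field axioms. First I would fix notation: writing $g=(n_1,\dots,n_k)\in\Int^k$ and $L_g:=L_1^{n_1}\otimes\cdots\otimes L_k^{n_k}$, the measurand space is the disjoint union $M=\coprod_{g\in\Int^k}L_g$ of lines, so that the type map $t:M\to\Int^k$ of section \ref{MeasPot} is well defined (each nonzero element lies in a unique $L_g$, and the zero vectors of the distinct lines furnish the distinct additive identities $0_g$ demanded by the typed-field axioms). The multiplication $\cdot$ is the tensor product followed by the canonical identification $L_g\otimes L_{g'}\cong L_{g+g'}$ supplied by the potentil group structure, and on each fibre $t^{-1}(g)=L_g$ the operation $+_g$ is simply vector addition in the line $L_g$. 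Finally I would define the magnitude map fibrewise by
\begin{equation*}
    m|_{L_g}:=u_g=u_1^{n_1}\otimes\cdots\otimes u_k^{n_k}:L_g\to\Real,
\end{equation*}
the induced unit factor, so that $m(a)=u_{t(a)}(a)$.

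Two of the three axioms are then immediate. Axiom i), that $t:(M,\cdot)\to(\Int^k,+)$ is a homomorphism, is exactly the componentwise potentil identity $L^{n}\otimes L^{m}=L^{n+m}$: tensoring elements of types $g$ and $g'$ lands in $L_{g+g'}$, so $t(a\cdot a')=t(a)+t(a')$. Axiom ii) holds because $m|_{L_g}=u_g$ is by construction a factor, i.e.\ an invertible linear map out of a one-dimensional space, hence an isomorphism of abelian groups $(L_g,+_g)\xrightarrow{\sim}(\Real,+)$; surjectivity of $m$ and $t$ then follows at once.

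The substantive step is the compatibility axiom iii), $m(a\cdot a')=m(a)\cdot_\Real m(a')$, and this is where the bookkeeping with the canonical factors lives. I would isolate two elementary facts about units. First, the tensor product of units is multiplicative: under the identification $\Real\otimes\Real\cong\Real$ one has $(u\otimes v)(a\otimes a')=u(a)\,v(a')$. Second, a unit and its dual are compatible with the canonical contraction $p_L:L^*\otimes L\to\Real$; writing any $a\in L$ and $\alpha\in L^*$ in terms of the unit element $e$ (with $u(e)=1$) and the dual unit $e^*$ (with $u^{-1}(e^*)=1$, $e^*(e)=1$) gives $u^{-1}(\alpha)\,u(a)=\alpha(a)=p_L(\alpha\otimes a)$. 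Since every induced unit $u_g$ is assembled from the base units $u_i$ by tensor products and duals, and since the canonical factor realizing $L_g\otimes L_{g'}\cong L_{g+g'}$ is a composite of the swaps $b_\sigma$ and the contractions $p_{L_i}$, each of which the units respect by the two facts above, it follows that $u_{g+g'}(a\cdot a')=u_g(a)\,u_{g'}(a')$, which is precisely axiom iii).

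These three verifications show that $(M,\cdot,+,t,m)$ is an abelian $\Real$-field with $\Int^k$ types, i.e.\ a $k$-dimensioned real field; equivalently, $a\mapsto(m(a),t(a))$ is a typed-field isomorphism onto the standard field $\Real\times\Int^k$. The only genuine obstacle is the third axiom: one must check that the induced units intertwine correctly with all the canonical factors used to define the product, in particular that dual base units cancel against their base units under the contraction $p_{L_i}$ rather than merely matching up to an undetermined scalar. Once the two compatibility facts are in hand this is routine, so the argument reduces to that verification.
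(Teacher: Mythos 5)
Your proposal is correct and follows essentially the same route as the paper's proof: $t$ is the potentil type map, $m$ is defined fibrewise by the induced units $u_{n_1\cdots n_k} = u_1^{n_1}\otimes\cdots\otimes u_k^{n_k}$, multiplication is the tensor product, and the axioms are read off from the potentil group structure of section \ref{CatLine}. The only difference is one of thoroughness: the paper dispatches the multiplicative compatibility in a single phrase, whereas you explicitly verify axiom iii) by checking that the induced units intertwine with the canonical factors $b_\sigma$ and $p_{L_i}$ realizing $L_g\otimes L_{g'}\cong L_{g+g'}$ --- a worthwhile elaboration, since that is exactly where the content of the axiom lies, but not a different argument.
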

\begin{proof}
Recall from the definition of typed field of section \ref{DimNum} that we first need to find two maps $t:M\to \Int^k$ and $m:M\to \Real$. The map $t$ is given by the type map of section \ref{MeasPot} and we define $m$ from the choice of units by setting
\begin{equation*}
    m|_{L_1^{n_1}\otimes \cdots \otimes L_k^{n_k}}:= u_{n_1\cdots n_k}.
\end{equation*}
Additionally, $M$ should carry a multiplicative binary operation and a partially-defined additive operation compatible with the maps $t$ and $m$. These are given by the tensor product $\otimes$ and vector addition $+$, respectively, the latter being only naturally defined on each measurand of the whole potentil forming the measurand space $M$, as it is taken to be a disjoint union of vector spaces. The compatibility of $t$ is tantamount to the equations derived for potentils at the end of section \ref{CatLine}, where we saw that $(M,\otimes)$ formed an abelian group isomorphic to $\Int^k$, the map $t$ is indeed realizing this isomorphism canonically. The compatibility of $m$ means that it should restrict to preimages of $t$ as an isomorphism of fields. Indeed, by construction we have:
\begin{equation*}
    t^{-1}(n_1,\dots,n_k)=L_1^{n_1}\otimes \cdots \otimes L_k^{n_k},
\end{equation*}
and since $m$ is defined to be the unit as a map, the fact that units are non-zero maps implies that $m$ is the desired isomorphism of 1-dimensional real vector spaces, which is of course an isomorphism of fields.
\end{proof}
\begin{prop}[Changes of Units induce Type Field Morphisms]\label{ChangeUnitMeas}
Let $M=(L_1\dots L_k)^\otimes$ be a measurand space and $\{u_i:L_i\to\Real\}_{i=1}^k$ and $\{u_i':L_i\to\Real\}_{i=1}^k$ two choices of base units, then the $k$-dimensioned field structures induced by these, $(M,t,m)$ and $(M,t,m')$, respectively, are related by a type-preserving isomorphism of typed fields.
\end{prop}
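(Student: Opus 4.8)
The plan is to exploit the fact that both typed field structures live on the \emph{same} underlying measurand space $M$ and share the \emph{same} type map $t$, since $t$ was defined in Proposition~\ref{UnitMeasTypedField} purely from the potentil decomposition and is independent of any choice of units. Consequently a type-preserving isomorphism between $(M,t,m)$ and $(M,t,m')$ must fix each type component $t^{-1}(n_1,\dots,n_k)=L_1^{n_1}\otimes\cdots\otimes L_k^{n_k}$ setwise, and the only freedom left is a fibre-wise rescaling; the entire content of the statement is that these rescalings assemble into a single multiplicative map.

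First I would record how the two choices of base units compare. Each $u_i$ and $u_i'$ is a factor $L_i\to\Real$, i.e.\ a non-zero element of the one-dimensional space $L_i^*$, so there is a unique scalar $c_i\in\Real^\times$ with $u_i'=c_i\,u_i$. The key bookkeeping step is to check that this relation propagates to every tensor power \emph{uniformly in the sign of the exponent}: because the induced dual unit scales inversely, $u_i'^{-1}=c_i^{-1}u_i^{-1}$, one gets $u_i'^{\,n_i}=c_i^{\,n_i}u_i^{\,n_i}$ for every $n_i\in\Int$. Tensoring these over $i$ then yields, on the component of type $(n_1,\dots,n_k)$,
\begin{equation*}
    m'|_{L_1^{n_1}\otimes\cdots\otimes L_k^{n_k}}=\Big(\textstyle\prod_{i=1}^{k}c_i^{\,n_i}\Big)\,m|_{L_1^{n_1}\otimes\cdots\otimes L_k^{n_k}}.
\end{equation*}

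Next I would define the candidate isomorphism $\psi\colon M\to M$ component-wise, letting $\psi$ act on the type-$(n_1,\dots,n_k)$ component as multiplication by the scalar $\prod_{i=1}^{k}c_i^{-n_i}$; equivalently $\psi|_{(n_1,\dots,n_k)}:=(m'|_{(n_1,\dots,n_k)})^{-1}\circ(m|_{(n_1,\dots,n_k)})$, the unique linear isomorphism of that one-dimensional space intertwining the two magnitude maps, so that $m'\circ\psi=m$ by construction. It is then a routine verification that $\psi$ is a type-preserving typed field morphism: it fixes $t$ since it preserves each component; it is additive on each fibre because it is scalar multiplication there; and it is multiplicative because, under $\otimes$, the type exponents add while the corresponding scalars $\prod_i c_i^{-n_i}$ multiply, so $\psi(a\otimes b)=\psi(a)\otimes\psi(b)$. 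Being a fibre-wise linear isomorphism, $\psi$ is invertible, hence an isomorphism of typed fields.

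I do not expect a genuine obstacle here; the one point that genuinely requires care---and which is the crux of why a \emph{single} map works globally---is the uniform exponent law $u_i'^{\,n_i}=c_i^{\,n_i}u_i^{\,n_i}$ including negative $n_i$. This is exactly what guarantees that the fibre-wise scalars are multiplicative across the whole potentil, and not merely on the non-negative cone, and it is the reason the component-wise rescalings glue into the multiplicative morphism $\psi$ rather than into an unstructured collection of unrelated fibre isomorphisms.
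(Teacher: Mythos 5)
Your proof is correct and follows essentially the same route as the paper's: both characterize the change of base units by scalars $u_i'=c_i\,u_i$ and define the typed field isomorphism fibre-wise on each $t$-fibre $L_1^{n_1}\otimes\cdots\otimes L_k^{n_k}$ as multiplication by a product of integer powers of these scalars (the paper's $\phi$ uses $\prod_i c_i^{n_i}$, which is just the inverse of your $\psi$ with $\prod_i c_i^{-n_i}$, i.e.\ the same isomorphism read in the opposite direction). Your explicit checks --- the dual-unit scaling $u_i'^{-1}=c_i^{-1}u_i^{-1}$ giving the exponent law for all $n_i\in\Int$, and the multiplicativity of $\psi$ under $\otimes$ --- merely spell out details the paper's one-line proof leaves implicit.
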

\begin{proof}
This follows simply from the fact that each change in base unit can be characterised by a non-zero real number $\alpha_i$ such that $u_i'=\alpha_i\cdot u_i$. The typed field isomorphism $\phi$ is then given on $t$-fibres as
\begin{equation*}
    \phi|_{L_1^{n_1}\otimes \cdots \otimes L_k^{n_k}}= \alpha_1^{n_1} \cdots \alpha_k^{n_k}.
\end{equation*}
\end{proof}

\section{Hamiltonian Mechanics} \label{HamMec}

Before we propose our generalization of the Hamiltonian phase space formalism, we should first make precise what is meant by ``Hamiltonian mechanics'', both in terms of its mathematical formulation and its physical content. In this section we review the conceptual foundations of the theory and give a categorical facelift to the usual formulation of canonical Hamiltonian mechanics.

\subsection{Conceptual Revision of the Hamiltonian Phase Space Formalism}\label{ConHam}

The available literature on Hamiltonian mechanics is vast and it comes in many different flavours: from the more rooted in historical formulations and common practice among physicists in \cite{goldstein2002classical} or \cite{landau1960classical}; to the more formal and familiar among geometers in \cite{abraham1978foundations} or \cite{arnold2013mathematical}; including the more mathematically sophisticated treatments connecting with the topic of quantization \cite{landsman2012mathematical} or \cite{puta2012hamiltonian}. However, despite the large volume of high quality texts available, it is hard to find a single source offering a systematic account of the physical principles underlying the choices of mathematical structures used in the modern embodiment of Hamiltonian mechanics. This section is aimed at bridging this gap. Far from giving a complete reconstruction of conventional Hamiltonian mechanics from first principles, we will list a series of physical principles that can be justifiably identified with particular choices of mathematical axioms. In doing so, we will also set some basic physical terminology to precise mathematical definitions and notation that will be used in subsequent sections.\newline

In order to fix some ideas and begin our discussion of the physical principles underlying Hamiltonian mechanics, let us consider a general picture of a working physicist that is interested in temporal series of measurements (represented by the tape of binary values) as shown in figure \ref{ModPhysObs}. In that diagram, metrological operations, such as the ones discussed in section \ref{CanCob}, are summarized as the bottom right arrow and the mathematical formulation of Hamiltonian mechanics would be an example of the ``Model'' blob. We hope the reader sees in this picture a rough sketch of the abstract working of physics as a formal natural science. We will be dealing with the ``Model'' side in our discussion almost exclusively but it will be helpful to have this picture in mind throughout the reminder of this section.\newline

\begin{figure}[h]
\centering
\includegraphics[scale=0.4]{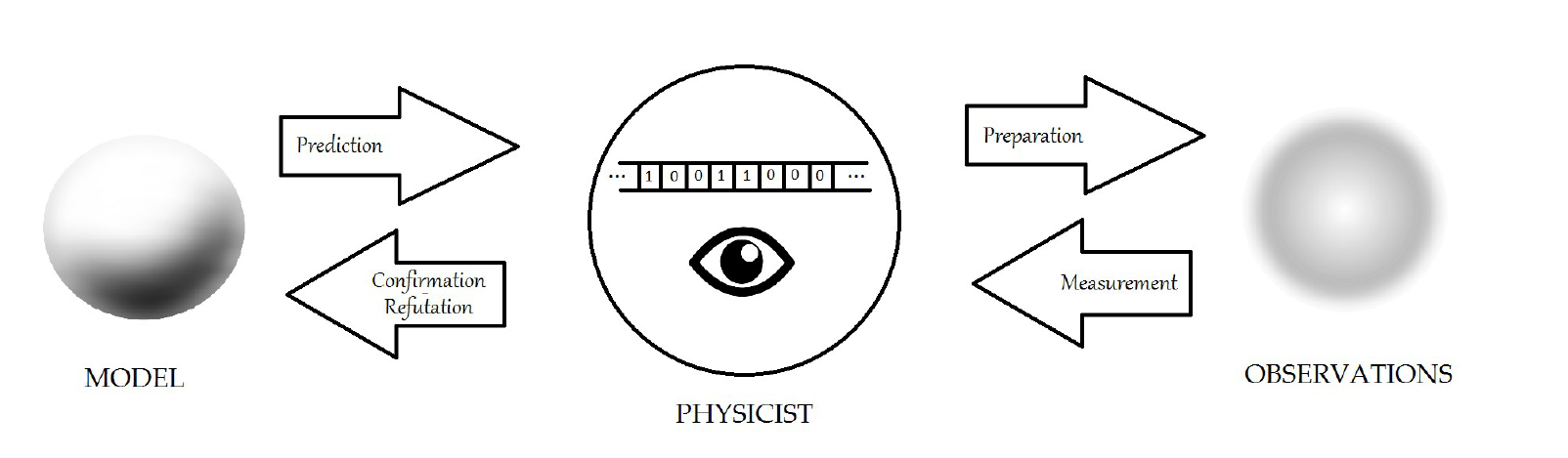}
\caption{A general paradigm of the scientific activity.}
\label{ModPhysObs}
\end{figure}

In a practical experimental situation, the temporal series of measurements will be a discrete sequence of rational numbers $\{m_i\in\Rat\}$, each value being the outcome of a quantitative comparison as described in section \ref{CanCob} e.g. positions of a particle at different times. Following the Principle of Refinement of section \ref{MeasPot}, the discrete temporal series of measurements will be replaced by a smooth dependence of real values $m(t)\in\Real$ with a real \textbf{time} parameter $t\in I\subset \Real$. An orientation of the observer's time interval $I$ is chosen so that the \textbf{past} $\{t<t_0\}$ and \textbf{future} $\{t>t_0\}$ of a given \textbf{present} $t_0\in I$ are defined. The task of a theory of mechanics is to find a mathematical model that takes as inputs the mathematical parameters corresponding to a given experimental preparation and produces as output a temporal series $m(t)$. What follows is a list of the main physical intuitions behind the formal content of a dynamical theory in Hamiltonian phase spaces. Note that the Principle of Refinement will be implicitly invoked every time we use smooth manifolds or smooth maps to represent physical notions.
\begin{itemize}
    \item \textbf{Principle of Realism.} A physical system exists independent of the observer studying it. Discernible configurations of the system, sets of equivalent experimental preparations and outcomes, are called \textbf{states} $s$ and are identified with points in a smooth manifold $P$, called the \textbf{phase space}.
    \item \textbf{Principle of Characterization.} Properties or characteristics of a system are smooth assignments of measurement outcome values to each point of the phase space. In the case of conventional classical mechanics this gives the usual definition of \textbf{observable} as a real-valued function $f\in \Obs{P}:=\Cin{P}$. Note that observables $\Obs{P}$ form a ring with the usual operations of point-wise addition and multiplication.
    \item \textbf{Principle of Kinematics.} The observer studying the system exists simultaneously with the system. In the same way that the observer's memory state is mapped uniquely into the time interval used to array the experimental temporal series, the physical system is thought to be in a single state corresponding to each of the observer's time parameter values. A \textbf{motion} is defined as a smooth curve in phase space $c:I\to P$ parameterized by the observer's time $t\in I\subset \Real$. Phase spaces are assumed to be path-connected so that any state is  connected to any other state by a motion, at least virtually, not necessarily physically. A family of motions $\{c(t)\}$ is called an \textbf{evolution} $\mathcal{E}$ on the phase space $P$ if all the states are included in the path of some curve. More precisely, we call $\mathcal{E}=\{c(t)\}$ an evolution if
    \begin{equation*}
        \forall s_0\in P \quad \exists c(t)\in \mathcal{E} \quad | \quad c(t_0)=s_0 \text{ for some } t_0\in I.
    \end{equation*}
    \item \textbf{Principle of Observation.} The observed time series of measurement outcomes are the result of observables taking values\footnote{We will not deal with non-trivial implementations of measurement outcomes, such as the collapse mechanism in the usual Hilbert space formulation of quantum mechanics, but we note that the general principles of the Hamiltonian phase space formalism described in this section, excluding the Principle of Observation, apply \emph{verbatim} to quantum mechanics.} along a particular motion. More concretely, given a motion $c(t)$ and an observable $f\in\Obs{P}$, the predicted temporal series is simply given by
    \begin{equation*}
        m(t)=(f\circ c)(t).
    \end{equation*}
    \item \textbf{Principle of Reproducibility.} Similar experimental preparations should give similar observation outcomes. An evolution $\mathcal{E}$ implementing this principle will satisfy the following property for all pairs of motions $c(t),c'(t)\in\mathcal{E}$:
    \begin{equation*}
        t_0\in I,\quad c(t_0)=c'(t_0) \quad \Rightarrow \quad c(t)=c'(t) \quad \forall t\in I,
    \end{equation*}
    in other words, $\mathcal{E}$ is a family of non-intersecting curves parameterized by $t\in I$.
    \item \textbf{Principle of Dynamics.} Future states of a physical system are completely determined\footnote{We are careful not to call this determinism since it is the abstract state of a system, not the measurement outcomes, what are assumed to evolve deterministically. Even when other measurement paradigms are in place, such as a collapse mechanism that forces the definition of a pre-measurement state and post-measurement state, if the theory relies on ordinary differential equations for the modelling of time evolution, the principle of dynamics will be implicitly used, at least to some degree.} by any given present state, at least locally. Enforcing this condition on motions for arbitrary small time intervals leads to the condition that a motion must be uniquely specified by the values of its tangent everywhere. In other words, a motion satisfying the Principle of Dynamics must be an integral curve of some smooth vector field, thus justifying the chosen name for this principle. Indeed, existence and uniqueness of ODEs implies that a smooth vector field $X\in\Sec{\Tan P}$ gives (at least locally) an evolution defined by the family of its integral curves $\mathcal{E}^X$ automatically satisfying the Principle of Reproducibility and the Principle of Dynamics. Arguing from a different angle, conjunction of the Principle of Observation with the Principle of Dynamics leads, via the local notion of directional derivative, to predicted temporal series being the integration of some derivation on observables. We thus conclude that an evolution on a phase space $P$ satisfying all the principles stated above is given by a choice of \textbf{dynamics} $X\in\Dyn{P}$, which is equivalently understood as a vector field on the phase space or a derivation of observables
    \begin{equation*}
        \Sec{\Tan P}\cong : \Dyn{P} :\cong \Dr{\Cin{P}}.
    \end{equation*}
    \item \textbf{Principle of Conservation.} Conserved quantities are a fundamental building block of experimental mechanics: one can only study time-dependent phenomena effectively when enough variables can be assumed to be constants to the effects of the experiment at hand. We could promote this to the more general and abstract requirement that any observable $f\in\Obs{P}$ has an associated evolution $\mathcal{E}^f$, given by some vector field $X_f$, along which the predicted time series are constant. We call such an assignment a \textbf{Hamiltonian map}
    \begin{equation*}
        \eta : \Obs{P}\to \Dyn{P},
    \end{equation*}
    which is required to satisfy
    \begin{equation*}
        \eta(f)[f]=0 \quad \forall f\in\Obs{P}.
    \end{equation*}
    In conventional Hamiltonian mechanics this is assumed to be given by the slightly stronger structure of a Poisson bracket on observables $(\Obs{P},\{,\})$, thus making the phase space into a Poisson manifold $(P,\pi)$. The Hamiltonian map is then $\eta:=\pi^\sharp \circ d$ and, following from the Jacobi identity of the Poisson bracket, it is a Lie algebra homomorphism.
    \item \textbf{Principle of Reductionism.} The theoretical description of a physical system specified as a subsystem of a larger system must be completely determined by the theoretical description of the larger system and the information of how the smaller system sits inside. This principle is implemented by demanding that knowledge about submanifolds in a phase space allows us to construct new phase spaces. In Hamiltonian mechanics this corresponds to the identification of constraints with coisotropic submanifolds and the procedure of Poisson reduction.
    \item \textbf{Principle of Combination.} The theoretical description of a system formed as a combination of other two systems must be completely determined by the theoretical descriptions of each of the parts and the information of how they interact. This is implemented by demanding that there is a \textbf{combination product} construction for phase spaces
    \begin{equation*}
        \between :  (P_1,P_2)\mapsto P_1 \between P_2
    \end{equation*}
    that acts as a categorical product in the corresponding category of phase spaces. In the case of conventional Hamiltonian mechanics, this is simply given by the usual definition of the Cartesian product of Poisson manifolds.
    \item \textbf{Principle of Symmetry.} A theoretical description of a system containing states that are physically indistinguishable should contain all the information to form the faithful theoretical description of the system. Physically indistinguishable states are commonly regarded to be orbits of some Lie group action on the phase space, thus an implementation of this principle will require that from the information of a Lie group action preserving some existing structure on a phase space, a new phase space is constructed whose states are classes of physically-indistinguishable states. In Hamiltonian mechanics this is implemented via the theory of Poisson group actions and equivariant moment maps.
\end{itemize}

\subsection{Canonical Hamiltonian Mechanics as the cotangent functor $\Cot$}\label{CanHamT}

In this section we shall see how the mathematical structure motivated by the principles introduced in section \ref{ConHam} appears naturally in standard classical Hamiltonian mechanics. In the interest of compactness and clarity of exposition, we are to give a categorical facelift to the usual formulation of the canonical formalism. Far from a comprehensive account, what follows should serve as a summary of essential structural features.\newline

Classical mechanics is erected upon the fundamental notion of \textbf{configuration space}: the set of static states of a physical system, such as the possible spatial positions of moving particles or the possible shapes of a vibrating membrane. Invoking the Principle of Refinement once more, our definition of a configuration space will be simply that of a smooth manifold whose points $q\in Q$ are identified with the different static states of a given physical system. We then propose the definition of \textbf{the category of configuration spaces} $\Conf$ simply to be the category of smooth real manifolds. We now give physical interpretation to the natural categorical structure present in $\Conf$:
\begin{itemize}
    \item The measurable static properties of a physical system with configuration space $Q\in\Conf$, what we call \textbf{static observables}, are simply the smooth real-valued functions $\Obs{Q}:=\Cin{Q}$. More generally, we see that the assignment of observables to configuration spaces is given by a contravariant functor $\text{Obs}:\Conf\to\Ring$, which we regard as the categorical version of the Principle or Characterization for configuration spaces.
    \item A subsystem is characterised by restricting possible positions of a larger system, that is, by an inclusion of an embedded submanifold $i:S\to Q$. We then see that monomorphisms in the category $\Conf$ implement the Principle of Reductionism for configuration spaces.
    \item Physically-indistinguishable static states in a configuration space $Q\in\Conf$ are related by equivalence relations $\sim$ that have quotients faithfully characterizing the physical system, that is, there is a surjective submersion $p:Q\to Q/\sim$. In particular, a free and proper action of a Lie group $G\Acts Q$ gives an example implementing the Principle of Symmetry via $p:Q\to Q/G$. These are epimorphisms in the category of configuration spaces $\Conf$.
    \item Given two configuration spaces $Q_1,Q_2\in\Conf$ representing the possible positions of two physical systems, the combined system resulting from considering them together will have static states given by all the possible pairs of static states in each of the two systems. The categorical manifestation of the Principle of Combination for configuration spaces is then simply the existence of the Cartesian product of smooth manifolds $Q_1\times Q_2\in\Conf$ as a categorical product.
    \item A temporal series of static states will be called a \textbf{path} of the physical system. Paths in a configuration space $Q\in\Conf$ will be given by smooth curves $r:I\to Q$ parameterized by the observer's time parameter $t\in I\subset \Real$. Following the discussion on the Principle of Dynamics of section \ref{ConHam}, dynamics on the static states of a configuration space $Q\in\Conf$ are identified as
    \begin{equation*}
        \Sec{\Tan Q}\cong : \Dyn{Q} :\cong \Dr{\Cin{Q}}.
    \end{equation*}
\end{itemize}

The basic goal of the canonical formalism is to provide a systematic assignment of a ``space of dynamical states'' to a given configuration space in such a way that it acts as the phase space of the system and the principles of section \ref{ConHam} are naturally implemented. This will be described below as a functor from the category of configuration spaces $\Conf$ to a suitable category of phase spaces $\mathcal{P}$. This formal correspondence will be called the Hamiltonian functor. For the target category we will choose (a subcategory of) Weinstein's symplectic category $\Symp$, whose objects are symplectic manifolds and the morphisms are canonical relations, i.e. Lagrangian submanifolds of the Cartesian product of symplectic manifolds. We direct the reader to \cite{weinstein2009symplectic} for a detailed discussion on the subtleties of this definition of symplectic category.\newline

In order to motivate the definition of the Hamiltonian functor, let us recall a few facts about the symplectic geometry of cotangent bundles:
\begin{itemize}
    \item The cotangent bundle of any smooth manifold $Q$ carries a canonical symplectic structure
    \begin{equation*}
        \begin{tikzcd}
        \Cot Q \arrow[d, "\pi_Q"]\\
        Q
        \end{tikzcd}
        \qquad \qquad \omega_Q:=-d\theta_Q, \quad \text{ with }\quad  \theta_Q|_{\alpha_q}(v):=\alpha_q(\Tan_{\alpha_q}\pi_Q (v)) \quad \forall v\in \Tan_{\alpha_q}(\Cot Q).
    \end{equation*}
    A symplectic structure is a non-degenerate case of Poisson structure thus inducing a Poisson algebra on functions
    \begin{equation*}
        (\Cin{\Cot Q},\cdot,\{,\}_Q).
    \end{equation*}
    \item There are natural inclusions of vector fields and functions on $Q$ as linear and basic functions on $\Cot Q$, respectively:
    \begin{equation*}
        l_Q:\Sec{\Tan Q}\to \Cin{\Cot Q}, \qquad \pi_Q^*:\Cin{Q}\to \Cin{\Cot Q}.
    \end{equation*}
    These functions are, in fact, enough to generate all the Poisson brackets since $(\Cin{\Cot Q},\cdot,\{,\}_Q)$ is an example of a (fibre-wise) linear Poisson structure,
    \begin{align*}
        \{l_Q(X),l_Q(Y)\}_Q & =  l_Q([X,Y])\\
        \{l_Q(X),\pi_Q^*f\}_Q & =  \pi_Q^* X [f]\\
        \{\pi_Q^*f,\pi_Q^*g\}_Q & = 0.
    \end{align*}
    \item The cotangent bundle of a Cartesian product is canonically symplectomorphic to the Cartesian product of symplectic manifolds, which gives the desired combination product as a categorical product
    \begin{equation*}
        (\Cot (Q_1\times Q_2),\omega_{Q_1\times Q_2})\stackrel{\Symp}{\cong} (\Cot Q_1\times \Cot Q_2, \Proj_1^*\omega_{Q_1} \oplus \Proj_2^*\omega_{Q_2}).
    \end{equation*}
    \item A smooth map $\varphi:Q_1\to Q_2$ induces a canonical relation $\Cot \varphi\subset \Cot Q_1 \times \overline{\Cot Q_2}$ called the cotangent lift of $\varphi$  and defined by
    \begin{equation*}
        \Cot \varphi:=\{(\alpha_q,\beta_p)| \quad \varphi(q)=p, \quad \alpha_q=(\Tan_q\varphi)^*\beta_p\}.
    \end{equation*}
    Here $\overline{\Cot Q_2}$ denotes $(\Cot Q_2,-\omega_{Q_2})$. When $\varphi:Q_1\to Q_2$ is a diffeomorphism, its cotangent lift becomes the graph of a symplectomorphism $\Cot \varphi :\Cot Q_2\to \Cot Q_1$. Composition of cotangent lifts is always strongly transversal in the sense of \cite{weinstein2009symplectic} and thus the cotangent bundle construction can be seen as a contravariant functor
    \begin{equation*}
        \Cot : \Conf \to \Symp.
    \end{equation*}
    \item An embedded submanifold $i:S\to Q$ gives the coisotropic submanifold $\Cot Q|_S\subset (\Cot Q,\omega_Q)$ inducing the coisotropic reduction diagram
    \begin{equation*}
    \begin{tikzcd}
        \Cot Q|_S \arrow[r, hookrightarrow] \arrow[d, twoheadrightarrow] & (\Cot Q,\omega_Q) \\
        (\Cot S,\omega_S).
    \end{tikzcd}
    \end{equation*}
    Note that the surjective submersion results from quotienting by the foliation given by the conormal bundle of $S$, in other words
    \begin{equation*}
        \Cot S\cong \Cot Q|_S/(\Tan S)^0
    \end{equation*}
    as vector bundles over $S$.
    \item Given a free and proper group action $\phi: G\times Q\to Q$, with infinitesimal action $\psi:\mathfrak{g}\to \Sec{\Tan Q}$ and smooth orbit space $\Tilde{Q}:=Q/G$, the cotangent lift gives an action by symplectomorphisms $\Cot \phi : G\times \Cot Q\to \Cot Q$ with equivariant (co)moment map given by
    \begin{equation*}
        \overline{\mu}:=l\circ \psi : \mathfrak{g}\to \Cin{\Cot Q},
    \end{equation*}
    this induces the symplectic reduction diagram
    \begin{equation*}
        \begin{tikzcd}
        \mu^{-1}(0) \arrow[r, hookrightarrow] \arrow[d, twoheadrightarrow] & (\Cot Q,\omega_Q) \\
        (\Cot \Tilde{Q},\omega_{\Tilde{Q}}).
        \end{tikzcd}
    \end{equation*}
\end{itemize}

In light of these results, we are compelled to define the \textbf{category of phase spaces} simply as the image of the category of configuration spaces under the cotangent functor, $\mathcal{P}:=\Cot (\Conf)$. This choice, instead of simply setting $\mathcal{P}:=\Symp$, is made so that composition of canonical relations is always well-defined, otherwise transversality issues may occur. We are now in the position to argue that the structural content of standard Hamiltonian mechanics can be summarized as the \textbf{Hamiltonian functor}, mapping categorical structures of configuration spaces to those of phase spaces as shown below:
\begin{center}
\begin{tabular}{ c c c }
Category of Configuration Spaces & Hamiltonian Functor  & Category of Phase Spaces \\
\hline
 $\Conf$ & $\begin{tikzcd}\phantom{A} \arrow[r,"\mathbb{H}_\Conf"] & \phantom{B} \end{tikzcd}$ & $\mathcal{P}$ \\ 
 $Q$ & $\begin{tikzcd} \phantom{Q} \arrow[r, "\Cot ",mapsto] & \phantom{Q} \end{tikzcd}$ & $(\Cot Q,\omega_Q)$ \\ 
 $\Obs{Q}$ & $\begin{tikzcd} \phantom{Q} \arrow[r,hookrightarrow, "\pi_Q^*"] & \phantom{Q}  \end{tikzcd}$ & $\Obs{\Cot Q}$ \\
 $\Dyn{Q}$ & $\begin{tikzcd} \phantom{Q} \arrow[r,hookrightarrow, "l_Q"] & \phantom{Q}  \end{tikzcd}$ & $\Obs{\Cot Q}$ \\
 $Q_1\times Q_2$ & $\begin{tikzcd} \phantom{Q} \arrow[r,mapsto, "\Cot "] & \phantom{Q}  \end{tikzcd}$ & $\Cot Q_1\times \Cot Q_2$ \\
 $\varphi:Q_1\to Q_2$ & $\begin{tikzcd} \phantom{Q} \arrow[r,mapsto, "\Cot "] & \phantom{Q}  \end{tikzcd}$ & $\Cot \varphi\subset \Cot Q_1\times \overline{\Cot Q_2}$ Lagrangian\\
 $i:S\to Q$ & $\begin{tikzcd} \phantom{Q} \arrow[r,mapsto, "\Cot"] & \phantom{Q}  \end{tikzcd}$ & $\Cot Q|_S\subset \Cot Q$ coisotropic \\
 $G\Acts Q$ & $\begin{tikzcd} \phantom{Q} \arrow[r,mapsto, "\Cot"] & \phantom{Q}  \end{tikzcd}$ & $G\Acts \Cot Q$ Hamiltonian action \\
\end{tabular}
\end{center}
We invite the reader to compare the list of principles of section \ref{ConHam} with the results presented in this section in order to confirm that the Hamiltonian functor indeed captures the desired physical insights of the phase space formalism. We note, in particular, that both the static observables and the dynamics over static states become part of the observables under the Hamiltonian functor. This corresponds to the usual physical intuition that position and velocities are the fundamental properties of a system to be studied in a theory of classical mechanics. The upshot of our categorical formulation is that, once a choice of the category of configuration spaces is found to correctly implement the desired physical principles attributed to static states, the Hamiltonian functor into phase spaces is given without any further choices. It is in this way that the historical name of \textbf{canonical Hamiltonian mechanics} finds a renewed meaning.\newline

Let us remark at this point that our definition of the category of phase spaces as $\mathcal{P}=\Cot (\Conf)$ has been chosen to emphasize the essential structural aspects of mechanics but the reader is reminded of the numerous examples of classical systems whose dynamical phase space is modelled more generally by Poisson manifolds. Examples of these include constrained systems that give coisotropic reductions of cotangent bundles, which are generically non-symplectomorphic to cotangent bundles, and duals of Lie algebras of symmetries, e.g. the angular velocity of a rigid body is seen as an element of $\mathfrak{so}(3)^*$. For a more comprehensive discussion we could set the category of Poisson manifolds with coisotropic relations as the category of phase spaces, $\mathcal{P}=\Poiss$, but this is beyond the scope of the present work. \newline

Once a physical system with configuration space $Q$ is canonically assigned a phase space $\Cot Q$ as above, the only remaining task left for the physicist is to determine which dynamics will produce predictions that match experimental measurements. The Poisson structure present in $\Cot Q$ reduces this problem to the choice of an observable $h\in\Obs{\Cot Q}$ which, under the Hamiltonian map, gives a choice of conservative dynamics $\{h,-\}_Q\in\Dyn{\Cot Q}$. This distinguished observable is often called the \textbf{energy} of the system. It generates the time evolution of the system and is, itself, a fundamental conserved quanity. Mathematically, this is a trivial fact by construction since
\begin{equation*}
    \{h,-\}_Q[h]=\{h,h\}_Q=0
\end{equation*}
by antisymmetry of the Lie bracket. Given two systems with a choice of energy $(Q_1,h_1)$ and $(Q_2,h_2)$, where $h_i\in\Cin{\Cot Q_i}$, their combined product has a canonical choice of energy given by the sum of pull-backs via the canonical projections $h_1+h_2:=\Proj_1^*h_1+\Proj_2^*h_2\in\Cin{\Cot Q_1\times \Cot Q_2}$. This gives an extra line of assignments to the Hamiltonian functor:
\begin{center}
\begin{tabular}{ c c c }
Category of Configuration Spaces & Hamiltonian Functor  & Category of Phase Spaces \\
\hline
  $(Q_1,h_1)\times (Q_2,h_2)$ & $\begin{tikzcd} \phantom{Q} \arrow[r,mapsto, "\Cot "] & \phantom{Q}  \end{tikzcd}$ & $h_1+h_2 \in\Cin{\Cot Q_1\times \Cot Q_2}$ \\
\end{tabular}
\end{center}

The phase space formalism described so far in this section is general enough to account for a vast class of mechanical systems, however, this generality comes at a price: the Hamiltonian functor above fails to select a preferred choice of energy observable for a given configuration space. Turning to one of the earliest examples of mechanics we find inspiration to redefine the category of configuration spaces in order to account for some extra physical intuitions. In Newtonian mechanics, configuration spaces are submanifolds of Euclidean space and Cartesian products thereof, thus always carrying a Riemannian metric, encoding the physical notion of distance and angle; often also with a choice of potential, which is a function on the configuration space. This motivates us to refine our definition of configuration space and define the \textbf{category of Newtonian configuration spaces} $\Conf_{\text{Newtonian}}$ whose objects are triples $(Q,g,V)$, where $Q\in\Conf$, $g\in \Sec{\odot ^2 \Cot Q}$ Riemannian metric and $V\in\Cin{Q}$, and whose morphisms $\varphi:(Q_1,g_1,V_1)\to (Q_2,g_2,V_2)$ are smooth maps $\varphi:Q_1\to Q_2$ such that $g_1-\varphi^*g_2$ is positive semi-definite and $V_1=\varphi^*V_2$. When $\varphi$ is a diffeomorphism, a morphism in this category is, in particular, an isometry between $Q_1$ and $Q_2$. Note that a metric defines a quadratic form on tangent vectors $K_g:\Tan Q\to \Real$ given simply by $K_g(v):=\tfrac{1}{2}g(v,v)$, then, given the equivalence between quadratic forms and positive-definite bilinear forms, we see that the datum of a Newtonian configuration space $(Q,g,V)$ is equivalent to a choice of function on static states $V\in\Cin{Q}$ and a choice of quadratic function on velocities $K\in\Cin{\Tan Q}$, indeed the two fundamental notions of classical mechanics. The Cartesian product of configuration spaces gets updated to a categorical product in $\Conf_{\text{Newtonian}}$ by setting
\begin{equation*}
    (Q_1,g_1,V_1) \times (Q_2,g_2,V_2) := (Q_1\times Q_2,g_1+g_2,V_1+V_2)
\end{equation*}
where
\begin{align*}
    g_1+g_2 &:= \Proj^*_1g_1 \oplus \Proj_2^*g_2\in \Sec{\odot ^2 (\Tan Q_1\oplus \Tan Q_2))}\cong\Sec{\odot ^2 \Tan (Q_1\times Q_2)}, \\
    V_1+V_2 &:= \Proj^*_1V_1+\Proj_2^*V_2\in \Cin{Q_1\times Q_2}.
\end{align*}
A metric $g$ on $Q$ gives the usual musical isomorphism
\begin{equation*}
        \begin{tikzcd}
        \Tan Q \arrow[r, "\flat",yshift=0.7ex] & \Cot Q \arrow[l,"\sharp",yshift=-0.7ex]
        \end{tikzcd},
\end{equation*}
which can be used to regard the quadratic function $K$ identified with the metric $g$ as a quadratic function on the cotangent bundle by pull-back: $\sharp^*K_g\in\Cin{\Cot Q}$. Once a configuration space $(Q,g,V)$ is fixed, we now see how the phase space formalism for Newtonian configuration spaces does provide a canonical choice of energy by
\begin{equation*}
    E_{g,V}:=\sharp^*K_g+\pi_Q^*V,
\end{equation*}
what we call the \textbf{Newtonian energy}. This name is further justified by the fact that a direct computation shows that solving the Hamiltonian dynamics of this observable is equivalent to solving Newton's equations on a Riemannian manifold background $(Q,g)$ and with force field $F=-\sharp(dV)$. For two Newtonian configuration spaces $(Q_1,g_1,V_1)$ and $(Q_2,g_2,V_2)$, the categorical product construction above gives the following additivity property of Newtonian energy:
\begin{equation*}
    E_{g_1+g_2,V_1+V_2}=E_{g_1,V_1}+E_{g_2,V_2}.
\end{equation*}

Newtonian configuration spaces $\Conf_{\text{Newtonian}}$ form a subcategory of $\Conf$, thus the Hamiltonian functor must operate in a similar way but with the updated categorical product and the added assignment of the Newtonian energy as a canonical choice of observable to generate the dynamical evolution. We shall then update the Hamiltonian functor given above for general configuration spaces by adding the extra correspondence of Newtonian energy:
\begin{center}
\begin{tabular}{ c c c }
Category of Configuration Spaces & Hamiltonian Functor  & Category of Phase Spaces \\
\hline
 $\Conf_{\text{Newtonian}}$ & $\begin{tikzcd}\phantom{A} \arrow[r,"\mathbb{H}_{\Conf_{\text{Newtonian}}}"] & \phantom{B} \end{tikzcd}$ & $\mathcal{P}$ \\ 
 $(Q,g,V)$ & $\begin{tikzcd} \phantom{Q} \arrow[r, "\Cot ",mapsto] & \phantom{Q} \end{tikzcd}$ & $E_{g,V}\in\Obs{\Cot Q}$ \\
 $(Q_1,g_1,V_1)\times (Q_2,g_2,V_2)$ & $\begin{tikzcd} \phantom{Q} \arrow[r, "\Cot ",mapsto] & \phantom{Q} \end{tikzcd}$ & $(\Cot (Q_1\times Q_2),E_{g_1,V_1}+E_{g_2,V_2})$
\end{tabular}
\end{center}
Note that this result in the category of Newtonian configuration spaces motivates the additivity of energy of the combination product of two general phase spaces $(\Cot Q_1\times \Cot Q_2,h_1+h_2)$ discussed above.

\section{Dimensioned Hamiltonian Mechanics} \label{DimHamM}

In this section we set out to generalize Hamiltonian mechanics in such a way that it naturally incorporates the structure of measurand spaces as defined in section \ref{ModMes}. Our new formulation will be shown to exhibit a categorical structure entirely analogous to the Hamiltonian functor $\mathbb{H}_\Conf$ described in section \ref{HamMec} for conventional Hamiltonian mechanics.

\subsection{Introducing Physical Dimension in Geometric Mechanics}\label{DimGeoM}

The central idea in the mathematical formulation of dimensioned geometric mechanics is the introduction of the category of line bundles $\Line_\Man$ as the category of dimensioned configuration spaces. Note first that this is, indeed, a direct generalization of the conventional ``unit-less'' formalism since we can regard the category of conventional configuration spaces, i.e. just smooth manifolds, as the subcategory $\Real_\Man\subset \Line_\Man$ of trivial line bundles with fibre-wise identity line bundle morphisms.\newline

This generalization is conceptually motivated by our discussion in section \ref{ModMes}, where we argued that the mathematical object that captures the desirable formal structure of measurable physical quantities is a measurand space $M$, comprised of tensor powers of some collection of $1$-dimensional vector spaces $(L_1\dots L_k)^\otimes$, rather than simply a copy of the real numbers $\Real$. We emphasize here the two fundamental aspects of this new perspective:
\begin{itemize}
    \item Outcomes of measurements of different observables are no longer regarded as elements of the same mathematical set $\Real$, but, instead, they fit into a framework, a measurand space $M$, where the notion of physical dimension is naturally implemented by the potentil structure.
    \item Physical quantities are treated in an entirely unit-free manner. Not only is the measurand space framework manifestly unit-free, as opposed to unit-less, but it also allows for a rigorous definition of unit of measurement as non-zero elements of lines in a measurand. This provides a natural mathematical language to articulate the question of whether some theoretical construction is unit-independent and thus may be of physical relevance.
\end{itemize}
Recall from our discussion at the beginning of section \ref{ConHam} that the combined principles of Realism, Characterization, Kinematics and Observation, and guided by the, then unquestioned, idea that measurement outcomes in practical science are just numerical values, lead us to the definition of observables (the technical term for ``physical quantity'' in the context of mechanics) as smooth assignments of real values to points of a smooth phase space. Following an analogous reasoning, but with our updated notion of physical quantity as elements of measurand spaces, we are compelled to define the \textbf{category of dimensioned configuration spaces} as the category of line bundles over smooth manifolds with factors, i.e. fibre-wise invertible line bundle maps, as morphisms. Staying consistent with conventional notations for configuration spaces we will denote this category as $\Line_\Conf$ and objects there will be indistinctly denoted by $\lambda:L\to Q$, $L_Q$ or $L$. Having all the technical results of section \ref{CatLineB} at hand, we give a physical interpretation for the categorical structure naturally present in $\Line_\Conf$ that is entirely analogous to the one given for the category of conventional configuration spaces at the beginning of section \ref{CanHamT}:
\begin{itemize}
    \item The physical interpretation of the points of the base manifold of a dimensioned configuration space $L_Q$ is exactly the same as the points of a configuration space, they represent the \textbf{static state} of the physical system. In this sense, the space of static states of a physical system will be independent of the particular dimensions of the observables that will be defined on it, as is indeed the case in conventional geometric mechanics.
    \item The measurable static properties of a fixed physical dimension are identified with the smooth sections of some fixed line bundle $\Sec{L_Q}$. We thus identify the collection of all possible measurable properties of a fixed physical dimension with a choice of dimensioned configuration space $L_Q$. We call these the \textbf{static observables of dimension $L$} of the configuration space $Q$ and denote them by $\Obs{L_Q}:=\Sec{L_Q}$. Properties of factor pull-backs in the category of line bundles then ensure that we have an observable contravariant functor $\text{Obs}:\Line_\Conf\to\textsf{RMod}$.
    \item We recover the notion of ``unit-less'' observable from the conventional treatment of configuration spaces by introducing the notion of a choice of unit. A \textbf{unit} in a dimensioned configuration space $L_Q$ is a local trivialization, equivalently, a local non-vanishing section $u\in\Sec{L_Q}$. Restricting to the open subset where $u$ is defined allows us to see any other static observable $s\in\Obs{L_Q}$, i.e. the restriction of some arbitrary section to the open subset, as a local real-valued function $\Tilde{s}_u$ determined uniquely by the equation $s=\Tilde{s}_u\cdot u$. Thus, a choice of a unit $u$ allows (locally) for a functorial assignment of the form
    \begin{equation*}
        u:\Obs{L_Q}\to \Obs{Q}.
    \end{equation*}
    \item A subsystem is characterised by restricting possible positions of a larger system, that is, by an inclusion of an embedded submanifold $i:S\to Q$. Our discussion about submanifolds of line bundles and embedding factors in section \ref{CatLineB} ensures that this situation is indeed equivalent to the monomorphisms of the category of dimensioned configuration spaces.
    \item Physically-indistinguishable static states must produce measurement outcomes that are indistinguishable as elements of a line bundle over the configuration space and thus should give a basic quotient of line bundles. Our discussion about submersion factors in section \ref{CatLineB} shows that this notion precisely corresponds to the epimorphisms of the category of dimensioned configuration spaces.
    \item Given two dimensioned configuration spaces $L_{Q_1}$ and $L_{Q_2}$, the line product construction of section \ref{CatLineB} gives the direct analogue of the Cartesian product of conventional configuration spaces. We thus regard $L_{Q_1}\utimes L_{Q_2}$ as the categorical implementation of the Principle of Combination for dimensioned configuration spaces.
    \item Paths of a physical system, i.e. temporal series of static states, are simply recovered as smooth curves on the base space of a dimensioned configuration space. In this manner, conventional dynamics $\Dyn{Q}$ are simply recovered as the vector fields on the base manifold. However, the extra structure introduced by the presence of the line bundle induces a new dynamical aspect of configuration spaces. Given a dimensioned configuration space $L_Q$, all the non-zero fibre elements over a point $L_q$ represent different choices of unit for the same type of physical quantity. Then, any measurement performed on a system moving along a path $c(t)$ passing through $q$ at a time $t_0$ will have to be unit-compatible with any measurement performed at a later time $t>t_0$. This means that the choice of unit should be preserved along the motion of a path. Considering a unit $u$ as a local section, this is ensured locally by construction, however, taking all the possible arbitrary choices of local unit around the point $q$, forces the existence of a $1$-parameter family of fibre-wise isomorphisms covering the smooth curve $c(t)$. These are nothing but smooth families of line bundle automorphisms, which are given infinitesimally by line bundle derivations, and thus we identify the \textbf{dimensioned dynamics} of a dimensioned configuration space $L_Q$ as $\Dyn{L_Q}:=\Dr{L_Q}$. The anchor of the der bundle $\delta:\Der L_Q\to \Tan Q$ allows to connect dimensioned dynamics with ordinary dynamics via the surjective map
    \begin{equation*}
        \delta_*:\Dyn{L_Q}\to \Dyn{Q}.
    \end{equation*}
\end{itemize}

Despite having provided ample philosophical motivation for the definition of the category of dimensioned configuration spaces $\Line_\Conf$, and having found seemingly convincing physical interpretation for the natural structure that it exhibits, all this effort would be in vain unless a Hamiltonian functor, analogous to $\mathbb{H}_\Conf$ in section \ref{CanHamT}, can be constructed for this category. Fortunately, this is indeed the case and section \ref{CanHamJ} below contains all the necessary proofs supporting this claim.\newline

\subsection{Canonical Hamiltonian Mechanics as the jet functor $\Jet^1$}\label{CanHamJ}

The basic intuition behind the construction of the Hamiltonian functor of section \ref{CanHamT} was the desire to, given a configuration space, find a phase space that would carry a space of observables with a Lie algebra structure canonically containing the static observables and dynamics as subalgebras. The starting data is now a dimensioned configuration space, specified by some line bundle over a configuration space $L_Q$, whose static observables are $\Sec{L_Q}$ and whose dynamics are $\Dr{L_Q}$. The proposition below establishes the jet bundle $\Jet^1 L_Q$ as the \textbf{canonical dimensioned phase space} of a dimensioned configuration space $L_Q$.
\begin{prop}[The Canonical Contact Manifold Associated to a Line Bundle]\label{CanonContMan}
Let $\lambda:L\to Q$ be a line bundle and $\normalfont \pi:\Jet^1L \to Q$ its jet bundle, then there is a canonical contact structure $\normalfont (\Jet^1 L,\text{H}_L)$ such that the contact line bundle is isomorphic to the pull-back bundle in the category of L-vector bundles, i.e.
\begin{equation*}
\normalfont
    \Tan (\Jet^1 L)/\text{H}_L\cong \pi^*L.
\end{equation*}
We denote this line bundle by $\normalfont L_{\Jet^1 L}$. Furthermore, the non-degenerate Jacobi structure induced in the line bundle $\normalfont (L_{\Jet^1 L},\{,\}_L)$ is fibre-wise linear and it is completely determined by the algebraic structure of derivations acting on sections
\begin{align*}
    \{l_a,l_b\}_L & =  l_{[a,b]}\\
    \{l_a,\pi^*s\}_L & =  \pi^* a [s]\\
    \{\pi^*s,\pi^*r\}_L & = 0
\end{align*}
for all $s,r\in\Sec{L}$, $\normalfont a,b\in\Sec{\Der L}$ and where $\normalfont l:\Sec{\Der L}\to \Sec{L_{\Jet^1 L}}$ is the inclusion of derivations as fibre-wise linear sections of $\normalfont L_{\Jet^1 L}$.
\end{prop}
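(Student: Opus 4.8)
Proposition \ref{CanonContMan} is the Jacobi analogue of the canonical linear Poisson structure on a cotangent bundle, so the plan is to build the Jacobi structure on the pull-back line bundle $L_{\Jet^1 L} := \pi^* L$ directly from the three bracket relations, verify it is a genuine non-degenerate Jacobi structure, and then read off the contact structure from Proposition \ref{JacManContMan}. First I would set up the algebra of sections. Using the identity $\Jet^1 L = (\Der L)^{*L}$ from section \ref{CatLineB}, the total space $\Jet^1 L$ is the total space of the L-vector bundle $(\Der L)^{*L}$, so $L_{\Jet^1 L} = \pi^* L$ is precisely the line bundle underlying a fibre-wise linear Jacobi structure in the sense of section \ref{JacGeo}. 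Its fibre-wise constant sections are $\pi^*\Sec{L}$ and its fibre-wise linear sections are canonically $\Sec{\Der L}$ via the tautological pairing $l_a(\xi):=\langle\xi,a\rangle$ (where $\xi=j^1_q s$ pairs with $a\in\Der_q L$ to give $a[s](q)\in L_q$); together these span $\Sec{L_{\Jet^1 L}}$ over $\Cin{\Jet^1 L}$. This places us exactly in the hypotheses of the extension-by-symbol result.

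Second, I would verify the three relations define a Jacobi structure by invoking Proposition \ref{ExtBySymb}. Taking $\Sigma$ to be the $\Real$-span of the generators $l_a$ and $\pi^* s$, the three formulas define an $\Real$-bilinear antisymmetric bracket $[,]$ on $\Sigma$, and its Jacobi identity on generators reduces to the Jacobi identity of the commutator bracket on $\Sec{\Der L}$ together with the defining property $[a,b][s]=a[b[s]]-b[a[s]]$, the cases with two or more constant generators being trivial. To complete the hypotheses I would exhibit the Hamiltonian vector field map $X$ and the L-bivector $\Lambda$ on $\Jet^1 L$: $X_{l_a}$ is the vector field generating the jet-prolongation of the flow of the derivation $a$ (covering $\delta(a)\in\Sec{\Tan Q}$), $X_{\pi^* s}$ is the vertical field determined by $j^1 s$ under $T^{\mathrm{vert}}\Jet^1 L\cong\pi^*\Jet^1 L$, and $\Lambda$ is their fibre-derivative; the three compatibility conditions then hold by direct computation. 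Proposition \ref{ExtBySymb} assembles these into a Jacobi bracket $\{,\}_L$ on $\Sec{L_{\Jet^1 L}}$, manifestly fibre-wise linear and satisfying the three stated relations by construction.

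Third, I would establish non-degeneracy and extract the contact structure. Non-degeneracy of $\Tilde{\Lambda}$ is a fibre-wise statement, most transparently checked in a local trivialization $L\cong Q\times\Real$ with jet coordinates $(q^i,u,p_i)$, where the associated biderivation is the standard contact one and $\Tilde{\Lambda}^\sharp:\Jet^1 L\to\Der L$ is a fibre isomorphism (equivalently $\eta=\theta\otimes\theta+\omega$ is non-degenerate for the tautological form $\theta_L=du-p_i\,dq^i$). Proposition \ref{JacManContMan} then converts this non-degenerate Jacobi structure into a contact structure $(\Jet^1 L,\text{H}_L)$ with $\text{H}_L=\Tilde{\Lambda}^\sharp(\Jet^1 L)$, and by the same proposition the contact line bundle is the Jacobi line bundle, i.e. $\Tan(\Jet^1 L)/\text{H}_L\cong L_{\Jet^1 L}=\pi^* L$, as claimed.

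The hard part will be the middle step: producing the Hamiltonian vector fields $X_{l_a}$ and $X_{\pi^* s}$ in coordinate-free form — the canonical lift of a derivation to its jet bundle is the substantive construction — and checking the third and most involved compatibility condition of Proposition \ref{ExtBySymb}, namely $[X_s,\Lambda^\sharp(df\otimes s')]=\Lambda^\sharp(dX_s[f]\otimes s'+df\otimes[s,s'])$, on the generating sections. As a cross-check I would also construct the tautological $\pi^*L$-valued contact form $\theta_L$ directly (so that $\text{H}_L=\Ker{\theta_L}$), confirm non-degeneracy of its curvature, and recompute the induced bracket on linear and constant sections; here one must fix the sign convention in the Reeb direction carefully, since only one orientation makes $\{l_a,l_b\}_L=l_{[a,b]}$ and $\{l_a,\pi^* s\}_L=\pi^* a[s]$ hold simultaneously.
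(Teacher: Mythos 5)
Your proposal is correct, but it runs the argument in the opposite direction to the paper, and the difference is worth noting. The paper is contact-first: it constructs the $\pi^*L$-valued tautological form directly and geometrically, setting $\theta_{j^1_xu}:=\Tan_{j^1_xu}\varpi-\Tan_xu\circ\Tan_{j^1_xu}\pi$ with $\varpi:\Jet^1L\to L$ the Spencer surjection, checks verticality of its image to land in $\pi^*L$, and defines $\text{H}_L:=\Ker{\theta}$ — so the hyperplane distribution and the isomorphism $\Tan(\Jet^1 L)/\text{H}_L\cong\pi^*L$ come out immediately, with no coordinate computation and no appeal to Proposition \ref{JacManContMan}; the fibre-wise linear Jacobi bracket is then built separately by exactly your mechanism, extension by symbol (Proposition \ref{ExtBySymb}) on the spanning sections $\pi^*\Sec{L}$ and $l(\Sec{\Der L})$. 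You are Jacobi-first: you assemble the bracket via \ref{ExtBySymb}, prove non-degeneracy in an adapted trivialization $(q^i,u,p_i)$, and only then extract $\text{H}_L$ from \ref{JacManContMan}, demoting the tautological form to a cross-check. Both are sound; what you lose is that the quotient identification $\Tan(\Jet^1 L)/\text{H}_L\cong\pi^*L$ becomes an indirect consequence of \ref{JacManContMan} rather than a visible feature of the construction, and what you gain is a cleaner verification of non-degeneracy (the paper never actually spells this out — its proof asserts the hyperplane distribution by dimension counting and leaves maximal non-integrability implicit, so your explicit local check of $\eta=\theta\otimes\theta+\omega$ arguably patches a soft spot). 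One further divergence: where the paper pins down the Hamiltonian vector field $X$ purely algebraically on the spanning functions $\pi^*\Cin{Q}$ and $l(\Sec{\Der L\otimes L^*})$ (e.g. $X_{l_a}[l_{b\otimes\sigma}]=l_{[a,b]\otimes\sigma}$, $X_{\pi^*s}[l_{b\otimes\sigma}]=-\pi^*\sigma(b[s])$), you characterize $X_{l_a}$ as the generator of the jet prolongation of the flow of $a$ — these agree, your description is consonant with the jet-lift machinery the paper deploys later in Proposition \ref{CanHamRedJet}, and your closing warning about the sign in the Reeb direction is exactly where the paper's minus sign in $X_{\pi^*s}[l_{b\otimes\sigma}]$ lives.
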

\begin{proof}
The hyperplane distribution $\text{H}_L\subset \Tan (\Jet^1 L)$ is the usual Cartan distribution defined in general jet bundles of vector bundles, however, in our case it can be regarded as the kernel of the \textbf{canonical contact form} $\theta\in\Sec{\Tan^{*(\pi^*L)}(\Jet^1 L)}$ defined as the line bundle analogue of the Liouville $1$-form on the cotangent bundle. More precisely, denoting by $\varpi:\Jet^1 L\to L$ the surjective bundle map of the Spencer sequence, the canonical contact form is explicitly given at any point $j^1_xu\in\Jet^1 L$ of the the jet bundle by
\begin{equation*}
    \theta_{j^1_xu}:=\Tan_{j^1_xu}\varpi - \Tan_x u\circ \Tan_{j^1_xu}\pi
\end{equation*}
where we note that the explicit use of $\Tan_xu$ is well-defined from the fact that $j^1u$ is defined as the equivalence class of all sections agreeing in value and tangent map at $x\in M$. The map above is mapping tangent spaces of the vector bundles $\theta_{j^1_xu}:\Tan_{j^1_xu}\Jet^1 L\to \Tan_{u(x)}L$, in order to make it into a $L$-valued $1$-form we will use the fact that vertical subspaces of the total space of a vector bundle are canonically isomorphic to the fibres, $\Tan_{u(x)}^{\text{Vert}}L=\Ker{\Tan_{u(x)} \lambda}\cong L_x$ and that the image of $\theta_{j^1_xu}$ is always vertical
\begin{equation*}
    \Tan_{u(x)} \lambda\circ \theta_{j^1_xu}=\Tan_{j^1_xu}(\lambda\circ \varpi)-\Tan_{x}(\lambda\circ u)\circ \Tan_{j^1_xu}\pi=\Tan_{j^1_xu}\pi - \Tan_{j^1_xu}\pi =0,
\end{equation*}
where we have used the fact that sections and jet prolongations fit in the commutative diagram
\begin{equation*}
\begin{tikzcd}
\Jet^1 L \arrow[rr, "\varpi"] \arrow[dr, "\pi"] & & L \arrow[dl, "\lambda"'] \\
 & M \arrow[ul,bend left,"j^1u"]\arrow[ur,bend right,"u"'] & 
\end{tikzcd}
\end{equation*}
We then define $\text{H}_L:=\Ker{\theta}$, which is shown to be a hyperplane distribution from simple point-wise dimension-counting. Applying the first isomorphism theorem for vector spaces fibre-wise, it clearly follows that $\Tan (\Jet^1 L)/\text{H}_L\cong \pi^*L$. The non-degenerate Jacobi structure on $\pi^*L$ appears as the $L$-dual to the Lie(Jacobi) algebroid structure present in $\Der L$ analogously to the canonical symplectic structure being the linear Poisson structure on $\Cot M$ dual to the Lie algebroid $\Tan M$. This can be seen  explicitly via the natural inclusions of fibre-wise constant sections $\pi^*:\Sec{L}\hookrightarrow \Sec{\pi^*L}$ and fibre-wise linear sections $l:\Sec{\Der L}\hookrightarrow \Sec{\pi^*L}$ satisfying
\begin{equation*}
    \pi^*(f\cdot u)=\pi^*f\cdot \pi^*u \qquad l_{f\cdot a}=\pi^*f\cdot l_a
\end{equation*}
for all $f\in\Cin{M}$, $u\in\Sec{L}$ and $a\in\Sec{\Der L}$, which uniquely determine a fibre-wise linear $\Real$-linear bracket $\{,\}$ determined by the defining identities in the proposition and whose Jacobi identity follows from the Jacobi identity of the commutator bracket of derivations. We can show that this bracket extends to a Jacobi structure with an extension-by-symbol argument via some pair $(X,\Lambda)$ by identifying a subspace of spanning functions for $\Cin{\Jet^1 L}$. This subspace is given by the natural inclusions of fibre-wise constant functions $\pi^*:\Cin{M}\hookrightarrow \Cin{\Jet^1 L}$ and fibre-wise linear functions $l:\Sec{\Der L\otimes L^*}\hookrightarrow \Cin{\Jet^1 L}$. Using the only available ways in which sections of these bundles can act on functions, we find the Hamiltonian vector field $X$ defined on spanning functions by the conditions:
\begin{align*}
    X_{l_a}[l_{b\otimes \sigma}]&=l_{[a,b]\otimes \sigma}   & X_{l_a}[\pi^*f]&=\pi^*\delta(a)[f]\\
    X_{\pi^*s}[l_{b\otimes \sigma}]&=-\pi^*\sigma(b[s]) & X_{\pi^*s}[\pi^*f]&=0
\end{align*}
for all $f\in\Cin{M}$, $s\in\Sec{L}$, $a\in\Sec{\Der L}$, $b\otimes \sigma\in\Sec{\Der L\otimes L^*}$; similar formulas are found for the $L$-bivector $\Lambda$ and direct computation shows that they satisfy the compatibility conditions of Propostion \ref{ExtBySymb}.
\end{proof}

Perhaps unsurprisingly after our discussion of section \ref{JacGeo} where Jacobi geometry was introduced as a unit-free version of Poisson geometry, we have found that the jet bundle assignment, as a candidate for dimensioned Hamiltonian functor, sends dimensioned configuration spaces into Jacobi manifolds.\newline

The next proposition establishes the interaction between taking jet bundles and taking line products of line bundles.
\begin{prop}[Jet Bundle of a Line Product]\label{JetLineProd}
Let $\lambda_1:L_1\to Q_1$, $\lambda_2:L_2\to Q_2$ two line bundles and denote their line product by $L_1\utimes L_2$, then there is a canonical factor $W$ giving the isomorphism of line bundles
\begin{equation*}
\normalfont
\begin{tikzcd}
L_{\Jet^1 (L_1\utimes L_2)} \arrow[r, "W"] \arrow[d] & L_{\Jet^1 L_1} \utimes L_{\Jet^1 L_2} \arrow[d] \\
\Jet^1 (L_1\utimes L_2) \arrow[r, "w"'] & \Jet^1 L_1\dtimes \Jet^1 L_2
\end{tikzcd}
\end{equation*}
Furthermore, the factor $W$ induces an isomorphism of Jacobi manifolds.
\end{prop}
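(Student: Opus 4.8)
The plan is to build the factor $W$ by L-dualising the decomposition of the der bundle, and then to establish Jacobi compatibility through the universal property of the product Jacobi structure rather than by a head-on bracket computation.

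First I would construct the underlying line-bundle isomorphism. By Proposition \ref{DerLineProd} there is a canonical isomorphism of L-vector bundles $\Der(L_1\utimes L_2)\cong p_1^*\Der L_1\oplus p_2^*\Der L_2$ over $Q_1\dtimes Q_2$, where the second summand is viewed with line $p_1^*L_1=L_1\utimes L_2$ via the twist factor $C_{12}$. Since $\Jet^1L=(\Der L)^{*L}$ and L-duality distributes over direct sums of L-vector bundles sharing the same line, taking the L-dual gives
\[
\Jet^1(L_1\utimes L_2)\cong *(p_1^*\Der L_1)\oplus *(p_2^*\Der L_2)\cong p_1^*\Jet^1L_1\oplus p_2^*\Jet^1L_2,
\]
where for the second summand one uses $C_{12}$ once more to rewrite $(p_2^*\Der L_2)^*\otimes p_1^*L_1\cong p_2^*\big((\Der L_2)^*\otimes L_2\big)=p_2^*\Jet^1L_2$. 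The direct-sum projections, composed with the bundle projections and (for the index $2$) with the fibre isomorphism recorded by the base point $B\in Q_1\dtimes Q_2$ on line fibres, define factors $\Pi_i:L_{\Jet^1(L_1\utimes L_2)}\to L_{\Jet^1L_i}$. I then set $W:=\Pi_1\utimes\Pi_2$, which by Proposition \ref{LineProd} is the unique factor into the product with $P_i\circ W=\Pi_i$; a point-wise check shows that $W$ covers the diffeomorphism $w$ sending $(B,\xi_1,\xi_2)\mapsto(\xi_1,\xi_2,B)$ and is fibre-wise invertible, hence the asserted line-bundle isomorphism.

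For the Jacobi statement I would argue by uniqueness. Transport the canonical structure $\{,\}_{L_1\utimes L_2}$ of Proposition \ref{CanonContMan} forward along $W$ to a structure $J'$ on $L_{\Jet^1L_1}\utimes L_{\Jet^1L_2}$. Because the product Jacobi structure of Proposition \ref{ProdJacMan} is the \emph{unique} one for which both projections $P_1,P_2$ are Jacobi maps, it suffices to show that $P_1,P_2$ are Jacobi maps for $J'$; as $W^*$ is a Lie-algebra isomorphism from $J'$ to $\{,\}_{L_1\utimes L_2}$ and $W^*\circ P_i^*=(P_i\circ W)^*=\Pi_i^*$, this reduces to proving that each $\Pi_i$ is a Jacobi map into $(L_{\Jet^1(L_1\utimes L_2)},\{,\}_{L_1\utimes L_2})$. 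To verify this I would work on the fibre-wise constant and fibre-wise linear sections, which span and on which both canonical structures are fixed by the three relations of Proposition \ref{CanonContMan}. Unwinding $W$ gives $\Pi_i^*(\pi_i^*s)=\pi^*(P_i^*s)$ and $\Pi_i^*(l^{(i)}_a)=l_{k_i(a)}$, where $k_i:\Dr{L_i}\hookrightarrow\Dr{L_1\utimes L_2}$ are the Lie-algebra embeddings of Proposition \ref{DerivLineProd}. The relations then transport directly: $\{l_{k_1a},l_{k_1b}\}=l_{[k_1a,k_1b]}=l_{k_1[a,b]}$ since $k_1$ is a Lie morphism; $\{l_{k_1a},\pi^*P_1^*s\}=\pi^*\big(k_1(a)[P_1^*s]\big)=\pi^*P_1^*(a[s])$ since $\overline{D}_1(P_1^*s)=P_1^*D_1(s)$; and $\{\pi^*P_1^*s,\pi^*P_1^*r\}=0$. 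The same holds for $\Pi_2$, while the cross-brackets vanish because $[k_1(\Dr{L_1}),k_2(\Dr{L_2})]=0$, matching the vanishing cross-brackets of the product structure. Uniqueness in Proposition \ref{ProdJacMan} then yields $J'=\{,\}_{12}$, so $W$ is an isomorphism of Jacobi manifolds.

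The main obstacle I anticipate is the precise identification $\Pi_i^*(l^{(i)}_a)=l_{k_i(a)}$: this is where the bookkeeping of the $C_{12}$ twist and of the base-point fibre isomorphism $B$ must be carried out carefully, so that the abstract L-dual of the der decomposition is genuinely realised as the derivation inclusion $k_i$ of Proposition \ref{DerivLineProd}. Reconciling the derivation-based description of $\{,\}_{L_1\utimes L_2}$ with the ratio-function presentation of the product structure is the conceptual heart of the argument; once the generators are matched the bracket identities are immediate, and the uniqueness clause removes any need to compute the extension-by-symbol data $X^{12},\Lambda^{12}$ of Proposition \ref{ExtBySymb} by hand.
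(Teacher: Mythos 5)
Your proposal is correct, and its computational core coincides with the paper's proof: both rest on the der-bundle splitting of Proposition \ref{DerLineProd}, L-dualisation (with the $C_{12}$ twist on the second summand) to produce $w$, and the generator identities $(R_i\circ W)^*l_{a_i}=\overline{l}_{k_i(a_i)}$ and $(R_i\circ W)^*\pi_i^*u_i=\overline{\pi}^*P_i^*u_i$ combined with the Lie-algebra properties of the injections $k_i$ from Proposition \ref{DerivLineProd}. Two differences in packaging are worth recording. First, you build $W$ as $\Pi_1\utimes\Pi_2$ via the universal property of Proposition \ref{LineProd}, whereas the paper constructs the base diffeomorphism $w=z\circ t$ by hand and then identifies $L_{\Jet^1(L_1\utimes L_2)}\cong w^*(L_{\Jet^1L_1}\utimes L_{\Jet^1L_2})$ through pull-back bookkeeping; your route is cleaner, at the cost of checking separately that the induced base map is the expected diffeomorphism, which you do. Second, for the Jacobi statement the paper directly matches $W^*$ of the product brackets against brackets of $W^*$-pull-backs on spanning sections, while you push the canonical structure forward to $J'$ and invoke the uniqueness clause of Proposition \ref{ProdJacMan}. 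One caution here: as literally stated, ``the projections are Jacobi maps'' constrains only the same-index brackets $\{R_i^*\cdot,R_i^*\cdot\}_{12}$ and is silent on the cross-brackets, so uniqueness in that bare form would not by itself force $J'=\{,\}_{12}$ (the same subtlety already arises for product Poisson structures, where a mixed coupling term is compatible with both projection conditions). Your argument survives because you additionally verify that the cross-brackets of $J'$ vanish, via $[k_1(\Dr{L_1}),k_2(\Dr{L_2})]=0$; it is that check, rather than the uniqueness clause, that pins $J'$ to the product structure on spanning sections — after which your verification and the paper's are the same computation in different clothing, and neither needs the extension-by-symbol data $X^{12},\Lambda^{12}$ explicitly. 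You also correctly isolate the one genuinely delicate point, the identification $\Pi_2^*l^{(2)}_a=l_{k_2(a)}$ through the $C_{12}$ twist and the tautological fibre isomorphism at the base point, which the paper dispatches with ``by construction''.
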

\begin{proof}
Let us first construct the isomorphism factor $W$ explicitly. Recall from Proposition \ref{DerLineProd} that $\Der(L_1\utimes L_2)\cong p_1^*\Der L_1\oplus \Der L_2$, this allows us to write
\begin{equation*}
    \Jet^1 (L_1\utimes L_2)\cong(p_1^*\Der L_1\oplus \Der L_2)^*\otimes (L_1\utimes L_2),
\end{equation*}
then, using the swapping isomorphism factor for the second term, we find the following isomorphism of vector bundles
\begin{equation*}
    t:\Jet^1 (L_1\utimes L_2)\to p_1^*\Jet^1 L_1\oplus p_2^*\Jet^1 L_2.
\end{equation*}
covering the identity map on $Q_1\dtimes Q_2$. Now, by the definition of line product, it is clear that we can define the following map
\begin{align*}
z: p_1^*\Jet^1 L_1\oplus p_2^*\Jet^1 L_2 & \to \Jet^1 L_1\dtimes \Jet^1L_2\\
(\alpha_{q_1}\oplus \beta_{q_2})_{B_{q_1q_2}} & \mapsto C_{\alpha_{q_1}\beta_{q_2}}
\end{align*}
where $C_{\alpha_{q_1}\beta_{q_2}}=B_{q_1q_2}$, which is well-defined since the line bundle over a jet bundle is defined as the pull-back line bundle from the base. Conversely, given a fibre-wise invertible map between pull-back line bundles we can project to a fibre-wise map between the base line bundles, denote this projection by $\overline{\pi}:\Jet^1 L_1 \dtimes \Jet^1 L_2 \to Q_1 \dtimes Q_2$, so there is an obvious inverse for the map above
\begin{equation*}
    z^{-1}(C_{\alpha_{q_1}\beta_{q_2}})=(\alpha_{q_1}\oplus \beta_{q_2})_{\overline{\pi}(C_{\alpha_{q_1}\beta_{q_2}})}.
\end{equation*}
We thus find the desired diffeomorphism
\begin{equation*}
    w:=z\circ t:\Jet^1 (L_1\utimes L_2) \to \Jet^1 L_1\dtimes \Jet^1L_2.
\end{equation*}
Let us write the line product commutative diagram for the line bundles over the jets as
\begin{equation*}
\begin{tikzcd}
L_{\Jet^1L_1} \arrow[d, "\mu_1"'] & L_{\Jet^1L_1}\utimes L_{\Jet^1L_2} \arrow[l,"R_1"']\arrow[d, "\mu_{12}"]\arrow[r,"R_2"] & L_{\Jet^1L_2} \arrow[d,"\mu_2"] \\
\Jet^1L_1 & \Jet^1L_1 \dtimes \Jet^1L_2 \arrow[l,"r_1"]\arrow[r,"r_2"'] & \Jet^1L_2
\end{tikzcd}
\end{equation*}
and denote the compositions $o_i:=\pi_i\circ r_i:\Jet^1L_1\dtimes \Jet^1L_2\to Q_i$. It then follows by construction that
\begin{equation*}
    L_{\Jet^1L_1}\utimes L_{\Jet^1L_2}=o_1^*L_1
\end{equation*}
and
\begin{equation*}
    L_{\Jet^1(L_1\utimes L_2)}\cong w^*o_1^*L_1
\end{equation*}
thus showing
\begin{equation*}
    L_{\Jet^1(L_1\utimes L_2)}\cong w^*(L_{\Jet^1L_1}\utimes L_{\Jet^1L_2}).
\end{equation*}
This is, of course, tantamount to there being a factor $W:L_{\Jet^1(L_1\utimes L_2)}\to L_{\Jet^1L_1}\utimes L_{\Jet^1L_2}$ covering the diffeomorphism $w$ that essentially acts as the fibre-wise identity on $L_1$. In order to show that this factor is, in fact, a Jacobi map we can write the pull-backs of (non-zero) brackets of spanning sections directly
\begin{align*}
    W^*\{R_i^*l_{a_i},R_i^*l_{b_i}\}_{12}&=W^*R_i^*\{l_{a_i},l_{b_i}\}_i=(R_i\circ W)^*l_{[a_i,b_i]}\\
    W^*\{R_i^*l_{a_i},R_i^*\pi_i^*u_i\}_{12}&=W^*R_i^*\{l_{a_i},\pi_i^*u_i\}_i=(R_i\circ W)^*\pi_i^*a_i[u_i]
\end{align*}
for $u_i\in\Sec{L_i}$, $a_i,b_i\in\Sec{\Der L_i}$, $i=1,2$, and where the definition of product Jacobi structure has been used to write the RHS expressions. Note that the construction of $W$ above is such that
\begin{equation*}
    (R_i\circ W)^*l_{a_i}=\overline{l}_{k_i(a_i)}\qquad (R_i\circ W)^*\pi_i^*u_i=\overline{\pi}^*P_i^*u_i
\end{equation*}
where $\overline{\pi}:\Jet^1(L_1\utimes L_2)\to Q_1\dtimes Q_2$ is the jet bundle projection for the line product, $k_i:\Sec{\Der L_i}\to \Sec{\Der (L_1\utimes L_2)}$ are the natural Lie algebra injections of derivations from Proposition \ref{DerivLineProd} and $\overline{l}:\Sec{\Der (L_1\utimes L_2)}\to L_{\Jet^1(L_1\utimes L_2)}$ is the inclusion of fibre-wise linear sections. A direct computation then gives the brackets of pull-backs of spanning sections
\begin{align*}
    \{W^*R_i^*l_{a_i},W^*R_i^*l_{b_i}\}&=\overline{l}_{[k_i(a_i),k_i(b_i)]}=(R_i\circ W)^*l_{[a_i,b_i]}\\
    \{W^*R_i^*l_{a_i},W^*R_i^*\pi_i^*u_i\}&=\overline{\pi}^*k_i(a_i)[P_i^*u_i]=\overline{\pi}^*P_i^*a_i[u_i]=(R_i\circ W)^*\pi_i^*a_i[u_i]
\end{align*}
which agree with the pull-backs of brackets above, thus completing the proof.
\end{proof}

Now that we see that taking jet bundle respects the categorical product of line bundles, we are in the position to prove that $\Jet^1$ is indeed a functor sending line bundles to canonical contact manifolds and sending factors to coisotropic relations
\begin{equation*}
    \Jet^1: \Line_\Conf \to \textsf{Cont}.
\end{equation*}
We warn the reader that the category of contact manifolds $\textsf{Cont}$ should be considered with the same precaution as the category $\textsf{Symp}$ of section \ref{CanHamT}. Consider a factor between some pair of line bundles $B:L_1\to L_2$ covering a smooth map $b:Q_1\to Q_2$. Recall that the der map always gives a well-defined morphism of L-vector bundles $\Der B:\Der L_1\to \Der L_2$, however only factors covering diffeomorphisms can be dualized to give a well-defined map between the jet bundles. We can, nevertheless, define the \textbf{jet lift} of the factor $B$ as the following subamanifold of the base product of jet bundles
\begin{equation*}
    \Jet^1 B:=\{C_{\alpha_{q_1}\beta_{q_2}}\in \Jet^1 L_1\dtimes \Jet^1 L_2|\quad q_2=b(q_1),\quad (\Der_{q_1}B)^{*L}\beta_{q_2}=\alpha_{q_1}, \quad C_{\alpha_{q_1}\beta_{q_2}}=B_{q_1}\}
\end{equation*}
where the last condition is understood using the fact that the line bundles over the jet bundles are simply the pull-back bundles by the canonical projection to the base. The following proposition shows that the jet lift of a factor is indeed a coisotropic submanifold, thus proving functoriality of the jet assignment $\Jet^1$.

\begin{prop}[Jet Lift of a Factor]\label{JetLiftFact}
Let $B:L_1\to L_2$ be a factor between two line bundles, then its cotangent lift
\begin{equation*}
\normalfont
    \Jet^1 B\subset \Jet^1 L_1 \dtimes \Jet^1 L_2
\end{equation*}
is a coisotropic submanifold of the base of the Jacobi product $\normalfont L_{\Jet^1 L_1} \utimes \overline{L}_{\Jet^1 L_2}$.
\end{prop}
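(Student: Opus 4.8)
The plan is to establish coisotropy of $\Jet^1 B$ through condition 3 of Proposition \ref{CoisoJacMan}, i.e.\ by showing that the vanishing submodule $\Gamma_{\Jet^1 B}\subseteq\Sec{L_{\Jet^1 L_1}\utimes \overline{L}_{\Jet^1 L_2}}$ is a Lie subalgebra for the product Jacobi bracket. This mirrors exactly the strategy of Proposition \ref{JacMapCoisoRel}, only now carried out at the level of jets rather than base points.

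First I would identify a generating set for $\Gamma_{\Jet^1 B}$. Since the fibre-wise constant sections $\pi_i^*u_i$ and fibre-wise linear sections $l_{a_i}$ span $\Sec{L_{\Jet^1 L_i}}$ as a module over the functions on $\Jet^1 L_i$ (Proposition \ref{CanonContMan}), the sections of the line product are spanned by their pull-backs along the projection factors $R_1,R_2$. The two defining conditions of $\Jet^1 B$ then translate into two families of vanishing generators: the base (L-graph) condition $q_2=b(q_1)$, $C=B_{q_1}$ produces the \emph{constant generators} $g^{\mathrm{con}}_u:=R_1^*\pi_1^*B^*u-R_2^*\pi_2^*u$ for $u\in\Sec{L_2}$, exactly the lift of the generators appearing in Proposition \ref{JacMapCoisoRel}; while the fibre condition $(\Der_{q_1}B)^{*L}\beta_{q_2}=\alpha_{q_1}$ produces, through the defining identity of the L-dual pairing, the \emph{linear generators} $g^{\mathrm{lin}}_{a_1,a_2}:=R_1^*l_{a_1}-R_2^*l_{a_2}$ for every $B$-related pair $a_1\sim_B a_2$. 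Showing that these span the whole vanishing submodule is the delicate point, since $\Der B$ fails to be fibre-wise injective wherever $b$ is not an immersion; I would settle it by a local trivialization computation, the unit-free analogue of the fact that a cotangent lift is the conormal bundle of a graph.

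The heart of the argument is the bracket computation on generators, which closes precisely because $\Der B$ is a Lie algebroid morphism. Using the product structure of Proposition \ref{ProdJacMan} (the cross brackets $\{R_1^*\cdot,R_2^*\cdot\}_{12}$ vanish and the second factor carries the opposite bracket) together with the fibre-wise linear brackets of Proposition \ref{CanonContMan}, one finds
\begin{align*}
\{g^{\mathrm{lin}}_{a_1,a_2},g^{\mathrm{lin}}_{a_1',a_2'}\}_{12}&=R_1^*l_{[a_1,a_1']}-R_2^*l_{[a_2,a_2']}=g^{\mathrm{lin}}_{[a_1,a_1'],[a_2,a_2']},\\
\{g^{\mathrm{lin}}_{a_1,a_2},g^{\mathrm{con}}_u\}_{12}&=R_1^*\pi_1^*a_1[B^*u]-R_2^*\pi_2^*a_2[u]=g^{\mathrm{con}}_{a_2[u]},\\
\{g^{\mathrm{con}}_u,g^{\mathrm{con}}_{u'}\}_{12}&=0.
\end{align*}
The first identity lands back in $\Gamma_{\Jet^1 B}$ because $a_1\sim_B a_2$ and $a_1'\sim_B a_2'$ force $[a_1,a_1']\sim_B[a_2,a_2']$ (Proposition \ref{DerFunct}); the second uses the relatedness identity $a_1[B^*u]=B^*(a_2[u])$ read directly off the der map; and the third uses $\{\pi^*s,\pi^*r\}=0$.

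Finally I would upgrade closure on generators to closure on the full submodule. Since $\Gamma_{\Jet^1 B}$ is a submodule over the functions on $\Jet^1 L_1\dtimes\Jet^1 L_2$ and the Jacobi bracket is a biderivation, the extension-by-symbol expansion of Proposition \ref{ExtBySymb} writes each $\{fg_i,hg_j\}$ as a function-linear combination of $\{g_i,g_j\}$, $g_i$ and $g_j$, all of which already lie in $\Gamma_{\Jet^1 B}$; hence $\{\Gamma_{\Jet^1 B},\Gamma_{\Jet^1 B}\}\subseteq\Gamma_{\Jet^1 B}$ and $\Jet^1 B$ is coisotropic. The main obstacle is the generation claim of the second paragraph; once that is in place the bracket identities are a mechanical consequence of $\Der$ being a functor into Lie algebroids.
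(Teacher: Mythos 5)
Your proposal is correct and follows essentially the same route as the paper's proof: coisotropy via condition 3 of Proposition \ref{CoisoJacMan}, the same two families of generating vanishing sections ($R_1^*l_{a_1}-R_2^*l_{a_2}$ for $B$-related pairs and $R_1^*\pi_1^*B^*u-R_2^*\pi_2^*u$), and the same bracket computations closing via Proposition \ref{DerFunct} and the relatedness identity $a_1[B^*u]=B^*(a_2[u])$. If anything you are slightly more careful than the paper, which asserts the generation claim for $\Gamma_{\Jet^1 B}$ without argument and checks brackets by pointwise evaluation on $\Jet^1 B$ rather than by your explicit Leibniz-expansion closure of the submodule.
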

\begin{proof}
To show that $\Jet^1B$ is a coisotropic submanifold we will identify a set of generating sections of its vanishing submodule $\Gamma_{\Jet^1B}\subseteq \Sec{L_{\Jet^1(L_1\utimes L_2)}}$ and show that they indeed form a Lie subalgebra. Firstly, given $u_i\in\Sec{L_i}$, $a_i\in\Sec{\Der L_i}$, $i=1,2$, let us define the following sections of the line product of the canonical contact line bundles on the jet bundles $\Jet^1L_i$ following the notation introduced in Proposition \ref{JetLineProd}:
\begin{align*}
    l_{a_1a_2}&:=R_1^*l_{a_1}-R_2^*l_{a_2}\\
    \pi^*_{u_1u_2}&:= R_1^*\pi_1^*u_1-R_2^*\pi_2^*u_2.
\end{align*}
Consider a point on the jet lift $C_{\alpha_{q_1}\beta_{q_2}}\in \Jet^1 B$ so that $q_2=b(q_1)$, $(\Der_{q_1}B)^{*L}\beta_{q_2}=\alpha_{q_1}$ and $C_{\alpha_{q_1}\beta_{q_2}}=B_{q_1}$. Let us evaluate the defined sections on it:
\begin{align*}
l_{a_1a_2}(C_{\alpha_{q_1}\beta_{q_2}}) & = R_1|^{-1}_{C_{\alpha_{q_1}\beta_{q_2}}}l_{a_1}(\alpha_{q_1})-R_2|^{-1}_{C_{\alpha_{q_1}\beta_{q_2}}}l_{a_2}(\beta_{q_2})\\
& = l_{a_1}(\alpha_{q_1})-B^{-1}_{q_1}(l_{a_2}(\beta_{q_2}))\\
& = B^{-1}_{q_1}\beta_{q_2}(\Der_{q_1}B(a_1|_{q_1})-a_2|_{q_2})\\
& = B^{-1}_{q_1}\beta_{q_2}(\Der B \circ a_1 - a_2\circ b)(q_1)
\end{align*}
and
\begin{align*}
\pi^*_{u_1u_2} & = R_1|^{-1}_{C_{\alpha_{q_1}\beta_{q_2}}}u_1(q_1)-R_2|^{-1}_{C_{\alpha_{q_1}\beta_{q_2}}}u_2(q_2)\\
& = u_1(q_1)-B_{q_1}^{-1}u_2(q_2)\\
& = u_1(q_1)-B_{q_1}^{-1}u_2(b(q_1))\\
& = (u_1-B^*u_2)(q_2).
\end{align*}
It is then obvious that these spanning sections vanish on the jet lift $\Jet^1 B$ iff the derivations are $B$-related and the sections are mapped by the pull-back $B^*$, i.e.
\begin{align*}
    \Gamma_{\Jet^1 B}\ni l_{a_1a_2} &\Leftrightarrow   a_1\sim_B a_2\\
    \Gamma_{\Jet^1 B}\ni \pi^*_{u_1u_2}  &\Leftrightarrow  u_1=B^*u_2.
\end{align*}
These are the generating vanishing sections so it will suffice to check that evaluations of brackets among these on an arbitrary point of the jet lift $C_{\alpha_{q_1}\beta_{q_2}}\in \Jet^1 B$ vanish. From the defining conditions of a the linear Jacobi we see that $\{\pi^*_{u_1u_2},\pi^*_{u_1'u_2'}\}=0$, so we are left with the two other possible brackets. For the bracket of fibre-wise linear sections we compute explicitly using the defining properties of the product Jacobi bracket:
\begin{equation*}
    \{l_{a_1a_2} ,l_{a_1'a_2'} \}= R^*_1\{l_{a_1},l_{a_1'}\}_1-R^*_2\{l_{a_2},l_{a_2'}\}_2=R^*_1l_{[a_1,a_1']}-R^*_2l_{[a_2,a_2']}
\end{equation*}
for $a_i,a_i'\in\Sec{\Der L_i}$, $i=1,2$. Then if $l_{a_1a_2}$ and $l_{a_1'a_2'}$ are vanishing sections, the derivations are $B$-related and, by virtue of Proposition \ref{DerFunct}, where $\Der B$ is shown to be a Lie algebroid morphism, the brackets are also $B$-related $[a_1,a_1']\sim_B [a_2,a_2']$ making the expression above into a vanishing section. Only the cross bracket $\{l_{a_1a_2} ,\pi^*_{u_1u_2} \}= R^*_1\pi_1^*a_1[u_1]_1-R^*_2\pi_2^*a_2[u_2]$ remains, for which we evaluate on a point of the jet lift and show it vanishes by directly computing using $u_1=B^*u_2$ and $a_1\sim_B a_2$:
\begin{align*}
\{l_{a_1a_2} ,\pi^*_{u_1u_2} \}(C_{\alpha_{q_1}\beta_{q_2}}) & = a_1|_{q_1}(u_1)-B_{q_1}^{-1}a_2|_{q_2}(u_2)\\
& = a_1|_{q_1}(u_1)-B_{q_1}^{-1}\Der_{q_1}B(a_1|_{q_1})(u_2)\\
& = a_1|_{q_1}(B^*u_2)-B_{q_1}^{-1}B_{q_1}a_1|_{q_1}(B^*u_2)\\
& = a_1|_{q_1}(B^*u_2)-a_1|_{q_1}(B^*u_2)\\
& = 0.
\end{align*}
\end{proof}

The next proposition gives a consistency condition for the two alternative ways to regard the jet bundle of the line bundle induced on a submanifold of the base. On the one had, we could take the line bundle as a an embedded subbundle and construct its jet bundle from the ambient jet bundle. On the other, we could simply regard the restricted line bundle as an intrinsic line bundle a canonically construct its jet bundle. We prove below that these two constructions are equivalent.
\begin{prop}[Canonical Coisotropic Reduction]\label{CanCoisoRedJet}
Let $i:S\hookrightarrow Q$ be a submanifold of a line bundle $L$, then the restriction of the ambient jet bundle to the submanifold $\normalfont (\Jet^1 L)|_S$ is a coisotropic submanifold with respect to the canonical contact structure on $\normalfont \Jet^1 L$. Furthermore, there is a submersion factor covering the surjective submersion $\normalfont z:(\Jet^1 L)|_S\twoheadrightarrow \Jet^1 L_S$ given by the fibre-wise quotient:
\begin{equation*}
\normalfont
    \Jet^1_qL/(\Der L_S)^{0L}\cong \Jet^1_q L_S\qquad q\in S
\end{equation*}
so that the canonical contact structure on $\normalfont \Jet^1 L$ Jacobi reduces to the canonical contact structure on $\normalfont \Jet^1 L_S$.
\end{prop}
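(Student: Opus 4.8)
The plan is to realise $(\Jet^1 L)|_S$ as the total restriction $\pi^{-1}(S)$ of the jet projection $\pi:\Jet^1 L\to Q$ over $S$, and to run the Jacobi analogue of the symplectic coisotropic reduction $\Cot S\cong \Cot Q|_S/(\Tan S)^0$. First I would set up the fibre-wise linear algebra. By a rank count, Proposition \ref{DerSubMan} upgrades to $\Der L_S=\delta^{-1}(\Tan S)\subseteq\Der L$ over $S$; applying the canonical L-annihilator factor isomorphism $U^{*L}\cong V^{*L}/U^{0L}$ of Section \ref{CatLine} to $U=\Der_q L_S\subseteq V=\Der_q L$, and using $\Jet^1 L=(\Der L)^{*L}$, gives $\Jet^1_q L_S=(\Der_q L_S)^{*L}\cong \Jet^1_q L/(\Der_q L_S)^{0L}$ fibre-wise over $S$. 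This is exactly the claimed quotient; it defines a bundle surjection $z:(\Jet^1 L)|_S\twoheadrightarrow\Jet^1 L_S$ covered by the factor that is the identity on the common fibre $L_q$, hence a submersion factor. I would also record $(\Der_q L_S)^{0L}=i\big((\Tan_q S)^{0L}\big)$ inside $\Jet^1_q L$, which follows by dualising $\Der_q L_S=\delta^{-1}(\Tan_q S)$ and using that the Spencer injection $i=\delta^{*L}$ has image $(\Ker\delta)^{0L}$.

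Next I would prove coisotropy via characterization 3 of Proposition \ref{CoisoJacMan}. Since $\Gamma_S=I_S\cdot\Sec L$, the vanishing submodule of $\pi^{-1}(S)$ is $\Gamma_{(\Jet^1 L)|_S}=\pi^*I_S\cdot\Sec{\pi^*L}$, which is generated over $\Cin{\Jet^1 L}$ by the fibre-wise constant sections $\pi^*s$ with $s\in\Gamma_S$. By the Leibniz rule of Proposition \ref{ExtBySymb}, closure of the bracket on $\Gamma_{(\Jet^1 L)|_S}$ reduces to closure on these generators, and the defining identity $\{\pi^*s,\pi^*r\}_L=0$ of Proposition \ref{CanonContMan} together with the fact that every remaining term of the expansion of $\{F\cdot\pi^*s,G\cdot\pi^*r\}_L$ retains a factor $\pi^*s$ or $\pi^*r$ shows $\{\Gamma_{(\Jet^1 L)|_S},\Gamma_{(\Jet^1 L)|_S}\}\subseteq\Gamma_{(\Jet^1 L)|_S}$. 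Thus $\pi^{-1}(S)$ is coisotropic.

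The hard part will be identifying the characteristic foliation of $(\Jet^1 L)|_S$ with the fibres of $z$. From Proposition \ref{CanonContMan} the Hamiltonian vector field of $\pi^*s$ annihilates every $\pi^*f$, so it is $\pi$-vertical, and acts on fibre-linear functions by $X_{\pi^*s}[l_{b\otimes\sigma}]=-\pi^*\sigma(b[s])$; as a constant vertical field along $\Jet^1_q L$ it is the element $\eta_s\in\Jet^1_q L$ with $\eta_s(b)=-b[s](q)=-\langle j^1_q s,b\rangle$, i.e. $\eta_s=-j^1_q s$. For $s\in\Gamma_S$ one has $s(q)=0$, so $j^1_q s=i(d^L_q s)$ with $d^L_q s\in\Tan^{*L}_q Q$; as $s$ runs over $\Gamma_S$ these intrinsic differentials sweep out $(\Tan_q S)^{0L}$, whence $\{\eta_s\}$ spans $i\big((\Tan_q S)^{0L}\big)=(\Der_q L_S)^{0L}$, precisely the tangent space of the $z$-fibre through $\xi$. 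A dimension count closes the identification: because the canonical Jacobi structure is non-degenerate, its musical map identifies $\Jet^1 L$ with $\Der L$ (Proposition \ref{JacManContMan}), so the null distribution of the coisotropic $\pi^{-1}(S)$ has rank $\text{codim}_Q S=\dim(\Der_q L_S)^{0L}$; the fields $X_{\pi^*s}$ already exhaust it, and the leaves are exactly the cosets $\xi+(\Der_q L_S)^{0L}$, the fibres of $z$. Hence $z$ is the leaf-space projection and $\Jet^1 L_S$ is the coisotropic reduction of $\Jet^1 L$.

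Finally I would verify that the reduced bracket is the canonical one on $\Jet^1 L_S$ by computing on spanning sections. Given $\bar a\in\Sec{\Der L_S}$, pick a tangent extension $a\in\text{Der}_S(L)$ projecting to $\bar a$ under $\text{Der}_S(L)/\text{Der}_{0S}(L)\cong\Dr{L_S}$ (Proposition \ref{DerivSubMan}), and for $\bar s\in\Sec{L_S}$ pick any $s\in\Sec L$ restricting to it; then $l_a$ and $\pi^*s$ are leaf-wise constant extensions, with $\iota^*l_a=z^*\bar l_{\bar a}$ and $\iota^*\pi^*s=z^*\pi_S^*\bar s$. The three canonical brackets reduce term by term: $\{\pi^*s,\pi^*r\}_L=0$; $\{l_a,\pi^*s\}_L=\pi^*a[s]$ restricts to $\pi_S^*\bar a[\bar s]$ because a tangent derivation descends to $\Sec{L_S}=\Sec L/\Gamma_S$; and $\{l_a,l_b\}_L=l_{[a,b]}$ restricts to $\bar l_{[\bar a,\bar b]}$ since $\text{Der}_S(L)$ is a Lie subalgebra whose quotient Lie algebra is $\Dr{L_S}$. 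Independence of the chosen extensions is guaranteed by the coisotropy already established. Matching these against the defining identities of Proposition \ref{CanonContMan} completes the proof, and I expect the foliation identification in the third step to be the genuine obstacle, as it is the only place demanding an explicit Hamiltonian-field computation together with the non-degeneracy dimension count.
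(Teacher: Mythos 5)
Your proposal is correct, and its skeleton coincides with the paper's proof: the fibre-wise quotient $\Jet^1_qL/(\Der_qL_S)^{0L}\cong\Jet^1_qL_S$ is obtained, exactly as in the paper, from Proposition \ref{DerSubMan} together with the canonical L-linear-algebra isomorphism $U^{*L}\cong V^{*L}/U^{0L}$ applied fibre-wise, and the final verification of the reduction identity is the same spanning-section computation, using $\Sec{L_S}\cong\Sec{L}/\Gamma_S$ and $\Dr{L_S}\cong\text{Der}_S(L)/\text{Der}_{0S}(L)$ from Proposition \ref{DerivSubMan} to match $\zeta^*\overline{l}_{\overline{a}}=\iota^*l_a$ and $\zeta^*\overline{\pi}^*\overline{u}=\iota^*\pi^*u$ against the defining brackets of Proposition \ref{CanonContMan}. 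You deviate in two places, both legitimately. For coisotropy, the paper takes the generating set $l_{\text{Der}_{0S}(L)}\cup\pi^*\Gamma_S$ and invokes the Lie-subquotient structure of Proposition \ref{DerivSubMan}, whereas you observe that $\Gamma_{(\Jet^1L)|_S}$ is already generated over $\Cin{\Jet^1L}$ by $\pi^*\Gamma_S$ alone and close the bracket by the Leibniz expansion of $\{F\cdot\pi^*s,G\cdot\pi^*r\}_L$, each surviving term retaining a factor $\pi^*s$ or $\pi^*r$; this is more elementary (it bypasses \ref{DerivSubMan} for this step, needing it only later) and is a correct refinement, since $l_D$ with $D\in\text{Der}_{0S}(L)$ lies in the module generated by $\pi^*\Gamma_S$. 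Your third step, identifying the characteristic foliation with the $z$-fibres via the computation $X_{\pi^*s}=-j^1s$ as a constant vertical translation sweeping $i\bigl((\Tan_qS)^{0L}\bigr)=(\Der_qL_S)^{0L}$ plus the non-degeneracy dimension count, has no counterpart in the paper: the paper never computes the characteristic distribution and instead verifies the Jacobi-reduction identity directly on spanning extensions. Your extra lemma buys a stronger conclusion --- that $\Jet^1L_S$ is the genuine leaf space of the coisotropic $\pi^{-1}(S)$ --- and it is also what actually underwrites your remark that ``independence of the chosen extensions is guaranteed by the coisotropy'': coisotropy alone only shows $\iota^*\{c,V\}$ depends on $V$ through $\iota^*V$; the vanishing for admissible extensions uses that $\iota^*V=\zeta^*\overline{v}$ is leaf-wise constant, i.e.\ precisely your foliation identification, so you should phrase that clause accordingly. (The paper, for its part, checks the identity only on the spanning extensions and leaves arbitrary extensions implicit, so your treatment is, if anything, more complete on this point.)
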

\begin{proof}
Let us first prove that the above map fits in a Jacobi reduction scheme. It is a direct implication of Proposition \ref{DerSubMan} and the basic properties of L-vector spaces presented at the end of section \ref{CatLine} applied fibre-wise, that we have the following diagram of line bundle morphisms
\begin{equation*}
\begin{tikzcd}[sep=small]
L_{\Jet^1L|_{S}} \arrow[rr,"\iota"] \arrow[dd,"\zeta"'] \arrow[dr]& & L_{\Jet^1L} \arrow[dr] & \\
& \Jet^1L|_S \arrow[rr,"i"', hook] \arrow[dd, "z",twoheadrightarrow] & & \Jet^1L \\
L_{\Jet^1L_S} \arrow[dr] & & & \\
 & \Jet^1L_S &  & 
\end{tikzcd}
\end{equation*}
where $\iota$ denotes, abusing notation, the embedding factor induced by the submanifold $i:S\hookrightarrow Q$ and $\zeta$ is defined as the fibre-wise identity of the pull-back line bundles covering the point-wise linear submersion
\begin{equation*}
    z_q:\Jet^1_q L\twoheadrightarrow \Jet^1_qL/(\Der L_S)^{0L}\cong \Jet^1_q L_S, \qquad q\in S.
\end{equation*}
In light of the discussion of section \ref{CatLineB} on the derivations of the embedded line bundle over a submanifold, and Proposition \ref{DerivSubMan} in particular, we find natural isomorphisms $\Sec{L_S}\cong\Sec{L}/\Gamma_S$ and $\Dr{L_S}\cong\text{Der}_S(L)/\text{Der}_{0S}(L)$, and denoting the natural inclusions of spanning sections on $L_{\Jet^1 L_S}$ by $\overline{l}$ and $\overline{\pi}$, we can write the fibre-wise linear Jacobi structure on $\Jet^1L_S$ equivalently as
\begin{align*}
    \{\overline{l}_{\overline{a}},\overline{l}_{\overline{b}}\}_S & =  \overline{l}_{\overline{[a,b]}}\\
    \{\overline{l}_{\overline{a}},\overline{\pi}^*\overline{u}\}_S & =  \overline{\pi}^*\overline{a[u]}\\
    \{\overline{\pi}^*\overline{u},\overline{\pi}^*\overline{v}\}_S & = 0
\end{align*}
with $\overline{u},\overline{v}\in\Sec{L}/\Gamma_S$ and $\overline{a},\overline{b}\in\text{Der}_S(L)/\text{Der}_{0S}(L)$, and which is well defined-precisely from the description of derivations as a subquotient of Lie algebras. The submersion factor $\zeta:L_{\Jet^1L|_{S}}\to L_{\Jet^1L_S}$ covering the quotient map $z:\Jet^1L|_{S}\to \Jet^1L_S$ has been defined such that it is the point-wise counterpart to the isomorphisms used above to rewrite the linear Jacobi bracket. Pull-backs via these factors satisfys the following identities by construction
\begin{equation*}
    \zeta^*\overline{\pi}^*\overline{u} =\iota^*\pi^*u \qquad \zeta^*\overline{l}_{\overline{a}} = \iota^* l_{a}
\end{equation*}
for all $u\in\Sec{L}$ and $a\in\text{Der}_S(L)$. This now clearly implies the reduction condition for all spanning sections
\begin{align*}
    \zeta^*\{\overline{l}_{\overline{a}},\overline{l}_{\overline{b}}\}_S & = \iota^* l_{[a,b]}=\iota^*\{l_a,l_b\}_L\\
    \zeta^*\{\overline{l}_{\overline{a}},\overline{\pi}^*\overline{u}\}_S & = \iota^* \pi^*a[u]=\iota^* \{l_a,\pi^*u\}_L\\
    \zeta^*\{\overline{\pi}^*\overline{u},\overline{\pi}^*\overline{v}\}_S & = 0 = \iota^* \{\pi^*u,\pi^*v\}_L
\end{align*}
thus showing that the linear Jacobi $(L_{\Jet^1L},\{,\}_L)$ reduces to $(L_{\Jet^1L_S},\{,\}_S)$. Lastly, it is easy to see that the vanishing sections of $\Jet^1L|_S$ seen as a submanifold of the jet bundle are precisely those of the form $l_{\text{Der}_{0S}(L)}$ and $\pi^*\Gamma_S$. It follows again from Proposition \ref{DerivSubMan} that these form a Lie subalgebra of the linear Jacobi structure $(L_{\Jet^1L},\{,\}_L)$, thus making $\Jet^1L|_S\subseteq \Jet^1L$ into a coisotropic submanifold.
\end{proof}

Consider now a line bundle action $G\Acts L$. Recall that, in the case of a free and proper action, the orbit space is canonically a line bundle, denoted by $L/G$, and that there is a natural submersion factor $\sigma:L\to L/G$. The following proposition relates the canonical contact structures associated to these two line bundles.
\begin{prop}[Canonical Hamiltonian Reduction]\label{CanHamRedJet}
Let $\Phi:G\times L\to L$ be a free and proper line bundle action of a connected Lie group $G$ on $L$, then the canonical contact structure on $\normalfont \Jet^1 L$ Jacobi reduces to the canonical contact structure on $\normalfont \Jet^1 (L/G)$. This reduction is, in fact, Hamiltonian: the jet lift of the line bundle action $\normalfont G\Acts \Jet^1 L$ preserves the canonical contact structure and has a natural comoment map given by
\begin{align*}
\overline{\mu}: \mathfrak{g} &\normalfont \to \Sec{L_{\Jet^1 L}}\\
\xi & \mapsto l_{\Psi(\xi)},
\end{align*}
where $\normalfont \Psi:\mathfrak{g}\to \Dr{L}$ is the infinitesimal line bundle action and $\normalfont l:\Dr{L}\to \Sec{L_{\Jet^1 L}}$ is the natural inclusion of derivations as fibre-wise linear sections on the jet bundle.
\end{prop}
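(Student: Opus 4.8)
The plan is to exhibit the jet lift of the given line bundle action, verify that the candidate comoment map $\overline{\mu}(\xi)=l_{\Psi(\xi)}$ fulfils the two defining conditions of Hamiltonian Jacobi reduction from Section \ref{JacGeo}, and finally identify the reduced space with $\Jet^1(L/G)$ using Proposition \ref{DerGAct}. First I would construct the jet lift action: each $\Phi_g$ is a diffeomorphic factor, so by the discussion preceding Proposition \ref{JetLiftFact} its der map $\Der \Phi_g$ is an invertible L-vector bundle morphism that can be L-dualized to a diffeomorphic factor $\Jet^1\Phi_g : \Jet^1 L \to \Jet^1 L$; functoriality of $\Der$ (Proposition \ref{DerFunct}) then shows $g\mapsto \Jet^1\Phi_g$ is a genuine line bundle action covering the jet lift of the base action. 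Since $\Der\Phi_g$ is a Lie algebroid automorphism of $\Der L$ and the canonical Jacobi bracket of Proposition \ref{CanonContMan} is determined entirely by the algebroid structure of $\Der L$ via the generators $l_a$ and $\pi^*s$, each $\Jet^1\Phi_g$ preserves the canonical contact structure, i.e. the jet lift acts by Jacobi maps.

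The heart of the argument is identifying the infinitesimal generator of the jet lift with a Hamiltonian derivation. For any derivation $a\in\Dr L$ generating a flow of factors $\Phi^a_t$, its jet lift flow $\Jet^1\Phi^a_t$ is generated by a derivation of $L_{\Jet^1 L}$, and I would show this generator equals $\Delta_{l_a}=\{l_a,-\}_L$. It suffices to match the two on the spanning sections of $L_{\Jet^1 L}$: using Proposition \ref{CanonContMan}, $\Delta_{l_a}[l_b]=l_{[a,b]}$ agrees with the infinitesimal push-forward of $l_b$ under $\Der\Phi^a_t$, and $\Delta_{l_a}[\pi^*s]=\pi^*a[s]$ agrees with the infinitesimal pull-back of $\pi^*s$. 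Taking $a=\Psi(\xi)$ gives precisely condition 1, namely that the jet-lifted infinitesimal action coincides with $\Delta_{\overline{\mu}(\xi)}$; connectedness of $G$ then upgrades this to the statement that the whole finite jet lift action is generated by these Hamiltonian derivations. Condition 2 is immediate: because $\Psi$ is a Lie algebra morphism, $\{l_{\Psi(\xi)},l_{\Psi(\zeta)}\}_L=l_{[\Psi(\xi),\Psi(\zeta)]}=l_{\Psi([\xi,\zeta])}=\overline{\mu}([\xi,\zeta])$.

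With both comoment conditions verified, the Hamiltonian Jacobi reduction scheme of Section \ref{JacGeo} applies, and it remains to identify the reduced line bundle. The zero level set $\overline{\mu}^{-1}(0)$ consists of those $\alpha_q\in\Jet^1_q L$ annihilating $\Psi(\mathfrak{g})_q\subseteq\Der_q L$, so fibre-wise $\overline{\mu}^{-1}(0)_q=(\Psi(\mathfrak{g})_q)^{0L}$; freeness of the action makes $\Psi(\mathfrak{g})_q$ of constant rank, so the $l_{\Psi(\xi)}$ are pointwise independent and $0$ is a regular value, whence $\overline{\mu}^{-1}(0)$ is a subbundle and (by the Jacobi analogue of the regular-value argument) coisotropic. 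Invoking the L-annihilator isomorphism of Section \ref{CatLine} together with the fibre-wise identification $\Der_{[q]}(L/G)\cong\Der_q L/\mathfrak{g}$ of Proposition \ref{DerGAct}, I would obtain $(\Psi(\mathfrak{g})_q)^{0L}\cong(\Der_{[q]}(L/G))^{*L}=\Jet^1_{[q]}(L/G)$, and hence $\overline{\mu}^{-1}(0)/G\cong\Jet^1(L/G)$ as line bundles, with properness of the base action ensuring the quotient submersion factor is well defined. A final check on the spanning sections $l_a$ and $\pi^*s$, parallel to the computation in Proposition \ref{CanCoisoRedJet}, confirms that the reduced Jacobi bracket is exactly the canonical contact bracket on $\Jet^1(L/G)$. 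The main obstacle I expect is the generator-matching lemma in the second paragraph: making precise the infinitesimal generator of the jet lift action and checking it against $\Delta_{l_a}$ on spanning sections; once that is in place, every remaining step is a formal consequence of the cited propositions.
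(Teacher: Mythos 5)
Your proposal is correct and follows essentially the same route as the paper's proof: construct the jet lift $\Jet^1\Phi_g$ by L-dualizing the der maps of the diffeomorphic factors, show it acts by Jacobi maps by checking the canonical bracket on the spanning sections $l_a$ and $\pi^*s$, take $\overline{\mu}=l\circ\Psi$, identify the zero locus $\mu^{-1}(0)=\Psi(\mathfrak{g})^{0L}$ fibre-wise with $\Jet^1(L/G)$ via Proposition \ref{DerGAct} and the L-annihilator isomorphism, and verify the reduction condition on extensions of spanning sections as in Proposition \ref{CanCoisoRedJet}. The one place you go beyond the paper is to your credit: the generator-matching lemma you flag as the main obstacle (that the infinitesimal jet-lifted action coincides with $\Delta_{l_{\Psi(\xi)}}$) is precisely what the paper asserts without proof when it declares the action Hamiltonian, and your plan to verify it on the spanning sections is the right way to close that gap.
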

\begin{proof}
Before we address the specific case of a group action, let us discuss the jet lift of a general diffeomorphic factor, i.e. a fibre-wise invertible line bundle morphism $B:L\to L$ covering a diffeomorphism $b:Q\to Q$. The \textbf{jet lift} of $B$ is a factor of the canonical contact line bundle on the jet bundle $L_{\Jet^1 L}$ defined by
\begin{equation*}
    \Jet^1 B:(\alpha_q,l_q)\mapsto (\Jet^1 b(\alpha_q),B^{-1}_{q}l_q)
\end{equation*}
where
\begin{equation*}
    \Jet^1 b(\alpha_q)=(\Der_qB)^{L_q*L_{b^{-1}(q)}}(\alpha_q)\in\Jet_{b^{-1}(q)}^1L.
\end{equation*}
The cotangent lift then induces the following commutative diagram of vector bundle morphisms
\begin{equation*}
    \begin{tikzcd}[sep=small]
     L_{\Jet^1L}\arrow[r,"\Jet^1B"]\arrow[d] & L_{\Jet^1L}\arrow[d] \\
     \Jet^1L\arrow[r,"\Jet^1 b"] \arrow[d] & \Jet^1L \arrow[d] \\
     Q\arrow[r,"b^{-1}"] & Q
    \end{tikzcd}
\end{equation*}
For any two diffeomorphic factors $B,F:L\to L$ and the identity factor $\Id_L:L\to L$ it is a simple check to show that
\begin{equation*}
    \Jet^1(B\circ F)=\Jet^1F\circ \Jet^1B, \qquad \Jet^1(\Id_L)=\Id_{L_{\Jet^1L}}.
\end{equation*}
The spanning sections $l_a,\pi^*u\in\Sec{L_{\Jet^1L}}$ transform under pull-back by a jet lift of a diffeomorphic factor according to the following expressions
\begin{equation*}
    \Jet^1B^*l_a=l_{B_*a}, \qquad \Jet^1B^*\pi^*u=\pi^*(B^{-1})^*u,
\end{equation*}
then it follows that the jet lift of a diffeomorphic factor $\Jet^1B$ is indeed a Jacobi map of the canonical contact structure on the jet bundle:
\begin{align*}
    \Jet^1B^*\{l_a,l_b\}_L & = \Jet^1B^*l_{[a,b]}=l_{[B_*a,B_*b]}=\{l_{B_*a},l_{B_*b}\}_L=\{\Jet^1B^*l_a,\Jet^1B^*l_b\}_L \\
    \Jet^1B^*\{l_a,\pi^*u\}_L & = \Jet^1B^*\pi^*a[u]=\pi^*(B_*a)[(B^{-1})^*u]=\{l_{B_*a},\pi^*(B^{-1})^*u\}_L=\{ \Jet^1B^*l_a, \Jet^1B^* \pi^*u\}_L \\
    \Jet^1B^*\{\pi^*u,\pi^*v\}_L & = 0 = \{\pi^*(B^{-1})^*u,\pi^*(B^{-1})^*v\}_L=\{\Jet^1B^*\pi^*u,\Jet^1B^*\pi^*v\}_L.
\end{align*}
The jet lift of the group action $G\Acts \Jet^1 L$ is defined by the jet lifts of the diffeomorphic factors corresponding to each group element
\begin{equation*}
    (\Jet^1 \Phi)_g:=\Jet^1\Phi_g,
\end{equation*}
which, in light of the above results for general jet lifts of diffeomorphic factors, is readily checked to be a group action that acts via Jacobi maps. This is a Hamiltonian action with comoment map simply given by $\overline{\mu}:=l\circ \Psi :\mathfrak{g}\to \Sec{L_{\Jet^1L}}$. Observe that the zero locus of the moment map is naturally identified with the L-annihilator of the subspace of derivations spanned by the infinitesimal generators regarded as a subbundle of the jet bundle
\begin{equation*}
    \mu^{-1}(0)=\Psi(\mathfrak{g})^{0L}\subseteq \Jet^1L.
\end{equation*}
Note that $G$-equivariance of the infinitesimal action $\Psi$ implies that the jet lifted action of $G$ restricts to a $G$-action on $\Psi(\mathfrak{g})^{0L}$, indeed we check for any $j_q^1u\in\Psi(\mathfrak{g})^{0L}$ and $\xi\in\mathfrak{g}$
\begin{equation*}
    \Jet^1\phi_g(j_q^1 u)(\Psi(\xi))=\Psi(\xi)[\Phi^*_gu]=(\Phi_g)_{b^{-1}(q)}\Der_q\Phi_g\Psi(\xi)=(\Phi_g)_{b^{-1}(q)}j_q^1u(\Psi(\text{Ad}_g(\xi))=0.
\end{equation*}
Using Proposition \ref{DerGAct} and simple linear algebra of L-vector bundles we find the following point-wise isomorphism
\begin{equation*}
    \Jet^1_{[q]}(L/G):=(\Der_{[q]}(L/G))^{*L/G}\cong(\Der_q/\Psi(\mathfrak{g})_q)^{*L_q}\cong \Psi(\mathfrak{g})_q^{0L_q}.
\end{equation*}
This allows us to write the following factor reduction diagram
\begin{equation*}
\begin{tikzcd}[sep=small]
L_{\Psi(\mathfrak{g})^{0L}} \arrow[rr,"\iota"] \arrow[dd,"\zeta"'] \arrow[dr]& & L_{\Jet^1L} \arrow[dr] & \\
& \Psi(\mathfrak{g})^{0L} \arrow[rr,"i"', hook] \arrow[dd, "z",twoheadrightarrow] & & \Jet^1L \\
L_{\Jet^1(L/G)} \arrow[dr] & & & \\
 & \Jet^1(L/G) &  & 
\end{tikzcd}
\end{equation*}
where $\iota$ is the embedding factor for the L-annihilator $\Psi(\mathfrak{g})^{0L}$ seen as a subbundle (submanifold) of the jet bundle and $\zeta$ is the submersion factor induced by the jet lifted action restricted $\Psi(\mathfrak{g})^{0L}$, which is clearly free and proper. Since $G$ is connected, recall that the sections and derivations of the quotient line bundle can be equivalently regarded as
\begin{equation*}
    \Sec{L/G}\cong \Sec{L}^\mathfrak{g}, \qquad \Dr{L/G}\cong \Dr{L}^{\mathfrak{g}}/\Psi(\mathfrak{g}),
\end{equation*}
thus the linear Jacobi structure determined by the spanning sections $\overline{l}_{\overline{a}},\overline{\pi}^*\overline{u}\in\Sec{L_{\Jet^1(L/G)}}$ can be fully characterised under these isomorphisms. The factors constructed above are such that we have the following explicit treatment of extensions of spanning sections
\begin{align*}
    \zeta^*\overline{\pi}^*\overline{u}=\iota^*\pi^*u \quad &\Leftrightarrow \quad u\in\Sec{L}^\mathfrak{g}\\
    \zeta^*\overline{l}_{\overline{a}}=\iota^*l_a \quad &\Leftrightarrow \quad \overline{a}=a+\Psi(\mathfrak{g}), a\in\Dr{L}^\mathfrak{g}
\end{align*}
The reduction condition is now easily checked
\begin{align*}
    \zeta^*\{\overline{l}_{\overline{a}},\overline{l}_{\overline{b}}\}_{L/G} & = \iota^* l_{[a,b]}=\iota^*\{l_a,l_b\}_L\\
    \zeta^*\{\overline{l}_{\overline{a}},\overline{\pi}^*\overline{u}\}_{L/G} & = \iota^* \pi^*a[u]=\iota^* \{l_a,\pi^*u\}_L\\
    \zeta^*\{\overline{\pi}^*\overline{u},\overline{\pi}^*\overline{v}\}_{L/G} & = 0 = \iota^* \{\pi^*u,\pi^*v\}_L
\end{align*}
for all spanning sections $\overline{l}_{\overline{a}},\overline{l}_{\overline{b}},\overline{\pi}^*\overline{u},\overline{\pi}^*\overline{v}\in\Sec{L_{\Jet^1(L/G)}}$ and extensions $l_a,l_b,\pi^*u,\pi^*v\in\Sec{L_{\Jet^1L}}$, thus concluding the proof.
\end{proof}

The propositions presented thus far in this section then allow for the definition of the \textbf{dimensioned Hamiltonian functor} $\mathbb{H}_{\Line_\Conf}$ as the following list of associations
\begin{center}
\begin{tabular}{ c c c }
Dimensioned Configuration Spaces & Hamiltonian Functor  & Dimensioned Phase Spaces \\
\hline
 $\Line_\Conf$ & $\begin{tikzcd}\phantom{A} \arrow[r,"\mathbb{H}_{\Line_\Conf}"] & \phantom{B} \end{tikzcd}$ & $\textsf{Cont}$ \\ 
 $L_Q$ & $\begin{tikzcd} \phantom{Q} \arrow[r, "\Jet^1 ",mapsto] & \phantom{Q} \end{tikzcd}$ & $(\Jet^1 L_Q,\text{H}_{L_Q})$ \\ 
 $\Obs{L_Q}$ & $\begin{tikzcd} \phantom{Q} \arrow[r,hookrightarrow, "\pi^*"] & \phantom{Q}  \end{tikzcd}$ & $\Obs{L_{\Jet^1 L_Q}}$ \\
 $\Dyn{L_Q}$ & $\begin{tikzcd} \phantom{Q} \arrow[r,hookrightarrow, "l"] & \phantom{Q}  \end{tikzcd}$ & $\Obs{L_{\Jet^1 L_Q}}$ \\
 $L_1\utimes L_2$ & $\begin{tikzcd} \phantom{Q} \arrow[r,mapsto, "\Jet^1 "] & \phantom{Q}  \end{tikzcd}$ & $\Jet^1 L_1\dtimes \Jet^1 L_2$ \\
 $(L_1,s_1), (L_2,s_2)$ & $\begin{tikzcd} \phantom{Q} \arrow[r,mapsto, "\Jet^1 "] & \phantom{Q}  \end{tikzcd}$ & $(L_{\Jet^1 L_1}\utimes L_{\Jet^1 L_2},R_1^*\pi_1^*s_1 + R_2^*\pi^*_1s_2)$ \\
 $B:L_1\to L_2$ & $\begin{tikzcd} \phantom{Q} \arrow[r,mapsto, "\Jet^1 "] & \phantom{Q}  \end{tikzcd}$ & $\Jet^1 B\subset \Jet^1 L_1\dtimes \Jet^1 L_2$ coisotropic\\
 $\iota:L_S\hookrightarrow L_Q$ & $\begin{tikzcd} \phantom{Q} \arrow[r,mapsto, "\Jet^1"] & \phantom{Q}  \end{tikzcd}$ & $(\Jet^1 L_Q)|_S\subset \Jet^1 L_Q$ coisotropic \\
 $G\Acts L$ & $\begin{tikzcd} \phantom{Q} \arrow[r,mapsto, "\Jet^1"] & \phantom{Q}  \end{tikzcd}$ & $G\Acts \Jet^1 L$ Hamiltonian action \\
\end{tabular}
\end{center}
Comparing the list above with the Hamiltonian functor of section \ref{CanHamT}, we readily conclude that the proposed theory of dimensioned Hamiltonian mechanics shares its fundamental categorical structure with the conventional unit-less formulation. We are then compelled to coin the term \textbf{canonical dimensioned Hamiltonian mechanics} for this framework of geometric foundations of physical theories naturally incorporating the notion of physical dimension and units of measurement.

\section{Conclusion} \label{Conc}

Despite the relative success of finding a Hamiltonian functor for the category of dimensioned configuration spaces, there are several pressing questions that arise in the context of this generalization, both physically and mathematically motivated, that need to be addressed in order to fully justify the adequacy of our formalism as practical foundations of physical theories. Although the detailed analysis of such questions will not be carried out here, partly due to many of them still being under investigation by the author at the time of writing, we list below the most prominent ones and comment on the implications and future lines of research that follow from the results presented in this paper.
\begin{itemize}
    \item \textbf{The General Philosophy of Dimensioned Geometric Mechanics.} A theory of dimensioned phase spaces, mathematically articulated with ``unit-free'' manifolds/line bundles (section \ref{CatLineB}) and physically motivated by the structure of measurand spaces (section \ref{ModMes}), seems to spring from the findings presented in this paper. Note how most of our technical efforts, e.g. identifying measurand spaces within the $\Line$ category or implementing Hamiltonian dynamics in canonical contact manifolds, are focused on integrating the extra information of physical quantities carrying units within a mathematical framework that should generalize its ``unit-less'' counterpart. Note, as well, that once the generalization is made, so that physical dimensionality and units are systematically accounted for, the formal treatment of physical quantities is manifestly unit-free. In this manner our approach could be seen as the transition from the ``unit-less'' to the ``unit-free''. Indeed, we put all this mathematical technology in place to account for units to then give a description that is explicitly independent of units. This reminds the author of the quote \emph{``Tout pour le peuple, rien par le peuple''} proclaimed by several enlightened rulers of 18\textsuperscript{th} century Europe. Paraphrasing this motto, a slogan summarizing our philosophy could be:
        \begin{center}
            Dimensioned Mechanics\\
            -\\
            All for the units, but without the units.
        \end{center}
    \item \textbf{Dimensioned Dynamics.} Recall that the natural notion of dynamics in the context of dimensioned configuration spaces (section \ref{CanHamJ}) was given by the derivations of a line bundle $\Dyn{L_Q}=\Dr{L_Q}$. Integrating equations of motion will thus give a $1$-parameter family of line bundle automorphisms or, equivalently, a smooth curve on the general linear groupoid of the line bundle $\text{GL}(L_Q)$. Hence, it seems that there may be a correspondence of the form $\Dyn{L_Q}\cong \Sec{\Tan (\text{GL}(L_Q))}$. When the line bundle is the canonical contact structure present in a jet bundle $L_{\Jet^1L}$, what sort of dynamical equations on $\text{GL}(L_{\Jet^1L})$ are induced by a Hamiltonian flow of derivations on $L_{\Jet^1L}$? Answering these questions successfully will provide the correct notion of ``unit-free'' dynamics and will make sense of the idea of the time-evolution of a measurand.
    \item \textbf{Recovering Conventional Hamiltonian Dynamics from the Dimensioned Formalism.} Introducing an adapted local coordinate chart on a jet bundle $\Jet^1L_Q$ renders the canonical Jacobi structure trivial and determined by a non-zero Jacobi pair $(\pi_Q,R_Q)$. It follows by construction that $(\Cot Q,\pi_Q)$ is, in fact, the canonical symplectic structure on the cotangent bundle. It is then clear that the subspace of local observables $f\in\Cin{\Jet^1L_Q}$ that are $R_Q$-invariant form a Poisson algebra that is isomorphic to the canonical Poisson algebra of the cotangent bundle. The author suspects that this is closely related to our notion of local choice of unit and further investigation may reveal that they are, indeed, equivalent ways to recover conventional Hamiltonian dynamics in the dimensioned framework. Once such a connection is established, what is the relation to the dynamics on the general linear groupoid $\text{GL}(L_{\Jet^1L_Q})$ described in the previous point?
    \item \textbf{Where is the commutative product of classical observables in dimensioned Hamiltonian mechanics?} One of the trade-offs of the notion of unit-free manifold as a line bundle $L_Q$ is that the replacement of the ring of functions, the module of sections, no longer has a commutative product. However, the tensor product in the category of line bundles $\Line_\Man$ defined in section \ref{CatLineB} allows us to recover a notion of commutative product by taking the tensor product of sections which, under trivialization, becomes the ordinary product of functions. In this manner, the datum of a single dimensioned configuration space naturally carries an entire infinite family of line bundles given by its potentil of tensor powers $\{\otimes^n L_Q^*\otimes^m L_Q\}_{n,m=0}^\infty$. It is within this potentil that a consistent definition of a ``unit-free'' commutative product of ``unit-free'' functions seems possible. The author suspects that the infinite family of associated canonical Jacobi structures can be made into some sort of graded Lie algebra where the ``unit-free'' commutative product can be regarded as the commutative product of a graded Poisson algebra. The details of this construction are currently being worked on by the author.
    \item \textbf{Dimensioned Newtonian Mechanics.} At the end of section \ref{CanHamT} the category of Newtonian configuration spaces was introduced as the historical motivation for the notion of energy in phase space. The key ingredient of Newtonian mechanics is the additional structure of a Riemannian metric on configuration spaces. In the context of our generalization to dimensioned configuration spaces, an obvious way to generalize the notion of Newtonian configuration spaces would be to consider L-valued bilinear forms on der bundles, indeed the ``unit-free'' analogue of conventional metrics. This follows from the fact that a L-metric is easily shown to induce analogous musical isomorphisms between jet and der bundles and a canonical choice of dimensioned observable given by the L-valued kinetic energy. It seems plausible that ``unit-free'' Riemannian geometry will resemble conventional Riemannian geometry but the details of this remain unclear to the author. Also, with physical applications in mind, note that the metric of a Newtonian system seems to be entirely independent of all the possible choices of observables that may be considered for the dynamical theory, whereas a choice of L-metric is tied to a particular dimensionality of observables. Details on how to formally implement the equivalent of Riemannian geometry in ``unit-free'' manifolds, and thus how to consistently define dimensioned Newtonian mechanics, will follow in future works by the author.
    \item \textbf{What does it mean for a dimensioned observable to be a section of a non-trivial line bundle?} Having chosen generically non-trivialized line bundles as our mathematical model for dimensioned configuration spaces, something that is necessary in order to explicitly implement the structure of measurand space in classical mechanics, raises the question of whether it is physically meaningful to consider strictly non-trivial line bundles. Intuitively, this would correspond to a physical system with a property for which a consistent choice of unit cannot be simultaneously made for the entirety of the system. As an illustrative example consider the simplest topologically non-trivial configuration space: the circle $S^1$. We may interpret this as the positions of the classical pendulum. Typical observables of this system, such as a gravitational potential, are sections of the trivial bundle $S^1\times\Real$. These can always be shifted by adding some global constant and be seen as non-zero global functions on $S^1$, this is ensured since the bundle is trivial. However, if we now considered the Moebius band as the dimensioned configuration space, all global sections must vanish at least at one point of $S^1$. It is unclear what physical situation this is describing. A first approach to shed some light on this question would be to find the closest analog to the classical pendulum as dimensioned Newtonian dynamics on the Moebius band and then, using some of the necessary developments mentioned above for integrating dimensioned dynamics, find what motions they project on $S^1$ and how they compare to the usual motions of a classical pendulum.
    \item \textbf{Homological Reduction in Dimensioned Phase Spaces.} Coisotropic and Hamiltonian reduction in Poisson manifolds has been successfully described in homological terms by formalisms such as BRST cohomology. In section \ref{CanHamJ} we found explicit examples of coisotropic and Hamiltonian reduction of Jacobi manifolds and so it is natural to ask whether there is an analogue of BRST cohomology for the case of Jacobi manifolds.
    \item \textbf{Quantization of Dimensioned Phase Spaces.} Assuming the open questions about the physical validity of our proposed generalization are resolved positively, so that dimensioned phase spaces are given clear physical interpretation, an immediate parallel from the history of physics prompts us to ask: what is the quantization of a dimensioned phase space? This question poses, at least, as many difficulties as the conventional problem of quantization but it may be that the objects quantizing dimensioned phase spaces are of mathematical interest in their own right since they may not have been identified and systematically studied before.
    \item \textbf{Dimensioned Quantum Mechanics.} Moving away from the realm of configuration spaces and classical mechanics as a whole, we note that our philosophy to articulate known geometric physical theories that are conventionally ``unit-less'' in terms of ``unit-free'' manifolds could be extended easily to other areas. In particular, it may be interesting to explore how the usual formulation of quantum mechanics in Hilbert spaces is modified after enforcing the ``unit-free'' point of view and whether this new dimensioned quantum mechanics can be regarded, at least in some simple cases, as a quantization of classical dimensioned mechanics.
\end{itemize}

\section*{Acknowledgements}

Firstly, I would like to thank Jonas Schnitzer, who first introduced me to the topic of Jacobi geometry during a common doctoral stay at IMPA, Rio de Janeiro, in April 2018, and who guided me into the current state-of-the-art of the subject. Secondly, I thank Henrique Bursztyn for the opportunity to visit IMPA, for welcoming me in such an stimulating research environment and for the financial support that covered for my stay in Rio de Janeiro. I am also grateful to Jos{\'e} Figueroa-O'Farrill, Pau Enrique and Federico Zalamea for their useful comments and discussions.\newline

The first time I encountered the question of how to treat observables with physical dimension in mechanics was at the Deep Thinking room of Strohlehenalm, a mountain cabin situated in the snowy Austrian alps, where the Rethinking Foundations of Physics Workshop 2017 was taking place. I would like to thank the organizers of that particular event, and the members of the Basic Research Community for Physics as a whole, for encouraging me to think about these matters.

\printbibliography

\end{document}